\title{Active multiple testing with proxy p-values and e-values}
\author{
\hspace{1.5in}
Ziyu Xu$^1$
\and
Catherine Wang$^1$ \hspace{1.5in}
\and
Larry Wasserman$^{1, 2}$
\and
Kathryn Roeder$^{1, 3}$
\and
Aaditya Ramdas$^{1, 2}$
\and \vspace{0.05in} \\
\texttt{\{xzy,catheri2,larry,roeder,aramdas\}@cmu.edu}
\vspace{0.1in} \\ ${}^{1}$Department of Statistics and Data Science\\
${}^{2}$Machine Learning Department\\
${}^{3}$Computational Biology Department\\
Carnegie Mellon University
}
\date{\today}
\begin{document}
\maketitle

\begin{abstract}
Researchers often lack the resources to test every hypothesis of interest directly or compute test statistics comprehensively, but often possess auxiliary data from which we can compute an estimate of the experimental outcome.
We introduce a novel approach for selecting which hypotheses to query a statistic (e.g., run an experiment, perform expensive computation, etc.) in a hypothesis testing setup by leveraging estimates to compute proxy statistics.
Our framework allows a scientist to propose a proxy statistic and then query the true statistic with some probability based on the value of the proxy. We make no assumptions about how the proxy is derived, and it can be arbitrarily dependent on the true statistic. If the true statistic is not queried, the proxy is used in its place. 
We characterize "active" methods that produce valid p-values and e-values in this setting and utilize this framework in the multiple testing setting to create procedures with false discovery rate (FDR) control.
Through simulations and real data analysis of causal effects in scCRISPR screen experiments, we empirically demonstrate that our proxy framework has both high power and low resource usage when our proxies are accurate estimates of the respective true statistics. \end{abstract}

\tableofcontents

\section{Introduction}
Modern science often engages in large-scale hypothesis testing (e.g., many drugs to test in animal trials, many proteins to check if they fold correctly and attach to a binding site, etc.) that is resource-constrained to only a limited number of experiments.
As a result, researchers often do not have the material resources to test every single hypothesis and have to pick a subset of hypotheses to test. However, they would still like to make as many true discoveries as possible.
Often, we have some prior information (e.g., human experts, a pretrained machine learning model, etc.) that can predict the outcome of an experiment. Thus, our goal in this paper is to develop a framework for utilizing such prior information to guide which hypotheses to test, while still being able to make valid inferences about all hypotheses, including the ones that were not directly tested.

The setting we will consider in this paper is as follows. Suppose we wish to test a hypothesis about the distribution of $Z$. However, $Z$ is ``expensive'' to obtain (e.g., requires a lot of computation, consumes extra experimental resources, can only be obtained at a later time, etc.). We assume that we also have access to another piece of data $X$, that may be dependent on $Z$. We assume that $X$ is ``cheap'' to obtain, and thus we always have access to it.
Using $X$, we produce a ``proxy statistic'' $\hat{S} \coloneqq \hat{S}(X)$ which is an approximation of what the true statistic $S \coloneqq S(X, Z)$ would be, if an experiment were to be run to collect $Z$. We assume that $\hat{S}$ lies in the same domain as its corresponding true statistic (i.e., it lies in $[0, 1]$ for p-values and is nonnegative for e-values). Since it is only an approximation, we do not make any assumptions about its validity, e.g., a proxy p-value may be larger than uniform under the null, and a proxy e-value may have an expectation greater than 1. Further, we allow for any dependence structure between the variables $X$ and $Z$ (and consequently $\hat{S}$ and $S$). The question is: how can we use the predictions $\hat{S}$ to choose whether or not to collect $S$, and regardless of that choice, always end up with a valid statistic (i.e., a valid p-value or e-value)?

\begin{figure}[ht]
\includegraphics[width=\textwidth]{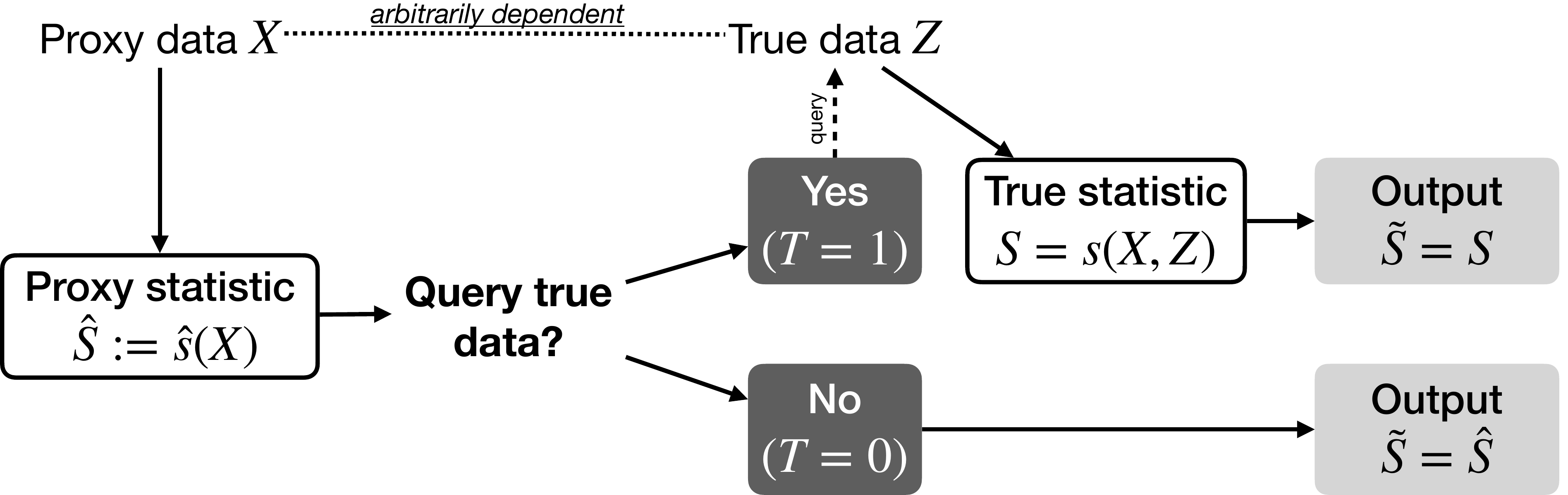}
\caption{Diagram of the active hypothesis testing setup with proxy statistics. The proxy data $X$ is used to calculate a proxy statistic and decide whether the true data, $Z$, should be queried.}
\end{figure}

\paragraph{Outline} We present a simple solution to the above problem for deriving both what we call \emph{active} e-values and p-values, which query the true data in a probabilistic fashion depending on the value of the proxy statistic. We use ``active'' in the sense that the user actively chooses whether to query the true statistic or not based on the proxy.
We introduce the active e-values and p-values and motivating applications for their usage in the remainder of this section.
In \Cref{sec:active-stats}, we prove the validity of these active statistics. We then formulate applications of active e-values and p-values for multiple testing with false discovery rate control (FDR) in \Cref{sec:fdr-control}.
We include some numerical simulations demonstrating the empirical performance of our active methods in \Cref{sec:simulations}.
We then show the utility of our active methods when computing p-values for detecting causal effects in scCRISPR screen experiments in \Cref{sec:scCRISPR}.
Lastly, we discuss an extension of our active framework when we can estimate or access the joint distribution of the proxy and true statistics under the null in \Cref{sec:joint}.
\subsection{Active e-values and p-values: debiasing proxy statistics using randomization}
We will now formally introduce the active e-value and p-value constructions that we will use throughout this paper.

E-values are a category of statistics that have been the subject of much recent interest and have enabled innovations in composite hypothesis testing, anytime-valid inference, multiple testing, and other areas --- \citet{ramdas_gametheoretic_statistics_2022} provides an overview of these developments.
Let $F$ denote a proxy e-value and $E$ be the true e-value; hence, we make the following assumptions.
\begin{align}
    F\text{ and }E\text{ are nonnegative}.\ E \text{ is a bona-fide e-value, i.e., }\expect[E] \leq 1\text{ under the null }H_0.
\end{align}
Clearly, there is no distributional assumption on $F$: it can be far from an e-value.
We do not restrict the dependence between $F$ and $E$, i.e., they can be \emph{arbitrarily dependent}.
As a result, we cannot use the proxy $F$ directly as a test statistic since it is not valid, unlike $E$.
Yet, we can formulate an active e-value that is valid and uses $F$ to inform how often to query $E$.
We simply choose to query $E$ with probability $(1 - \gamma F^{-1})_+$. That is, if $F$ is large, we are more likely to query $E$ --- here $\gamma \in (0, 1]$ is a user-chosen parameter. Formally, let $T$ denote the indicator variable of querying the true e-value $E$. Then,
\[
T \mid F  \sim \text{Bern}(1 - \gamma F^{-1})_+,
\]
where $x_+$ denotes the maximum of $x$ and 0.
We then define an \emph{active e-value}
\begin{align}
    \tilde{E} \coloneqq (1 - T) \cdot F + T \cdot \left(1 - \gamma\right) \cdot E.
    \label{eq:active-e-value}
\end{align}
Thus, we get that $\tilde{E}$ is the proxy $F$ with probability $(\gamma F^{-1}) \wedge 1$ and the true e-value $E$ with probability $(1 - \gamma F^{-1})_+$.

We can also define an active p-value under this scheme. Assume we have access to a proxy p-value, $Q$, that has no distributional assumptions placed on it, and is arbitrarily dependent on a p-value $P$. Formally, we assume the following:
\begin{align}
Q\text{ and }P\text{ are supported on }[0, 1]\text{ and }
\prob{P \leq s} \leq s\text{ for all }s \in [0, 1],
\text{ under the null }H_0.
\end{align}
Then, we can define the following active p-value:
\begin{align}
    \tilde{P}\coloneqq  (1 - T)Q + T (1 - \gamma)^{-1} \cdot P  \label{eq:arbdep-active-pvalue},
\end{align} where $T \mid Q \sim \text{Bern}(1 - \gamma Q)$ and $\gamma \in [0, 1)$.
Similarly to the e-value case, $\tilde{P}$ equals the proxy $Q$ with probability $(\gamma Q) \wedge 1$ and the true p-value with probability $(1 - \gamma Q)_+$. Another connection to note is that since the reciprocal of a e-value is a p-value (by Markov's inequality), an active p-value with $Q = F^{-1}$ and $P = E^{-1}$ is identical to the reciprocal of the active e-value formed by $F$ and $E$.

Finally, we present another active p-value, where we assume independence between $Q$ and $P$ and that we know the density function of the p-value $Q$ under the null hypothesis, e.g., we can estimate it well from existing data. Let this density function be denoted as $f$ and let $\ell_f \coloneqq \inf_{q \in [0, 1]} f(q)$.
We define an active p-value as follows:
\begin{align}
    \tilde{P}^{\density} \coloneqq T P + (1 - T)Q \text{ where } T|Q \sim \textnormal{Bern}\left(1 - \frac{\eta \ell_f}{f(Q)}\right).\label{eq:density-active-pvalue}
\end{align}
Thus, $\tilde{P}^{\density}$ equals the true p-value $P$ when we query $(T=1)$ and the proxy p-value $Q$ when we do not query $(T=0)$. 

\paragraph{Generality of active statistics} Since $\tilde{E}$ and $\tilde{P}$ (and $\tilde{P}^\density$ under additional assumptions) possess validity properties of being either an e-value or a p-value, we can utilize them in any existing procedure that can provide valid error control guarantees when provided valid input e-values and p-values. In this paper, we will focus on their utilization in variants of the Benjamini-Hochberg (BH) procedure for FDR control.

However, other multiple testing procedures exist which control the family-wise error rate (\citealp{marcus_closed_testing_1976}; \citealp{holm_simple_sequentially_1979}; \citealp{hochberg_sharper_bonferroni_1988}), or provide simultaneous false discovery proportion bounds (\citealp{genovese_stochastic_process_2004,genovese_exceedance_control_2006}; \citealp{goeman_multiple_testing_2011}, \citealp{goeman_only_closed_2021}). These methods all rely on developing a statistic for testing the global null. In the e-value case, it is relatively simple to construct an e-value under the global null --- if $(E_1, \dots, E_K)$ are independent or sequentially dependent, then the product $\tilde{E} = \prod_{i = 1}^K \tilde{E}_i$ is a valid e-value under the global null. If they are arbitrarily dependent, the average $\tilde{E} = K^{-1} \sum_{i = 1}^K \tilde{E}_i$ is a valid e-value. One can also accomplish the same with p-values by using p-merging functions to derive a p-value for testing the global null \citep{simes_improved_bonferroni_1986,vovk_combining_p-values_2020,vovk_admissible_ways_2022}.
Thus, these methods that generally rely on e-values and p-values can also utilize the active statistics we have formulated.

\subsection{Motivating applications}

The following are some scenarios where a proxy ($F$ or $Q$) may be cheaply obtained, and occasionally querying the true statistic ($E$ or $P$) saves resources.

\begin{enumerate}[leftmargin=*]

    \item \emph{Screening for causal effects using non-causal estimates.} In gene perturbation experiments, we aim to test the effect of perturbing one gene on the expression of other genes. Unfortunately, there is often unmeasured confounding in these experiments that can arise from multiple sources, e.g., samples from different lab environments, with different cell characteristics, etc.
        We can measure the expression levels of thousands of genes, and we wish to test the causal effect of different gene perturbations on these genes. As a result, we often need to test at least tens of thousands of hypotheses. A recent line of work in proximal causal inference \citep{cui_semiparametric_2023,miao_confounding_2020,liu_regression-based_2024} has developed methods of using negative controls to account for such confounding.
        However, computing the p-values for causal effects using proximal causal inference is quite computationally expensive. To alleviate this burden, one can use the biased ordinary least squares p-value (i.e., a p-value for the coefficient in a regression model for gene expression) as the proxy $Q$. We demonstrate in \Cref{sec:scCRISPR}, in an application to real genetic data, that our active framework preserves validity in this setting and greatly reduces the computation time while maintaining power.

    \item \emph{Selecting hypotheses for experimental study.} In many domains, we can obtain observational and potentially corrupted data concerning some underlying process, e.g., consumer behavior in economics, brain fMRI data in neuroscience, surveys on psychological traits, etc.
       Hence, one can let the proxy be an estimate of the true statistic that is learned from existing observational data, and then actually conduct randomized experiments to calculate the proxy for only the promising hypotheses, while still deriving a valid statistic for every hypothesis of interest, including the ones that were never directly tested.

    \item  \emph{Clinical endpoint selection from patient attributes.} In running preliminary drug trials, a researcher often wishes not only to test the efficacy of a drug in its primary use, but also potential secondary endpoints as well.
        In this situation, we can use the demographics of the current patient sample to produce a proxy that estimates the statistical evidence for each endpoint --- then we can randomly select to only compute true statistics for the hypotheses that have large $F$ (and perform the necessary medical procedure to extract the necessary data, e.g., tissue sample, X-ray, etc.).

    \item \emph{Variable selection with expert advice.} Variable selection has been a core problem of statistics, and typically, one uses existing data to determine which covariates are important for predicting an outcome of interest. However, one often has domain knowledge about what these covariates are, and we can elicit predictions from either human domain experts or large language models \citep{jeong_llm-select_feature_2024} to guide the selection of important covariates. This can be used to inform the types of features we may wish to collect to improve the models.

    \item \emph{Early proxies for future outcomes.} A common paradigm in experiments is that units are exposed to a treatment, and their response is monitored over time.
    This occurs in many experimentation scenarios, such as A/B testing, lifetime outcomes of policy interventions, and analysis of therapeutic efficacy in clinical trials.
    Since one does not always wish to wait the entire duration before making a judgment on the hypothesis of interest, there has been significant interest in understanding how to use proxies, i.e., observations of earlier features and response of each unit to improve estimation of the long-term effects \citep{athey_surrogate_index_2019,athey_combining_experimental_2020,tran_inferring_long-term_2024,prentice_surrogate_endpoints_1989}.
    In this context, we can view the proxy as an estimate of the true statistic, since it simply uses the outcome process at an earlier time. Moreover, our active framework can handle the fact that the earlier outcome is dependent on the final outcome.
    By randomly selecting which experiments to wait out (potentially for years) to see the long-term results, we can conclude early on hypotheses not selected for long-term observation using our proxies.

    \item \emph{Computationally expensive e-values.} Consider the setting when $Z$ is known ahead of time, but utilizing it to actually compute $E$ may be quite expensive.
    When a scientist's computational budget is limited, it would be prudent to use little computation to derive a rough estimate, $F$, for each hypothesis, and only utilize increased computation to calculate a powerful $E$ for those that had a large $F$ already. For example, the universal inference e-value \citep{wasserman_universal_inference_2020} requires one to compute the maximum likelihood over the set of null distributions. When testing null hypotheses such as the family of log-concave distributions \citep{dunn_universal_inference_2022} or Gaussian mixture models \citep{shi_universal_inference_2024}, finding this maximum likelihood requires an expensive nonconvex optimization procedure.
    Further, one usually does not compute the universal inference e-value once, but takes the average over multiple data splits \citep{dunn_universal_inference_2022,dunn_gaussian_universal_2022}. Thus, recomputing the maximum likelihood under the null in these settings can become quite expensive.
\end{enumerate}

\paragraph{Related work.} Our method is similar in spirit to two-stage multiple testing methods.
Two-stage multiple testing is often used in settings where the number of null hypotheses is expected to vastly outstrip the number of non-nulls to increase the power of a multiple testing procedure.
In the first stage, the scientist first filters out a large number of candidate hypotheses before performing the actual multiple testing procedure on the remaining hypotheses in the second stage.
The first stage can involve a test that utilizes a fraction of samples (\citealp{zehetmayer_two-stage_designs_2005}; \citealp{aoshima_two-stage_procedures_2011}) before allocating the remaining samples to the hypotheses still in consideration. Another approach, often seen in testing gene-gene interactions in genomics applications, utilizes an alternative statistic (e.g., a p-value for the marginal effect of a single gene) that is independent of the true statistic used in the second stage (e.g., a p-value for the effect of a gene-gene interaction) to reduce the number of hypotheses in consideration (\citealp{kooperberg_increasing_power_2008}; \citealp{murcray_gene-environment_interaction_2009}; \citealp{gauderman_efficient_genome-wide_2010}; \citealp{dai_two-stage_testing_2012}; \citealp{lewinger_efficient_two-step_2013}; \citealp{pecanka_powerful_efficient_2017}; \citealp{kawaguchi_improved_two-step_2023}).
In either case, these two-stage methods require that the statistic used to filter hypotheses in the first stage is independent of the second stage, or the conditional distribution of the second stage statistic on the first stage statistic is known.
This is accomplished through adaptive data allocation, parametric assumptions, or by relying on the asymptotic distribution being a multivariate Gaussian.
Notably, many two-stage methods are specific to gene-gene or gene-environment settings since the first stage statistic specifically leverages the marginal effect of a single gene to filter out pairs.
Both of these requirements significantly reduce the applicability of two-stage methods, particularly in modern settings where only nonparametric assumptions are made about the collected data.
In contrast, our active multiple testing procedures we introduce in this paper, are virtually assumption-free w.r.t.\ the proxies, i.e., the statistic used for filtering the hypotheses in the first stage.
Further, our approach is generic --- we can apply our methodology to any multiple testing problem where there exists an imperfect proxy and the ability to derive bona fide p-values or e-values.

Another branch of work that is close to ours in spirit is the field of active learning, used for actively selecting which data to label for machine learning
(\citealp{balcan_agnostic_active_2006};  \citealp{joshi_multi-class_active_2009}; \citealp{hanneke_theory_disagreement-based_2014}; \citealp{gal_deep_bayesian_2017}; \citealp{ash_deep_batch_2019};  \citealp{ren_survey_deep_2021}; \citealp{vishwakarma_promises_pitfalls_2023}; \citealp{cheng_how_many_2024}).
Recently, \citet{zrnic_active_statistical_2024a} presented a framework for active statistical inference in the context of data labeling, although it is for estimating a single parameter or testing a single hypothesis, as opposed to the multiple hypothesis testing setting we consider.

\section{The validity of active statistics\label{sec:active-stats}}
We will now prove that the active e-value and both active p-values are valid: a valid e-value and a pair of valid p-values, respectively. The $\tilde{E}$ and $\tilde{P}$ active statistics are robust to the dependence structure between the proxy and the true statistics. On the other hand, we will show $\tilde{P}^\density$ is valid under independence between $P$ and $Q$.
\subsection{Active statistics valid under arbitrary dependence\label{sec:active-dep}}
We first prove that the validity of the active e-value is robust to dependence assumptions.
\begin{proposition}
    If $E$ is a valid e-value and $F$ is an arbitrary nonnegative random variable, then under any dependence structure between $F$ and $E$, the active e-value $\tilde{E}$ in \eqref{eq:active-e-value} is a valid e-value.
    \label{prop:active-e-value}
\end{proposition}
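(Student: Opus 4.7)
The plan is to bound $\mathbb{E}[\tilde E]$ by conditioning on the pair $(F,E)$ and exploiting the fact that the only randomness in $T$ beyond $F$ is the external Bernoulli coin, which is independent of $E$. Concretely, since $T \mid (F, E)$ has the same distribution as $T \mid F$, namely $\mathrm{Bern}\bigl((1-\gamma F^{-1})_+\bigr)$, linearity of conditional expectation gives
\[
\mathbb{E}[\tilde E \mid F, E] \;=\; \bigl(1 - (1-\gamma F^{-1})_+\bigr)\, F \;+\; (1-\gamma F^{-1})_+\,(1-\gamma)\,E.
\]

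The key algebraic observation I will highlight is that the deterministic function $g(F) \coloneqq \bigl(1 - (1-\gamma F^{-1})_+\bigr)\,F$ satisfies $g(F) = \min(F,\gamma) \leq \gamma$ for every $F \geq 0$: when $F \geq \gamma$ the positive part equals $1 - \gamma/F$, which makes $g(F) = F \cdot (\gamma/F) = \gamma$, while when $F < \gamma$ (including the edge case $F=0$, where the convention $\gamma F^{-1} = \infty$ forces $T = 0$ a.s.) the positive part is $0$ and $g(F) = F < \gamma$. Combining this with the trivial bound $(1 - \gamma F^{-1})_+ \leq 1$, I get the pointwise inequality
\[
\mathbb{E}[\tilde E \mid F, E] \;\leq\; \gamma + (1-\gamma)\, E.
\]

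Taking a further expectation and invoking the e-value property $\mathbb{E}[E] \leq 1$ under $H_0$ finishes the argument:
\[
\mathbb{E}[\tilde E] \;\leq\; \gamma + (1-\gamma)\,\mathbb{E}[E] \;\leq\; \gamma + (1-\gamma) \;=\; 1.
\]
Because no step uses any assumption relating $F$ and $E$, arbitrary dependence is automatically accommodated. The one subtlety worth being careful about is the degenerate case $F=0$ (and, more generally, $F < \gamma$): there $\gamma F^{-1}$ is not well-defined or exceeds $1$, so I will explicitly state the convention that $(1-\gamma F^{-1})_+ = 0$ on this event, which makes the definition of $T$ unambiguous and folds naturally into the bound $g(F) = \min(F,\gamma)$. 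Beyond that, the proof is essentially a two-line calculation; the main conceptual content is the identity $g(F) = \min(F,\gamma)$, which is what allows the proxy contribution to be absorbed into a constant bias of size $\gamma$ that the $(1-\gamma)$ deflation of $E$ exactly offsets.
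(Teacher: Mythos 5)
Your proof is correct and follows essentially the same route as the paper's: bound the proxy contribution $\expect[(1-T)F]$ by $\gamma$ (via the identity $(1-(1-\gamma F^{-1})_+)F = \min(F,\gamma)$, which the paper uses implicitly) and the queried contribution by $(1-\gamma)\expect[E]$ using $\expect[T\mid F,E]\leq 1$. Your explicit handling of the $F<\gamma$ edge case and the conditional-expectation step only adds detail to the paper's two-line argument; there is no substantive difference.
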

\begin{proof}
    When the null hypothesis is true, we can compute the expectation of $\tilde E$ in the following way:
    \begin{align}
        \expect[\tilde E] = \expect[(1 - T) \cdot F] + \expect\left[T \cdot \left(1 - \gamma\right) \cdot E \right] \leq \gamma + \left(1 - \gamma\right) \cdot \expect[E]  \leq 1.
    \end{align} The first inequality is because $\expect[T \mid E, F] = (1 - \gamma F^{-1})_+ \leq 1$. The second inequality is because $E$ is an e-value. Thus, we have shown that $\tilde{E}$ is a bona fide e-value.
\end{proof}

We can also show this kind of validity for the active p-value, and defer the proof of the following proposition to \Cref{sec: active arb dep p-value valid proof}.
\begin{proposition}\label{prop: active arb dep pvalue valid}
    If $P$ is a valid p-value and $Q$ is an arbitrary $[0,1]$-valued random variable, then under any dependence structure between $P$ and $Q$, the active p-value $\tilde{P}$ in \eqref{eq:arbdep-active-pvalue} is a valid p-value.
\end{proposition}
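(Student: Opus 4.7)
The plan is to establish $\Pr[\tilde P \leq s] \leq s$ for all $s \in [0,1]$ under the null, via a two-way decomposition on the randomization variable $T$. Since $\tilde P = Q$ on $\{T = 0\}$ and $\tilde P = P / (1-\gamma)$ on $\{T = 1\}$, I would first write
\[
\Pr[\tilde P \leq s] \;=\; \Pr[Q \leq s,\, T = 0] \;+\; \Pr[P \leq s(1-\gamma),\, T = 1].
\]
The Bernoulli parameter $1 - \gamma Q$ is a well-defined probability in $[0,1]$ since $\gamma \in [0,1)$ and $Q \in [0,1]$ imply $\gamma Q \in [0,1]$.

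For each term I would condition on the appropriate variables and invoke the construction of $T$. The first term, after conditioning on $Q$ and using $\Pr[T = 0 \mid Q] = \gamma Q$, equals $\expect[\mathbf{1}\{Q \leq s\}\, \gamma Q]$, which is at most $\gamma s$ because $Q \leq s$ on the indicator event. The second term, after conditioning on $(P, Q)$ and using the conditional independence $T \perp P \mid Q$ implicit in specifying $T$ only through $T \mid Q$, satisfies $\Pr[T = 1 \mid P, Q] = 1 - \gamma Q \leq 1$, so the term is at most $\Pr[P \leq s(1-\gamma)] \leq s(1-\gamma)$ by validity of $P$. Summing the two bounds gives $\gamma s + (1-\gamma) s = s$.

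The calculation is short and structurally parallel to the e-value argument of Proposition~\ref{prop:active-e-value}; the conceptual point worth flagging, and the only step where anything nontrivial happens, is how the querying probability $1 - \gamma Q$ has been engineered so that its $Q$-dependence yields exactly the $\gamma s$ bound on the $T=0$ piece, while the inflation factor $(1-\gamma)^{-1}$ on the queried branch is matched precisely by the $(1-\gamma)$ factor obtained from validity of $P$ evaluated at $s(1-\gamma)$. Because $T$ draws its dependence only from $Q$, the arbitrary joint distribution of $P$ and $Q$ never has to be quantified in either bound, which is what lets the proposition be assumption-free on the $(P, Q)$ coupling.
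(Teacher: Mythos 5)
Your proof is correct, but it takes a genuinely different route from the paper's. You argue directly: split on $T$, bound $\Pr[Q\le s,\,T=0]=\mathbb{E}[\mathbf{1}\{Q\le s\}\,\gamma Q]\le\gamma s$ using only the marginal querying probability $\Pr[T=0\mid Q]=\gamma Q$, and bound $\Pr[P\le s(1-\gamma),\,T=1]\le\Pr[P\le s(1-\gamma)]\le (1-\gamma)s$ by dropping the event $\{T=1\}$ and invoking superuniformity of $P$; the two pieces sum to $s$. (Your appeal to $T\perp P\mid Q$ in the second term is actually unnecessary, since $\Pr[A,\,T=1]\le\Pr[A]$ holds regardless.) The paper instead realizes $T$ through an explicit uniform $U$ via $T=\mathbf{1}\{U\gamma^{-1}>Q\}$, constructs the coupled statistic $\tilde{P}^*=\bigl((1-\gamma)^{-1}P\bigr)\wedge\bigl(U\gamma^{-1}\bigr)$, shows $\tilde{P}\ge\tilde{P}^*$ almost surely, and proves $\tilde{P}^*$ is superuniform by a union bound. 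Your argument is shorter and more self-contained for this single proposition, and it makes transparent exactly where the calibration of the querying probability and the $(1-\gamma)^{-1}$ inflation enter. What the paper's coupling buys is the dominating p-value $\tilde{P}^*$ itself: it is reused in the proof of \Cref{thm:active-bh}, where its form as a comonotone transformation of $P$ given $U$ is what allows the PRDN and WNDN properties of the original p-values to transfer to the active ones and hence yields the FDR bounds for active BH. So your route suffices for \Cref{prop: active arb dep pvalue valid} but would not by itself supply the auxiliary object needed downstream.
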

\subsection{An active p-value under a known proxy distribution\label{sec:active-ind-pvalue}}
We consider an alternative formulation of the active p-value under the stronger assumptions of (1) a known or estimable density of $Q$, and (2) independence between $Q$ and $P$.
While one can use independent proxies to filter hypotheses in a two-stage multiple testing setting (i.e., use the proxy to select hypotheses, and then only perform a standard multiple testing procedure on the true statistics for the selected hypotheses), our goal with the active statistic framework is to provide an informative statistic even for hypotheses where the true statistic is not queried. Further, this also allows us to decouple the active component from any specific multiple testing procedure, and use it with any procedure that accepts p-values (or e-values) as input.

\begin{proposition}\label{prop:density-act-pvalue-valid}
    When $P$ and $Q$ are independent, $\tilde{P}^{\density}$ in \eqref{eq:density-active-pvalue} is a valid p-value. Further, $\tilde{P}^{\density}$ is exactly uniform if $P$ is exactly uniform, regardless of the distribution of $Q$.
    Otherwise,
    \begin{align}
    \sprob(\tilde{P}^{\density} \leq s) = \prob{P \leq s} + \eta \ell_f\left(s - \int\limits_0^1\prob{P \leq s \mid Q = q}\ dq\right).
\end{align}
\end{proposition}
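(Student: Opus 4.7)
The plan is to compute $\prob{\tilde{P}^{\density} \leq s}$ by conditioning on $Q$ and then integrating against the density $f$, exploiting independence only at the last step. The key observation is that the randomization probabilities $1 - \eta \ell_f / f(Q)$ were chosen precisely so that the "reweighting" of $Q$ by the indicator $T = 0$ against the density $f$ collapses into a uniform contribution $\eta \ell_f$.

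First, I would condition on $Q = q$ and write the mixture decomposition
\begin{align*}
\sprob(\tilde{P}^{\density} \leq s \mid Q = q) = \left(1 - \frac{\eta \ell_f}{f(q)}\right) \sprob(P \leq s \mid Q = q) + \frac{\eta \ell_f}{f(q)} \mathbf{1}\{q \leq s\},
\end{align*}
which is well-defined as a probability since $f(q) \geq \ell_f$ ensures $\eta \ell_f / f(q) \in [0,1]$. Next, I would integrate this against $f(q)$ over $[0,1]$. The second term collapses beautifully: $\int_0^s (\eta \ell_f / f(q)) f(q)\, dq = \eta \ell_f \cdot s$. The first term, after distributing, becomes $\int_0^1 \sprob(P \leq s \mid Q = q) f(q)\, dq - \eta \ell_f \int_0^1 \sprob(P \leq s \mid Q = q)\, dq$. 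The former integral equals $\prob{P \leq s}$ by the tower property. This already yields the claimed identity
\begin{align*}
\sprob(\tilde{P}^{\density} \leq s) = \prob{P \leq s} + \eta \ell_f\left(s - \int_0^1 \sprob(P \leq s \mid Q = q)\, dq\right),
\end{align*}
without yet invoking independence.

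To establish validity as a p-value under the null, I would invoke independence: $\sprob(P \leq s \mid Q = q) = \prob{P \leq s} \leq s$, so $\int_0^1 \sprob(P \leq s \mid Q = q)\, dq = \prob{P \leq s} \leq s$. Plugging this into the identity gives $\sprob(\tilde{P}^{\density} \leq s) = (1 - \eta \ell_f) \prob{P \leq s} + \eta \ell_f \cdot s \leq (1 - \eta \ell_f) s + \eta \ell_f s = s$, where the inequality uses $\eta \ell_f \in [0,1]$. The exact-uniformity claim follows immediately: if $\prob{P \leq s} = s$, the displayed expression evaluates to $s$ for every $s$, regardless of $f$.

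I do not anticipate a real obstacle; the proof is essentially a direct conditional calculation. The only point requiring care is tracking where independence is actually used — it only enters when replacing $\sprob(P \leq s \mid Q = q)$ by $\prob{P \leq s}$ inside the integral, not in the derivation of the CDF identity itself. I would also note at the end why $\eta \ell_f/f(Q) \leq 1$ is guaranteed by the definition of $\ell_f$ as the infimum of $f$, so that the Bernoulli parameter is always valid.
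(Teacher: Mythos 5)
Your proposal is correct and follows essentially the same route as the paper: condition on $Q$, use that $T$ depends only on $Q$ (and exogenous randomness) to split the mixture, integrate the $T=0$ term against $f$ so that the $1/f(Q)$ factor cancels and leaves $\eta\ell_f s$, and only then invoke independence to replace $\prob{P \leq s \mid Q=q}$ by $\prob{P \leq s}$. The paper phrases the computation in terms of expectations rather than explicit integrals against $f(q)$, but the decomposition, the cancellation, and the final bound $(1-\eta\ell_f)\prob{P\leq s} + \eta\ell_f s \leq s$ are identical.
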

\begin{proof}
    We only need to prove the last statement, since under independence of $P$ and $Q$, the final statement implies the validity of $\tilde{P}^{\density}$. Let us evaluate the probability $\tilde{P}^{\density}$ lies under some threshold $s \in [0, 1]$:
    \begin{align}
        \sprob(\tilde{P}^{\density} \leq s)
        &= \prob{P \leq s, T = 1} + \prob{Q \leq s, T = 0}\\
        &=\expect[\prob{P \leq s, T = 1 \mid Q} + \prob{Q \leq s, T = 0 \mid Q}]\\
        &=\expect[\prob{P \leq s, T = 1 \mid Q} + \ind{Q \leq s}\prob{T = 0 \mid Q}] \\
        &=\expect[\prob{P \leq s \mid Q} \cdot \prob{T = 1 \mid Q} + \ind{Q \leq s}\prob{T = 0 \mid Q}].\label{eq:reframe}
    \end{align}
    The first and second equalities are purely by reasoning from the law of total expectation and subsequent arithmetic manipulations. The third and fourth equalities are by definition of $T$ being independent of all other randomness when conditioned on $Q$.

    Now, we continue on with our derivations, starting from \eqref{eq:reframe}:
    \begin{align}
         \sprob(\tilde{P}^{\density} \leq s)
         &= \expect\left[\prob{P \leq s \mid Q} \cdot \left(1 - \frac{\eta \ell_f}{f(Q)}\right) + \ind{Q \leq s}\cdot\frac{\eta \ell_f}{f(Q)}\right]\\
         &= \expect\left[\prob{P \leq s \mid Q}\right]  + \eta \ell_f\expect\left[\frac{\ind{Q \leq s} - \prob{P \leq s \mid Q}}{f(Q)}\right]\\
         &= \prob{P \leq s}  + \eta \ell_f\int\limits_0^1\ind{q \leq s} - \prob{P \leq s \mid Q = q}\ dq\\
         &= \prob{P \leq s}  + \eta \ell_f\left(s - \int\limits_0^1\prob{P \leq s \mid Q = q}\ dq\right).
          \label{eq:tight-pval-prob}
    \end{align}

    The first equality is by the construction of the Bernoulli distribution of $T \mid Q$. The third equality is by law of total expectation. Continuing on from \eqref{eq:tight-pval-prob}, we get the following bound:\begin{align}
        \prob{\tilde{P}^{\density}\leq s} &=\prob{P \leq s} +\eta \ell_f\left(s - \int\limits_0^1\prob{P \leq s \mid Q = q}\ dq\right).
    \end{align}
    When $P$ is independent of $Q$ we get that the following is true under the null hypothesis:
    \begin{align}
        \prob{\tilde{P}^{\density}\leq s} = \prob{P \leq s}  + \eta \ell_f(s - \prob{P \leq s}) \leq s,
    \end{align} 
    where the last inequality is due to $\eta \ell_f \leq 1$ and the superuniformity of $P$ under the null hypothesis. We can see that the last inequality is tight, i.e., an equality, when $P$ is exactly uniform. Thus, we have shown our desired result.
\end{proof}

We note that the expected number of queries of the true p-value we make is $\expect[T] = 1 - \eta \ell_f$ 
so we can also adjust the number of times $P$ is computed by varying our choice of $\eta \in [0, 1]$. One drawback of this method is that we are restricted to the minimum budget by the value of $\ell_f$. Our subsequent methods will not have this limitation.

\paragraph{When do $P$ and $Q$ satisfy the assumptions?}When the null hypothesis is simple (i.e., contains only a single possible distribution), it is feasible to either know a priori, or estimate the distribution of $Q$ under the null. However, if one considers a composite null hypothesis (e.g., distribution bounded in $[-1, 1]$ with mean at most 0), there generally is no single distribution that a statistic follows under the null. Moreover, even if such a statistic exists, one cannot guarantee $Q$ is the precise statistic such that it is invariant to which distribution from the null hypothesis is actually the true distribution.

The independence between $P$ and $Q$ can sometimes be achieved by splitting the sample or, if certain parametric distribution assumptions are made about the statistics, one of the data fission/thinning proposals (\citealp{rasines_splitting_strategies_2023}; \citealp{leiner_data_fission_2023}; \citealp{neufeld_data_thinning_2024}; \citealp{dharamshi_generalized_data_2024}) can be applied to derive independent data that can be used to calculate $P$ and $Q$ respectively.
However, data splitting and the aforementioned data fission techniques may also be difficult in cases where the collected data is not i.i.d., e.g., for panel data that is dependent across time. In such a case, it may be more applicable to use the active p-value that only requires validity of the true p-value in \eqref{eq:arbdep-active-pvalue}. We also consider an extension of this active p-value where $P$ and $Q$ are not independent, but we can estimate their joint distribution in \Cref{sec:joint}.
Although theoretical results require independent $P$ and $Q$, synthetic experiments in \Cref{sec:scCRISPR:synthetic} explore active p-values \eqref{eq:density-active-pvalue} under varying correlation levels and show there is only mild type I error inflation under positive dependence.

\section{False discovery rate (FDR) control\label{sec:fdr-control}}

Generalizing from a single hypothesis test, we can also consider the multiple testing setting with $K$ different (null) hypotheses we wish to test. Our goal is to produce a discovery set $\rejset \subseteq [K]$ such that most of the hypotheses in the discovery set are truly non-null. We refer to the ones that are actually null (but are included in $\rejset$) as \emph{false discoveries}. A popular error criterion for multiple hypothesis testing is the \emph{false discovery rate (FDR)}, first introduced by \citet{benjamini_controlling_false_1995}.
The FDR is the expected proportion of discoveries that are false, and is defined as follows:
\begin{align}
    \FDP(\rejset) \coloneqq \frac{|\Ncal \cap \rejset|}{|\rejset| \vee 1}, \qquad \FDR(\rejset) \coloneqq \expect[\FDP(\rejset)],
\end{align} where $\Ncal \subseteq [K]$ denotes the subset of hypotheses that are truly null.
A procedure is considered to have valid FDR control at level $\alpha$ if it always outputs $\rejset$ such that $\FDR(\rejset) \leq \alpha$ for a predetermined level of FDR control $\alpha \in [0, 1]$.
The seminal \emph{Benjamini-Hochberg (BH) procedure} ensures different levels of FDR control under different dependence assumptions on the input p-values. For a p-value vector $\mathbf{P} \coloneqq (P_1, \dots, P_K)$, where $P_{(i)}$ denotes the $i$th smallest p-value in $\mathbf{P}$, the BH procedure is defined as
\begin{gather}
    k^{\rmBH}(\mathbf{P})\coloneqq \max \left\{i \in [K]:P_{(i)} \leq \alpha i / K \right\} \text{ and }\rejset^{\rmBH}(\mathbf{P})\coloneqq \{i \in [K]: P_i \leq \alpha k^{\rmBH}(\mathbf{P}) / K\}.
\end{gather}

\citet{wang_false_discovery_2022} formulated the \emph{e-BH procedure} as a method for producing an FDR controlling set from e-values. Unlike the BH procedure, e-BH is robust to arbitrary dependence among the e-values. Further, \citet{ignatiadis_compound_e-values_2024} showed that \emph{any} FDR controlling procedure is an instance of the e-BH procedure, making e-BH a fundamental procedure for FDR control. Let $\mathbf{E} \coloneqq (E_1, \dots, E_K)$ denote a collection of $K$ e-values, and let $E_{[i]}$ denote the $i$th largest e-value for each $i \in [K]$.
The e-BH procedure applied to a vector of $K$ e-values $\mathbf{E}$, produces the following discovery set:
\begin{align}
    k^{\rmEBH}(\mathbf{E}) \coloneqq \max \left\{i \in [K]:E_{[i]} \geq K / (\alpha i) \right\} \text{ and }\rejset^{\rmEBH}(\mathbf{E}) \coloneqq \{i \in [K]: E_i \geq K / (\alpha k^{\rmEBH}(\mathbf{E}))\}.
\end{align}
We now utilize these procedures in combination with our active framework to derive active multiple testing procedures with FDR control.

\subsection{The active Benjamini-Hochberg (BH) and e-BH procedures}
Since our active framework directly produces valid e-values and p-values, FDR control holds even when using our active e-values with e-BH, or active p-values with BH.

Let $\mathbf{Q} \coloneqq (Q_1, \dots, Q_K)$ and $\mathbf{P} \coloneqq (P_1, \dots, P_K)$ be $K$ proxy p-values and true p-values, respectively, and $\tilde{\mathbf{P}} \coloneqq (\tilde{P}_1, \dots, \tilde{P}_K)$ be the resulting active p-values. Let $\mathbf{F} \coloneqq (F_1, \dots, F_K)$, $\mathbf{E}\coloneqq (E_1, \dots, E_K)$, and $\tilde{\mathbf{E}}\coloneqq (\tilde{E}_1, \dots, \tilde{E}_K)$ denote the proxy, true, and active e-values for each hypothesis. We then define the \emph{active BH} and \emph{active e-BH} procedures as follows:
\begin{align}
\rejset^\actBH \coloneqq \rejset^\BH(\tilde{\mathbf{P}}), \qquad
\rejset^\acteBH \coloneqq \rejset^\eBH(\tilde{\mathbf{E}}).
\end{align}

We formulate the complete procedures in \Cref{alg:active-bhs}.
\begin{algorithm}[h]\label{alg:active-bhs}
\caption{Active e-BH and BH using active e-values and p-values.}
\KwIn{Desired level of FDR control $\alpha \in [0, 1]$. Tuning parameter $\gamma \in [0, 1]$. Proxy e-values $(F_1, \dots, F_K)$ or proxy p-values $(Q_1, \dots, Q_K)$.}
\begin{minipage}[t]{0.45\textwidth}
\tcbox[hbox,nobeforeafter,tcbox raise base]{active BH}
\If{using p-values }{
\For{$i \in \{1, \dots, K\}$}{
    Sample $T_i \sim \text{Bern}(1 - \gamma Q_i)$.\\
    \eIf{$T_i = 1$}{
        Query true p-value $P_i$.\\
        Compute $\tilde{P}_i = (1 - \gamma)^{-1} \cdot P_i$
    }{
        Set $\tilde{P}_i = Q_i$
    }
}
Output $\rejset^\actBH \coloneqq \rejset^\BH(\tilde{\mathbf{P}})$
}
\end{minipage}\begin{minipage}[t]{0.45\textwidth}
\tcbox[hbox,nobeforeafter,tcbox raise base]{active e-BH}
\If{using e-values }{
\For{$i \in \{1, \dots, K\}$}{
    Sample $T_i\sim\text{Bern}(1 - \gamma F_i^{-1})_+$.\\
    \eIf{$T_i = 1$}{
        Query true e-value $E_i$.\\
        Compute $\tilde{E}_i = (1 - \gamma) \cdot E_i$
    }{
        Set $\tilde{E}_i = F_i$
    }
}
Output $\rejset^\acteBH \coloneqq \rejset^\eBH(\tilde{\mathbf{E}})$}
\end{minipage}
\end{algorithm}
We can formalize the FDR guarantee of active e-BH in the following theorem.
\begin{theorem}\label{thm:active-bh}
    Active BH ensures FDR control for $\rejset^\actBH$ for different dependence structures in $\mathbf{P}$.
    \begin{align}
    \FDR(\rejset^\actBH)\leq
        \begin{cases}
            \alpha(1 + \log(\alpha^{-1})) &
            \begin{aligned}
             &   \text{independent or }\text{positively dependent}\\
            &\text{(PRDN;  \Cref{sec:prds-def})}
            \end{aligned}\\
            \alpha(3.18 + \log(\alpha^{-1})) & \text{negatively dependent (WNDN; \Cref{sec:prds-def})}\\
            \alpha \ell_K & \text{arbitrarily dependent}
        \end{cases}.
    \end{align}
\end{theorem}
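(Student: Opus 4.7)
The plan is to apply known FDR bounds for BH on valid p-values to the active p-value vector $\tilde{\mathbf{P}}$, while carefully tracking how the randomization via $T_i$ and the arbitrary proxies $Q_i$ affect the effective dependence structure. Since Proposition \ref{prop: active arb dep pvalue valid} establishes that each $\tilde{P}_i$ is a valid (super-uniform) p-value under the null, the arbitrary dependence case $\alpha \ell_K$ should follow immediately from the Benjamini--Yekutieli bound, which holds for any vector of super-uniform p-values regardless of their joint distribution; here $\ell_K = \sum_{i=1}^K 1/i$ is the harmonic reshaping factor.

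For the PRDN and WNDN cases, the main subtlety is that $\mathbf{P}$ being PRDN does \emph{not} imply $\tilde{\mathbf{P}}$ is PRDS, because the coupling of $\tilde{P}_i$ with $\tilde{P}_j$ goes through the arbitrarily dependent proxies $Q_i, Q_j$ and through the Bernoulli $T_i$ whose parameter depends on $Q_i$. Thus one cannot simply invoke the classical BH FDR bound. My approach would be to start from the standard Ferreira--Zwinderman identity
\begin{align}
\FDR(\mathcal{R}^{\actBH}) = \sum_{i \in \mathcal{N}} \sum_{j=1}^K \frac{1}{j}\,\sprob\!\bigl(\tilde{P}_i \leq \alpha j/K,\ k^{\rmBH}(\tilde{\mathbf{P}}) = j\bigr),
\end{align}
then decompose each summand according to the event $\{T_i = 0\}$ versus $\{T_i = 1\}$. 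On $\{T_i = 1\}$ one gets $\tilde{P}_i = (1-\gamma)^{-1} P_i$, which allows one to replace the event on $\tilde{P}_i$ with one on the true p-value $P_i$ and to transfer the PRDN/WNDN structure of $\mathbf{P}$ into the analysis; on $\{T_i = 0\}$ the bound $\sprob(T_i = 0 \mid Q_i) = \gamma Q_i$ lets one absorb the proxy contribution into a term proportional to $\alpha j / K$.

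I expect the factor $(1 + \log(\alpha^{-1}))$ to emerge by then summing over $j$ and noting that the integrand concentrates on scales where $\alpha j / K$ is small, which produces a telescoping/harmonic-type sum truncated at the smallest relevant level --- yielding $\log(\alpha^{-1})$ rather than the full $\log K$ one would get from $\ell_K$. The WNDN case should proceed analogously, with the constant $3.18$ arising from the weaker one-sided tail bound available for negatively dependent sequences (analogous to the constant appearing in existing BH-under-negative-dependence results).

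The hard part will be cleanly handling the hybrid event $\{T_i = 1,\ k^{\rmBH}(\tilde{\mathbf{P}}) = j\}$: the definition of $k^{\rmBH}$ depends on all other active p-values, which in turn depend on their own Bernoulli flips and on the joint law of $\mathbf{Q}$. To sidestep the loss of PRDS in $\tilde{\mathbf{P}}$, I would condition on $(\mathbf{Q}, T_{-i})$ and use that, under this conditioning, the only residual randomness in $\tilde{P}_j$ for $j \neq i$ with $T_j = 1$ is $P_j$, to which the PRDN property of $\mathbf{P}$ can be directly applied. This conditioning is where the argument has to be executed most carefully, and is the step I expect to consume the bulk of the proof.
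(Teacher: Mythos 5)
Your identification of the central difficulty is right --- PRDN of $\mathbf{P}$ does not transfer to $\tilde{\mathbf{P}}$ because of the arbitrarily dependent proxies --- but the fix you propose does not work, and it is not the route the paper takes. The problem is your final conditioning step: you want to condition on $(\mathbf{Q}, T_{-i})$ and then ``directly apply'' the PRDN property of $\mathbf{P}$ to the residual randomness in the $P_j$'s. But PRDN is a property of the \emph{unconditional} joint law of $\mathbf{P}$, and since each $Q_j$ may be arbitrarily dependent on $P_j$ (and on the whole vector $\mathbf{P}$), the conditional law of $\mathbf{P}$ given $\mathbf{Q}$ can be essentially anything; conditioning on $T_{-i}$, whose success probabilities are functions of $\mathbf{Q}$, compounds this. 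So the positive-dependence structure you need is exactly what your conditioning destroys. Relatedly, your expectation that the $(1+\log(\alpha^{-1}))$ factor ``emerges'' from a truncated harmonic sum in the Ferreira--Zwinderman decomposition is not substantiated: in the paper that factor is not derived from scratch but imported from the FDR bound for \emph{self-consistent} procedures applied to PRDN p-values \citep{su_fdr_linking_2018}, and the constant $3.18$ likewise from the corresponding WNDN result of \citet{fischer_online_generalization_2024}.

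The idea you are missing is a stochastic-domination coupling that removes $\mathbf{Q}$ from the analysis entirely. Realize $T_i$ as $T_i = \ind{U_i \gamma^{-1} > Q_i}$ with $U_i$ i.i.d.\ uniform and independent of everything else, and set $\tilde{P}_i^* \coloneqq ((1-\gamma)^{-1}P_i) \wedge (U_i\gamma^{-1})$. One checks pointwise that $\tilde{P}_i^* \leq \tilde{P}_i$: on $\{T_i=1\}$ the active p-value equals $(1-\gamma)^{-1}P_i$, while on $\{T_i=0\}$ it equals $Q_i \geq U_i\gamma^{-1}$. The vector $\tilde{\mathbf{P}}^*$ involves only $\mathbf{P}$ and the fresh independent $\mathbf{U}$; conditionally on $\mathbf{U}$ it is a coordinatewise increasing (comonotone) transform of $\mathbf{P}$, so it inherits PRDN, and a short direct computation shows it inherits WNDN as well, and each $\tilde{P}_i^*$ is itself a valid p-value. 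Since the active BH discovery set is self-consistent with respect to $\tilde{\mathbf{P}}$ and $\tilde{P}_i^* \leq \tilde{P}_i$, it is also self-consistent with respect to $\tilde{\mathbf{P}}^*$, and all three stated bounds then follow by citing the known FDR bounds for self-consistent procedures under PRDN, WNDN, and arbitrary dependence. Your handling of the arbitrary-dependence case via the Benjamini--Yekutieli correction is fine as stated.
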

We defer the proof to \Cref{sec:active-bh-proof}.
It remains to be seen if FDR control in the independent or positively dependent case can be improved to $\alpha$, and we leave that to future work.

\begin{theorem}
    The active e-BH procedure ensures $\FDR(\rejset^{\acteBH}) \leq \alpha$ under arbitrary dependence among $\mathbf{F}$ and $\mathbf{E}$.
\end{theorem}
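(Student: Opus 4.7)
The plan is to reduce this theorem immediately to two already-established ingredients: the marginal validity of each active e-value from \Cref{prop:active-e-value}, and the FDR control guarantee of the e-BH procedure under arbitrary dependence from \citet{wang_false_discovery_2022}. Since $\rejset^{\acteBH}$ is defined as $\rejset^{\eBH}(\tilde{\mathbf{E}})$, it suffices to check that $\tilde{\mathbf{E}}$ is a vector of valid e-values (one per hypothesis) and then invoke the known e-BH theorem; no further manipulation is required.

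The first step is to verify that for each $i \in \Ncal$, the active e-value $\tilde{E}_i$ satisfies $\expect[\tilde{E}_i] \leq 1$ under the corresponding null. This is exactly the content of \Cref{prop:active-e-value} applied coordinatewise: for each hypothesis, $E_i$ is a bona fide e-value under its null and $F_i$ is an arbitrary nonnegative random variable, so the computation in the proof of that proposition gives $\expect[\tilde{E}_i] \leq \gamma + (1-\gamma)\expect[E_i] \leq 1$. Importantly, this check is entirely marginal in $i$ and imposes no requirement on the joint distribution of $(F_1, E_1, T_1, \ldots, F_K, E_K, T_K)$ across hypotheses.

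The second step is to apply the e-BH FDR control theorem to $\tilde{\mathbf{E}}$. That theorem states that for any collection of e-values that are individually valid under their respective nulls, the e-BH procedure at level $\alpha$ controls FDR at $\alpha$, \emph{regardless of the joint dependence} among the e-values. Since the randomization variables $T_i$ are part of the construction of $\tilde{E}_i$, any dependence they induce across coordinates is already subsumed under the ``arbitrary dependence'' regime for which e-BH is valid. Combining the two steps yields $\FDR(\rejset^{\acteBH}) \leq \alpha$.

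There is essentially no technical obstacle: the only subtlety worth a sentence of commentary is that one must resist the temptation to track the randomizations $T_i$ or the dependence between $F_i$ and $E_i$ across hypotheses — the modularity of the argument (marginal e-value validity plus arbitrary-dependence e-BH) is precisely what makes the result immediate. If a sharper expected resource-usage analysis were sought, one would additionally compute $\sum_i \expect[T_i] = \sum_i \expect[(1-\gamma F_i^{-1})_+]$, but this is auxiliary to the FDR claim itself.
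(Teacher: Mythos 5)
Your proposal is correct and matches the paper's own argument exactly: the paper likewise observes that $\tilde{\mathbf{E}}$ is a vector of bona fide e-values (by \Cref{prop:active-e-value} applied marginally) and then invokes Theorem 2 of \citet{wang_false_discovery_2022} for FDR control of e-BH under arbitrary dependence. Your additional remarks about not needing to track the joint distribution of the randomizations $T_i$ are accurate and consistent with the paper's one-line justification.
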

This follows simply from $\mathbf{\tilde{E}}$ being a vector of e-values and FDR control of the e-BH procedure from Theorem 2 of \citet{wang_false_discovery_2022}.
Thus, both BH and e-BH provide FDR control in the active setting.
We will consider extensions of these procedures with FDR control where the proxy statistics of all hypotheses are computed or altered jointly or updated iteratively after each new statistic is queried.

Note here for the $i$th hypothesis, arbitrary dependence between its proxy p-value $Q_i$ and the true p-value $P_i$ is accounted for by the design of the active p-value.
Hence, the change in FDR control is only affected by the dependence between the active p-values of different hypotheses.

\paragraph{Tuning $\gamma$ and $\eta$}\label{sec:gamma-tune} We can set up the problem of optimizing $\gamma \in [0, 1]$ as finding the optimal value to maximize growth rate of the active e-value, as is commonly done in the e-value literature for maximizing power \citep{grunwald_safe_testing_2020,ramdas_gametheoretic_statistics_2022}. In the simplest setting, we assume that our $(X, Z)$ are i.i.d.\ draws from a two groups model, i.e., we sample $(X, Z) \sim \pdist$ with probability $\pi_0 \in [0, 1]$ and an alternative distribution $(X, Z) \sim Q$ with probability $1 - \pi_0$. Let this mixture distribution be defined as $\bar{\pdist} = \pi_0 \pdist + (1 - \pi_0) Q$.
\begin{align}
    &\max_{\gamma \in [0, 1]}\ \expect[\log(\tilde{E})]\\
     = &\max_{\gamma \in [0, 1]}\ \expect[(\gamma F^{-1} \wedge 1)\log(F) + (1 - \gamma F^{-1})_+(\log \gamma E)]. 
\end{align} 
If we want to impose a sampling budget constraint $\expect[T] \le B$, we can then impose the equivalent condition that
\begin{align}
    \expect\left[ \left(1 - \frac{\gamma}{F}\right)_+ \right] \leq B.
\end{align}
Although the optimization objective is non-smooth at $\gamma = F$, we can use grid search over $[0, 1]$ and the empirical approximation of the expectation under $\bar{\pdist}$ to derive the choice of $\gamma$ in practice. We note that this optimization formulation also provides a principle for tuning $\gamma$ for active p-values, since we can directly substitute $F = Q^{-1}$ and $E = P^{-1}$ in the above objective and constraint as well.

We may also vary the hyperparameter $\eta$ for the $\tilde{P}^{\textnormal{density}}$ active p-value. 
Because the expected number of queries of the true p-value is $\expect[T] = 1 - \eta \ell_f$, the hyperparameter $\eta$ may be changed to increase or decrease the expected total computation cost. For example, setting $\eta=1$ minimizes the expected computation cost while still ensuring that $\tilde{P}^{\textnormal{density}}$ is a valid p-value, but in some cases, one may be willing to spend a higher computation cost in order to query the true p-value more often. This can be done by setting smaller $\eta$ values.

\subsection{Proxy-filter (PF) and e-PF: filtering hypotheses directly using proxies}
\label{sec:self-consistent}

In the multiple testing setting, applying FDR controlling procedures to active statistics is not the sole approach for selectively querying true statistics.
For both p-values and e-values, we can use the proxy statistics to filter out a subset of hypotheses for which we will query the true statistic, even when the proxies and true statistics are arbitrarily dependent.
In the following procedure, we will \emph{deterministically} choose which true statistics to query based on the proxies.
Define a \emph{proxy selection algorithm} $\selalg_P: [0, 1]^K \rightarrow 2^{[K]}$ (for p-values) or $\selalg_E: [0, \infty)^K \rightarrow 2^{[K]}$ (for e-values). These are functions chosen by the user to select a subset of hypotheses to query based on the proxies.

\begin{algorithm}[h]\label{alg:sc-filter}
    \caption{Proxy-Filter applies BH to p-values (and e-Proxy-Filter applies e-BH to e-values) after querying a deterministic selection of true statistics based on the proxies.}
    \KwIn{Desired FDR control $\alpha\ \in [0, 1]$. Proxy p-values $(Q_1, \dots, Q_K)$ or proxy e-values $(F_1, \dots, F_K)$.
    Proxy selection algorithm $\mathfrak{S}_P: [0, 1]^K \rightarrow 2^{[K]}$ (for p-values) or $\mathfrak{S}_E: [0, \infty)^K \rightarrow 2^{[K]}$ (for e-values).
}
    \resizebox{0.9\textwidth}{!}{ \begin{minipage}[t]{0.48\textwidth} \tcbox[hbox,nobeforeafter,tcbox raise base]{Proxy-Filter (PF)}
    \If{using p-values }{
    $\mathcal{S}=\mathfrak{S}_P(Q_1, \dots, Q_K)$.\\
    For $ i \in \mathcal{S}$, obtain $P_i$, set $\tilde{P}_i = P_i$. \\
    Set $\tilde{P}_i = 1$ for $i \not\in \mathcal{S}$.\\
    Reject $\Rcal^{\pf} \coloneqq \Rcal^{\text{BH}}(\tilde{\mathbf{P}})$.
    }
    \end{minipage}
    \hfill
    \begin{minipage}[t]{0.48\textwidth} \tcbox[hbox,nobeforeafter,tcbox raise base]{e-Proxy-Filter (e-PF)}
    \If{using e-values }{
        $\mathcal{S}=\mathfrak{S}_E(F_1, \dots, F_K)$ .\\
        For $ i \in \mathcal{S}$, obtain $E_i$ and set $\tilde{E}_i = E_i$.\\
        Set $\tilde{E}_i = 0$ for $i \not\in \mathcal{S}$.\\
        Reject $\Rcal^{\epf} \coloneqq \Rcal^{\rmEBH}(\tilde{\mathbf{E}})$.
    }
    \end{minipage}}
\end{algorithm}
Now, we note the following class of discovery sets \citep{blanchard_two_simple_2008,su_fdr_linking_2018}.
\begin{definition}\label{def:sc}
    A discovery set $\Rcal \subseteq [K]$ is called \emph{self-consistent} if $P_i \leq \alpha|\Rcal| / K$ (if using p-values) or $E_i \geq (\alpha|\Rcal|)^{-1}K$ (if using e-values) for each $i \in \Rcal$.
\end{definition}
Now, let $\selset_P \coloneqq \selalg_P(\mathbf{P})$, and $\selset_E \coloneqq \mathfrak{S}_E(\mathbf{E})$ be the selected sets.
\begin{align}
    \rejset^\pf &\coloneqq \rejset^\BH((P_1 \vee \ind{1\not\in \selset_P}, \dots, P_K \vee \ind{K\not\in \selset_P}))\\
    \rejset^\epf &\coloneqq \rejset^\eBH((E_1\cdot \ind{1\in \selset_E}, \dots, E_K \cdot \ind{K\in \selset_P})),
\end{align}
Essentially, we are applying BH and e-BH to the selected set of p-values and e-values, with the unselected statistics being set to trivial values (1 for p-values, and 0 for e-values). The full procedure is formulated in \Cref{alg:sc-filter}.
\begin{proposition}
    For a vector of p-values $\mathbf{P}$, $\Rcal^\pf$ outputs a self-consistent discovery set. Similarly, for a vector of e-values $\mathbf{E}$, $\Rcal^\epf$ outputs a self-consistent discovery set as well.
\end{proposition}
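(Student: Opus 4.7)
The plan is to appeal directly to the definitions of the BH and e-BH rejection thresholds applied to the filtered statistics, and show that an index not in the selected set cannot survive either threshold. Self-consistency will then follow because the filtered statistic agrees with the true statistic on every index in the rejection set.

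First, consider the p-value case. By construction, BH applied to $\tilde{\mathbf{P}} = (P_1 \vee \ind{1 \notin \selset_P}, \dots, P_K \vee \ind{K \notin \selset_P})$ outputs $\Rcal^\pf$ such that $\tilde{P}_i \leq \alpha |\Rcal^\pf| / K$ for every $i \in \Rcal^\pf$ (this is immediate from the definition $\Rcal^{\BH}(\tilde{\mathbf{P}}) = \{i : \tilde{P}_i \leq \alpha k^{\BH}(\tilde{\mathbf{P}})/K\}$ together with $k^{\BH}(\tilde{\mathbf{P}}) = |\Rcal^\pf|$). If $i \notin \selset_P$, then $\tilde{P}_i = 1$, while $\alpha |\Rcal^\pf|/K \leq \alpha < 1$, so $i \notin \Rcal^\pf$. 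Therefore every $i \in \Rcal^\pf$ satisfies $i \in \selset_P$, which forces $\tilde{P}_i = P_i$, and we conclude $P_i \leq \alpha |\Rcal^\pf|/K$, the defining inequality for self-consistency.

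The e-value case is symmetric. By definition of e-BH, every $i \in \Rcal^\epf$ satisfies $\tilde{E}_i \geq K/(\alpha |\Rcal^\epf|) > 0$. If $i \notin \selset_E$, then $\tilde{E}_i = E_i \cdot \ind{i \in \selset_E} = 0$, which contradicts the strict positivity of the threshold (for $\alpha > 0$). Hence $i \in \selset_E$, so $\tilde{E}_i = E_i$, giving $E_i \geq K/(\alpha |\Rcal^\epf|)$, the e-value self-consistency condition from \Cref{def:sc}.

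There is no real obstacle here: the entire argument is the observation that the trivial "padding" values ($1$ for p-values, $0$ for e-values) are strictly ruled out by the BH and e-BH thresholds whenever $\alpha \in (0, 1)$, and that on the complement the filtered and true statistics coincide. The only mild subtlety is the degenerate case $\alpha = 1$ (or $\alpha = 0$), which can be dispatched in a single sentence by noting that under the standing assumption $\alpha \in (0, 1)$ the strict inequalities $\alpha |\Rcal^\pf|/K < 1$ and $K/(\alpha |\Rcal^\epf|) > 0$ both hold, so the filter argument goes through unchanged.
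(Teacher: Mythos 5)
Your proposal is correct and follows essentially the same route as the paper's proof: BH/e-BH are self-consistent with respect to their inputs, the padding values ($1$ for p-values, $0$ for e-values) can never clear the rejection thresholds, and on the selected set the filtered and true statistics coincide. You simply spell out the threshold comparison (and the $\alpha=1$ edge case) more explicitly than the paper does.
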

\begin{proof}
    BH and e-BH always output a self-consistent discovery set w.r.t.\ their input statistics by definition, and the selection set $\mathcal{S}$ simply enforces all hypotheses outside of it will never be selected by setting their p-values or e-values to be 1 or 0, respectively. Since the $\tilde{P}_i = P_i$ and $\tilde{E}_i = E_i$ for each $i \in \mathcal{S}$, applying BH and e-BH, respectively, will output self-consistent discovery sets relative to $\mathbf{P}$ and $\mathbf{E}$, respectively.
\end{proof}
\begin{theorem}
    $\FDR(\Rcal^\epf(\mathbf{E})) \leq \alpha$ for arbitrarily dependent e-values $\mathbf{E}$. We have the following bounds on the FDR for the proxy-filter discovery set when using p-values $\mathbf{P}$:
    \begin{align}
        \FDR(\Rcal^\pf(\mathbf{P}))\leq
        \begin{cases}
            \alpha(1 + \log(\alpha^{-1})) &
            \text{independent or positively dependent (\Cref{sec:prds-def})}\\
            \alpha(3.18 + \log(\alpha^{-1})) & \text{WNDN; negatively dependent (\Cref{sec:prds-def})}\\
            \alpha \ell_K & \text{arbitrarily dependent}
        \end{cases}.
    \end{align}
\end{theorem}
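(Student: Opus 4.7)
The plan is to reduce both claims to existing FDR-control results for e-BH and BH applied to \emph{valid} e-values and p-values. The key observation is that the ``censored'' statistics actually fed into BH and e-BH --- namely $\tilde{P}_i \coloneqq P_i \vee \ind{i \notin \selset_P}$ and $\tilde{E}_i \coloneqq E_i \cdot \ind{i \in \selset_E}$ --- are themselves valid p-values and e-values under the null, regardless of how $\selset_P$ and $\selset_E$ are derived from the proxies $\mathbf{Q}$ and $\mathbf{F}$.

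For e-PF the reduction is nearly immediate: since $0 \le \ind{i \in \selset_E} \le 1$, under the null $\expect[\tilde{E}_i] \le \expect[E_i] \le 1$, so $\tilde{\mathbf{E}}$ is a vector of bona fide e-values. Applying Theorem 2 of \citet{wang_false_discovery_2022} to $\tilde{\mathbf{E}}$ then yields $\FDR(\rejset^\epf) \le \alpha$ under arbitrary dependence of $\mathbf{E}$ and $\mathbf{F}$. For PF, validity of $\tilde{P}_i$ under the null follows from $\prob{\tilde{P}_i \le s} = \prob{P_i \le s,\ i \in \selset_P} \le \prob{P_i \le s} \le s$ for $s \in [0,1)$. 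Thus $\rejset^\pf$ is just the BH rejection set on valid p-values, and each of the three bounds should follow by invoking the corresponding BH-type FDR theorem: the Benjamini--Yekutieli bound $\alpha \ell_K$ for arbitrarily dependent $\mathbf{P}$, and the $\alpha(1 + \log \alpha^{-1})$ and $\alpha(3.18 + \log \alpha^{-1})$ bounds for PRDN and WNDN. Because these match the bounds in \Cref{thm:active-bh}, the plan is to reuse the machinery of the active-BH proof (\Cref{sec:active-bh-proof}), replacing the randomized active p-value $(1-T)Q + T(1-\gamma)^{-1}P$ by the deterministic censoring $P \vee \ind{i \notin \selset_P}$; both constructions preserve super-uniformity by coordinate-wise monotone transformations of $P_i$.

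The delicate step --- and the one I expect to be the main obstacle --- is checking that the dependence regime assumed on $\mathbf{P}$ (PRDN or WNDN) transfers to $\tilde{\mathbf{P}}$ in the form required by those BH theorems. Since $\selset_P$ is a joint function of all proxies $\mathbf{Q}$, the events $\{i \in \selset_P\}$ can couple the coordinates of $\tilde{\mathbf{P}}$ in ways not present in $\mathbf{P}$ itself. The cleanest workaround I see is to condition on $\mathbf{Q}$: conditionally on $\mathbf{Q}$ the map $P_i \mapsto \tilde{P}_i$ is a coordinate-wise monotone function (either identity or inflate-to-$1$), so any PRDN/WNDN-type monotone coupling of $\mathbf{P}$ given $\mathbf{Q}$ transfers to $\tilde{\mathbf{P}}$ given $\mathbf{Q}$, after which the total FDR bound follows by integrating out $\mathbf{Q}$. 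Verifying that the precise conditional notion of dependence matches the PRDN/WNDN definitions in \Cref{sec:prds-def} is the step I expect to require the most care; if the definitions are not automatically stable under this conditioning, the fallback is to invoke only the arbitrary-dependence bound $\alpha \ell_K$ and argue the PRDN/WNDN cases by a direct sum-over-nulls bound that mimics the active-BH analysis.
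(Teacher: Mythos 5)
Your e-PF argument is fine and essentially matches the paper's (e-BH controls FDR under arbitrary dependence whether you view $\tilde{\mathbf{E}}$ as valid e-values or view $\rejset^\epf$ as self-consistent w.r.t.\ $\mathbf{E}$), and your arbitrary-dependence bound $\alpha\ell_K$ for PF also goes through, since the Benjamini--Yekutieli-type bound only needs marginal validity of the inputs and $\tilde{P}_i \geq P_i$ gives that. But the step you correctly flag as delicate --- transferring PRDN/WNDN from $\mathbf{P}$ to the censored vector $\tilde{\mathbf{P}}$ --- is a genuine gap, and your proposed workaround of conditioning on $\mathbf{Q}$ does not close it. The procedure places no independence assumption between the proxies and the true p-values: $\mathbf{Q}$ may be arbitrarily dependent on $\mathbf{P}$, so conditionally on $\mathbf{Q}$ the $P_i$ need not be super-uniform, let alone PRDN or WNDN. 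The "coordinate-wise monotone transformation" framing also does not apply, because $P_i \vee \ind{i \notin \selset_P}$ is not a deterministic monotone function of $P_i$ alone --- the censoring indicator is a function of all of $\mathbf{Q}$, which is itself random and coupled to $\mathbf{P}$. Integrating a conditional FDR bound over $\mathbf{Q}$ therefore does not yield the stated $\alpha(1+\log\alpha^{-1})$ and $\alpha(3.18+\log\alpha^{-1})$ bounds.

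The paper's route avoids the issue entirely by never analyzing $\tilde{\mathbf{P}}$ as a p-value vector. The preceding proposition establishes that $\rejset^\pf$ is \emph{self-consistent with respect to the original} $\mathbf{P}$: every rejected index lies in $\selset_P$, where $\tilde{P}_i = P_i$, and BH's output is self-consistent w.r.t.\ its input, so $P_i \leq \alpha|\rejset^\pf|/K$ for all $i \in \rejset^\pf$. The cited results (Theorem 3 of \citet{su_fdr_linking_2018} for PRDN and arbitrary dependence, Proposition 3.6 of \citet{fischer_online_generalization_2024} for WNDN, Theorem 2 of \citet{wang_false_discovery_2022} for e-values) bound the FDR of \emph{any} self-consistent discovery set under the stated dependence assumption on $\mathbf{P}$ itself, no matter how that set was selected --- even using side information arbitrarily dependent on $\mathbf{P}$. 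This is the key lemma your reduction is missing: once self-consistency w.r.t.\ $\mathbf{P}$ is in hand, no dependence-transfer argument is needed, and the selection mechanism is irrelevant.
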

\begin{proof}
    Proof of the FDR bound of e-BH comes from FDR control of self-consistent procedures in Theorem 2 of \citet{wang_false_discovery_2022}. FDR control in the independent, PRDS, and arbitrary dependent cases for p-values comes from the FDR bound for self-consistent p-value procedures in \citet{su_fdr_linking_2018}. The remaining negative dependence FDR bound comes from Proposition 3.6 of \citet{fischer_online_generalization_2024}.
\end{proof}

For p-values, proxy filtering results in inflation of the level of guaranteed FDR control, even under independence. For e-values, there is no inflation of FDR control with the proxy filter procedure; however, unlike with the other active multiple testing procedures, proxy filtering does not use the proxy statistic for calculating the final statistic.

\section{Numerical simulations}\label{sec:simulations}

We conduct numerical simulations to demonstrate that the active statistics we have defined provide power while saving significantly in the number of queries.

\begin{figure}[ht]
    \centering
    \begin{subfigure}[t]{0.11\textwidth}
        \includegraphics[width=\textwidth,trim=0 0 2in 0, clip]{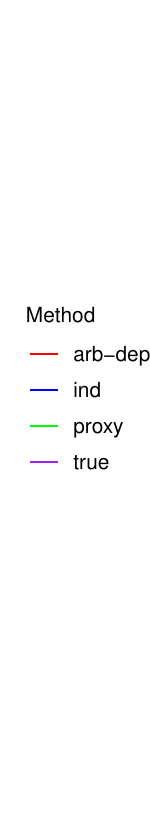}
    \end{subfigure}
    \begin{subfigure}[t]{0.28\textwidth}
        \centering
        \includegraphics[width=0.8\textwidth,trim=1in 0 0 0, clip]{figures/pval_corr/cor_plot_cdf_null.pdf}
        \caption{True null ($\mu = 0$).}
    \end{subfigure}\hfill
    \begin{subfigure}[t]{0.28\textwidth}
        \centering
        \includegraphics[width=0.8\textwidth,trim=1in 0 0 0, clip]{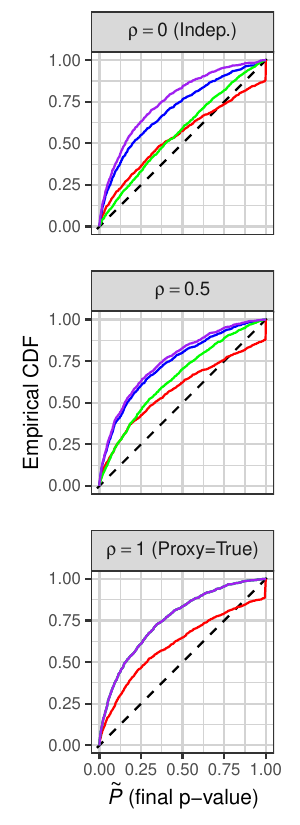}
        \caption{False null ($\mu > 0$).}
    \end{subfigure}\hfill
    \begin{subfigure}[t]{0.28\textwidth}
        \centering
        \includegraphics[width=0.835\textwidth,trim=1.2in 0 0 0, clip]{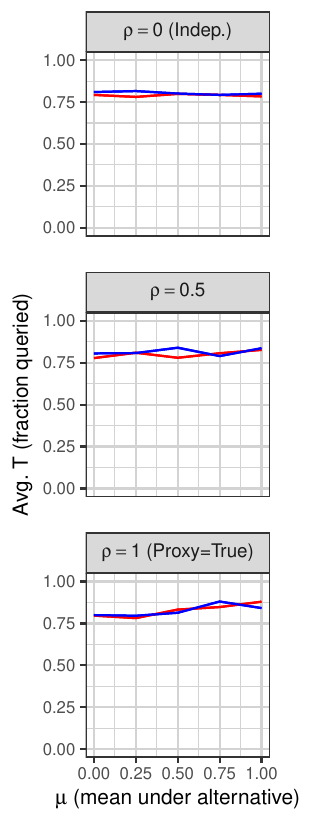}
        \caption{Avg. number of queries.}
    \end{subfigure}
    \caption{Plots of the p-values' empirical c.d.f. and their query frequencies for different active p-values.  Black dotted line marks c.d.f. of the uniform distribution. The proxy p-value is invalid (above uniform) while both ``ind'' and ``arb-dep'' active p-values are valid (below uniform) under the null. Both active p-values are powerful when the null is false and have similar average queries across different values of $\mu$.}
    \label{fig:pvalue-corr-gaussian-sim}
\end{figure}

We will call the active p-value in \eqref{eq:density-active-pvalue} the ``ind'' active p-value, and the active p-value formulated in \eqref{eq:arbdep-active-pvalue} as the ``arb-dep'' active p-value.
In this setting, $Z$ and $X$ are jointly a multivariate Gaussian with positive correlation $\rho$. We let $\expect[Z] = 0$ and $\expect[X] = \mu_{X, 0} = 0.3$ under the null, and $\expect[Z] = \rho \mu$ and $\expect[X] = \mu$ under the alternative. We choose different values of $\mu > 0$ under the alternative.
We define our proxy and true p-values as follows:
\begin{align}
   Q \coloneqq 1 - \Phi(X), \qquad  P  \coloneqq 1 - \Phi(Z)
\end{align} where $\Phi$ is the c.d.f. of the standard normal distribution. As a result, we note that the density of $Q$ under the null hypothesis is the following:
\begin{align}
    f(q) \coloneqq \frac{\varphi(\Phi^{-1}(1 - q))}{\varphi(\Phi^{-1}(1 - q) - \mu_{X, 0})},
\end{align} where $\varphi$ is the p.d.f.\ of the standard normal and $\mu_{X, 0}$ is the mean of $X$ when the null is true, i.e., $\expect[Z] = 0$. We numerically solve for a lower bound on $f(q)$ and find that it is approximately $\eta \ell_f = 0.1876$. We average our following results from 1000 trials.

We see in \Cref{fig:pvalue-corr-gaussian-sim} that both active p-value methods retain a superuniform distribution under the null, although the ``arb-dep'' method is more conservative. We also see that both active p-value methods have power under the alternative distribution where $\mu = 1$. Further, we see that both methods use a similar frequency of queries on average, even across different choices of $\mu$.

\section{Testing for causal effects in scCRISPR screen experiments}
\label{sec:scCRISPR}

\begin{figure}[ht]
    \centering
    \includegraphics[width=.75\textwidth]{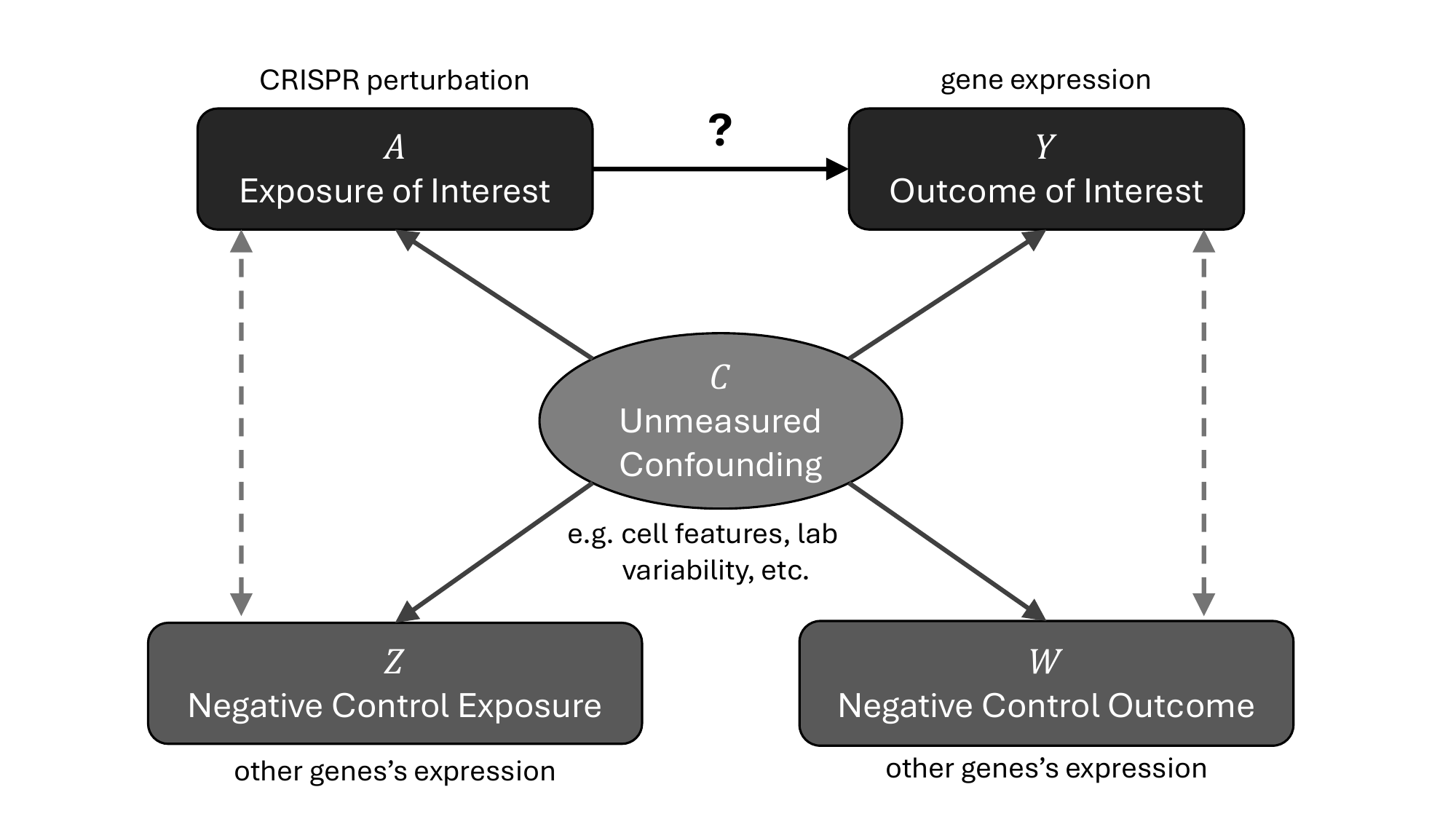}
    \caption{A directed acyclic graph that illustrates the causal relationships assumed in the proximal causal inference framework. In particular, it highlights the relationship between the negative control exposures and negative control outcomes with the exposure and outcome. Solid arrows represent causal relationships, and their absence indicates that there is no causal relationship. Dashed arrows represent allowed causal relationships. The question mark highlights the causal relationship we wish to test.
    \label{fig:proxDAG}
    }
\end{figure}

CRISPR is a tool that allows scientists to target and perturb specific regions of the DNA, and single-cell RNA-seq is a tool that allows scientists to measure the expression levels of genes in individual cells.
Together, single-cell CRISPR (scCRISPR) screen or Perturb-seq (where the gene perturbations are induced) experiments can be performed to identify and measure how one region of the DNA may regulate the expression of a full panel of genes (\citealp{Kampmann:2020}; \citealp{Hong:2023}; \citealp{Cheng:2023}).
Thus, in scCRISPR screens, scientists wish to test for the causal effect of a CRISPR perturbation on a gene's expression. However, there is often unmeasured confounding that biases the results, e.g., varying laboratory conditions, different cell characteristics such as cell phase or cell size, etc. Unmeasured confounding renders naive estimates of the causal effect (e.g., the coefficient of ordinary least squares) invalid as a result.

We propose accounting for unmeasured confounding via proximal causal inference \citep{miao_confounding_2020, cui_semiparametric_2023}.
However, computation of the unbiased proximal causal statistic is much more expensive than running a fast but biased least squares test.
In terms of sample size $n$ and number of negative controls $d$, computing the proximal causal statistic has a runtime complexity of $O(nd^4 + d^6)$ compared to a least squares test that only requires $O(n)$ (see \Cref{sec:appendix:scCRISPR:2SLSdetails}).
Empirically, we observe that computing each proximal causal statistic takes about 34 seconds, whereas the fast but biased test requires only .0046 seconds.
Because we wish to use proximal causal inference on hundreds of thousands of hypotheses (i.e., each perturbation paired with each gene), computing the proximal causal statistics can be quite computationally expensive.
Thus, we can use active statistics to avoid computing the full proximal causal p-value for each test, for which we expect the vast majority to be truly null, while still retaining power and discovery of most of the important causal relationships between a perturbation and gene expression.

\subsection{Overview of proximal causal inference for scCRISPR screens}\label{sec:scCRISPR:intro}
Proximal causal inference assumes that one has access to other random variables that are also affected by the unmeasured confounding.
These random variables are called \emph{negative controls} and we use them to produce unbiased causal effect estimates and valid p-values.
\Cref{fig:proxDAG} depicts a causal directed acyclic graph (DAG) that satisfies the proximal causal inference setup.

Let $A \in \{0, 1\}$ be an indicator random variable of whether we applied a perturbation to a gene, and $Y \in \reals$ denote the expression of that gene.
The goal is to test if $A$ has a causal effect on $Y$.
We assume we have access to negative control exposures (NCEs), $Z$, and negative control outcomes (NCOs), $W$.
Both types of negative controls are causally affected by the unmeasured confounding $C$, but the NCEs can only be causally related to $A$ while the NCOs can only be causally related to $Y$.
Our goal is to test if the average treatment effect (ATE) is nonzero, which is defined as $\psi\coloneqq \expect[Y^1] -\expect[Y^0]$. Here, $Y^a$ is the potential outcome if we set the treatment to level $a$, for $a \in \{0,1\}$ --- $Y^1$ is the cell's gene expression if the perturbation were to be applied, and $Y^0$ if it were not. Formally, we test the hypothesis
\begin{align}
    H_0  : \psi = 0\text{ vs. }H_1  : \psi \neq 0.
\end{align}
We can construct p-values by calculating an estimator of the ATE, $\hat\psi$, and deriving its asymptotic distribution.

Under the standard assumptions of \emph{consistency} ($Y = Y^a$ when $A=a$), \emph{positivity} (there exists $\varepsilon > 0$ s.t.\ $\prob{A=a|C} > \varepsilon$ for all $a \in \{0, 1\}$), and \emph{no unmeasured confounding} (all confounders $C$ are measured and $Y^1, Y^0 \indep A \mid C$),
the mean potential outcomes can be identified with
\begin{align}
\expect[Y^a]
= \expect[\expect[Y^a| C]]
= \expect[\expect[Y^a|A = a,C]]
= \expect[\expect[Y|A = a,C]]
\end{align}

However, when we cannot measure the confounders $C$, $\expect[Y \mid A = a, C]$ is no longer identifiable. Instead, in proximal causal inference, we make the following main additional assumptions, taken from \citet{miao_confounding_2020}.
\begin{assumption}[Proximal causal inference assumptions]
    We make the following assumptions about the $Y, A, C, W$, and $Z$.
\begin{enumerate}
    \item (negative control outcome) $W \indep A \mid C$ and $W \not\indep C$ .
    \item (negative control exposure) $Z \indep Y \mid C, A$ and $Z \indep W \mid C$.
\end{enumerate}
\end{assumption}
The comprehensive set of assumptions we make to conduct proximal causal inference is enumerated in \Cref{sec:appendix:scCRISPR:proximalassumptions}.

At a high level, proximal causal inference uses the existence of these negative controls to remove the influence of confounders. However, a major challenge in applying proximal inference is forming both valid and strong negative control exposures and outcomes.

Under a linear setting, \cite{miao_confounding_2020} and \cite{tchetgen_introduction_proximal_2020} show that the ATE may be estimated by a two-stage least squares (2SLS) procedure.  In the first stage, the negative control outcome $W$ is regressed onto the treatment $A$ and negative control exposures $Z$.
Then, in the second stage, the outcome $Y$ is regressed onto the treatment $A$ and the estimated values of $W$ from the first stage.
The estimated coefficient of $A$ in the second stage is the estimate of the ATE.
We use the implementation by \citep{liu_regression-based_2024} which also provides statistical results including an asymptotic distribution and p-value.

The biased but cheap p-value is the p-value from a naive linear regression comparing the gene expression $Y$ to perturbation assignment $A$.
The desirable unbiased but expensive p-value is the p-value from proximal causal inference using the outcome regression model for proximal inference implemented by \cite{liu_regression-based_2024}.
For sample size $n$, let the response vector of gene expression be $\vv{\mathbf{Y}} \in \reals^n$, the perturbation assignment be $\vv{\mathbf{A}} \in \{0, 1\}^n$, the negative control exposures be $\mathbf{Z} \in \reals^{n\times d}$, and the negative control outcomes be $\mathbf{W} \in \reals^{n\times d}$.

We formulate how we compute the proxy and true p-value in \Cref{alg:2sls}. $F_{\text{df}}$ denotes the c.d.f.\ of the t-distribution with df degrees of freedom and $\Phi$ is the c.d.f.\ of the standard normal distribution.
\Cref{sec:appendix:scCRISPR} provides more details about how the standard errors are derived and estimated, and about the runtime complexity of the true and biased tests.
\begin{algorithm}[h]\label{alg:2sls}
    \caption{Algorithm for calculating proxy and true p-values using 2SLS}
\SetAlgoLined
\KwData{Gene expression vector $\vv{\mathbf{Y}} \in \reals^n$, perturbation assignment $\vv{\mathbf{A}} \in \{0, 1\}^n$, negative control exposures $\mathbf{Z} \in \reals^{n\times d}$, negative control outcomes $\mathbf{W} \in \reals^{n\times d}$.}
\KwOut{Active p-value $\tilde{P}$.}
Compute
$\hat\beta^\OLS \coloneqq (\mathbf{M}^T \mathbf{M})^{-1} \mathbf{M}^T \vv{\mathbf{Y}}\text{ where } \mathbf{M} = [\mathbf{1}, \vv{\mathbf{A}}]$
(naive OLS).\\
Compute
$\hat{\sigma}^\OLS \coloneqq \sqrt{\left((\mathbf{M}^T \mathbf{M})^{-1}  \hat\sigma^2\right)_{A,A}}$ where $\hat\sigma = \sqrt{\sum_i^n (Y - \mathbf{M}\beta)^2/(n-2)}$.\\
$Q = 2 \cdot F_{n-2}(-|\hat\psi^\OLS|/ \hat{\sigma}^\OLS) \text{ where } \hat\psi^\OLS \coloneqq \hat\beta^\OLS_A$.\\
Sample $T \sim \text{Bern}(1 - \gamma Q)$.\\
\eIf{$T = 1$}{
Compute $\mathbf{S} \coloneqq (\tilde{\mathbf{Z}}^T \tilde{\mathbf{Z}})^{-1} \tilde{\mathbf{Z}}^T \mathbf{W} \text{ where } \tilde{\mathbf{Z}} = [\mathbf{1}, \vv{\mathbf{A}}, \mathbf{Z}]$. (first stage OLS)\\
Compute $\hat\beta^\TSLS \coloneqq (\tilde{\mathbf{S}}^T \tilde{\mathbf{S}})^{-1} \tilde{\mathbf{S}}^T \vv{\mathbf{Y}}\text{ where } \tilde{\mathbf{S}} = [\mathbf{1}, \vv{\mathbf{A}}, \mathbf{S}]$. (second stage OLS)\\
Compute $\hat\sigma^\TSLS$ (\underline{computationally expensive step} --- see \Cref{sec:appendix:scCRISPR}).\\
$P = 2 \cdot \Phi(-|\hat\psi^\TSLS|/ \hat{\sigma}^\TSLS) \text{ where } \hat\psi^\TSLS \coloneqq \hat\beta^\TSLS_A \label{eq:2sls-pvalue}$.\\
Set $\tilde{P} = P$.}
{Set $\tilde{P} = Q$.}
\end{algorithm}
The true p-value we use in \Cref{alg:2sls} and our resulting active p-value is valid asymptotically (in the sense of sample size), as stated in the following result.
\begin{proposition}
    Under a linear structural equation model (eqs. (4) and (5) of \cite{liu_regression-based_2024}, which satisfy the proximal causal inference assumptions \Cref{sec:appendix:scCRISPR:proximalassumptions}) $\expect[Y|A, Z, U] = \beta_0 + \beta_a A + \beta_u U$ and $\expect[W|A,Z,U] = \alpha_0 + \alpha_u U$, the p-value $P$ calculated by \Cref{alg:2sls} is asymptotically a valid p-value.
\end{proposition}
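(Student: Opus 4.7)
The plan is to reduce the claim to classical asymptotic theory for the two-stage least squares (2SLS) estimator. Under the linear structural equation model $\expect[Y\mid A,Z,U]=\beta_0+\beta_a A + \beta_u U$ and $\expect[W\mid A,Z,U]=\alpha_0+\alpha_u U$, the proximal causal inference assumptions (\Cref{sec:appendix:scCRISPR:proximalassumptions}) imply that $\psi=\beta_a$ is identified by the population analogue of the 2SLS regression in \Cref{alg:2sls}; this is exactly Theorem 1 / the discussion around eqs.\ (4)--(5) of \citet{liu_regression-based_2024}. I would start by stating this identification result and fixing notation for the population projection coefficients that $\hat\beta^{\TSLS}$ estimates.

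Next, I would invoke the standard (Liu et al.) asymptotic theory for the regression-based proximal estimator: under i.i.d.\ sampling and standard moment/rank conditions (non-degeneracy of $\mathbf{M}^T\mathbf{M}/n$ and $\tilde{\mathbf{Z}}^T\tilde{\mathbf{Z}}/n$, finite fourth moments, and the proxy-strength condition that the instrument matrix has full column rank in the limit), one has
\begin{align}
\sqrt{n}\bigl(\hat\psi^{\TSLS}-\psi\bigr)\xrightarrow{d}\mathcal{N}(0,V),
\end{align}
where $V$ is the 2SLS asymptotic variance that accounts for the plug-in of the first-stage fit. The sandwich estimator $\hat\sigma^{\TSLS}$ used in \Cref{alg:2sls} is constructed (see \Cref{sec:appendix:scCRISPR}) precisely to be consistent for $V/n$, so $n\,(\hat\sigma^{\TSLS})^2\xrightarrow{p}V$ by continuous mapping and the usual law-of-large-numbers arguments applied to the residual moment matrices. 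This is the step I would write out most carefully since it is the one piece that requires more than citing a black-box result.

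Combining the previous two steps via Slutsky's theorem, under the null $H_0:\psi=0$,
\begin{align}
\frac{\hat\psi^{\TSLS}}{\hat\sigma^{\TSLS}}\xrightarrow{d}\mathcal{N}(0,1).
\end{align}
By continuity of $\Phi$ and the portmanteau theorem, the two-sided p-value $P=2\,\Phi\!\bigl(-|\hat\psi^{\TSLS}|/\hat\sigma^{\TSLS}\bigr)$ converges in distribution to $\mathrm{Unif}[0,1]$ under $H_0$, so for every $s\in[0,1]$, $\limsup_{n\to\infty}\prob{P\leq s}\leq s$; this is exactly asymptotic validity.

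The main obstacle is establishing that $\hat\sigma^{\TSLS}$ is a consistent estimator of the 2SLS sandwich variance rather than of the naive second-stage residual variance; one must correctly propagate the estimation error of $\mathbf{S}$ from the first stage into the second-stage residuals, which is precisely what \citet{liu_regression-based_2024} carry out. Everything else is a citation plus Slutsky. Finally, although the stated claim concerns only $P$, I would note (as a corollary to \Cref{prop: active arb dep pvalue valid}) that plugging this asymptotically valid $P$ together with the arbitrary proxy $Q$ into the active p-value construction \eqref{eq:arbdep-active-pvalue} preserves asymptotic superuniformity of $\tilde P$, which is what the downstream FDR analysis actually uses.
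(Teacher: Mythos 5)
Your proposal is correct and follows essentially the same route as the paper: the paper's proof is a one-line citation of Appendix A.9 of \citet{liu_regression-based_2024} for the asymptotic standard normality of $\hat\psi^{\TSLS}/\hat\sigma^{\TSLS}$, from which uniformity of $P$ under the null follows immediately. Your version simply unpacks that citation into its standard constituent steps (identification, asymptotic normality, sandwich-variance consistency, Slutsky, continuous mapping), all of which the paper delegates to the cited reference.
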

\begin{proof}
    Appendix A.9 of \cite{liu_regression-based_2024} shows that the test statistic $\hat\psi^\TSLS / \hat{\sigma}^\TSLS$ has an asymptotic standard normal distribution, which means that the true p-value $P$ in \Cref{alg:2sls} converges in distribution to a Uniform$[0,1]$ and is asymptotically valid.
\end{proof}

\subsection{Real data from an scCRISPR screen experiment}\label{sec:scCRISPR:res}

\begin{figure}[ht]
    \centering
    \includegraphics[height=4.8cm, trim=0 0 1.5in .3in, clip]{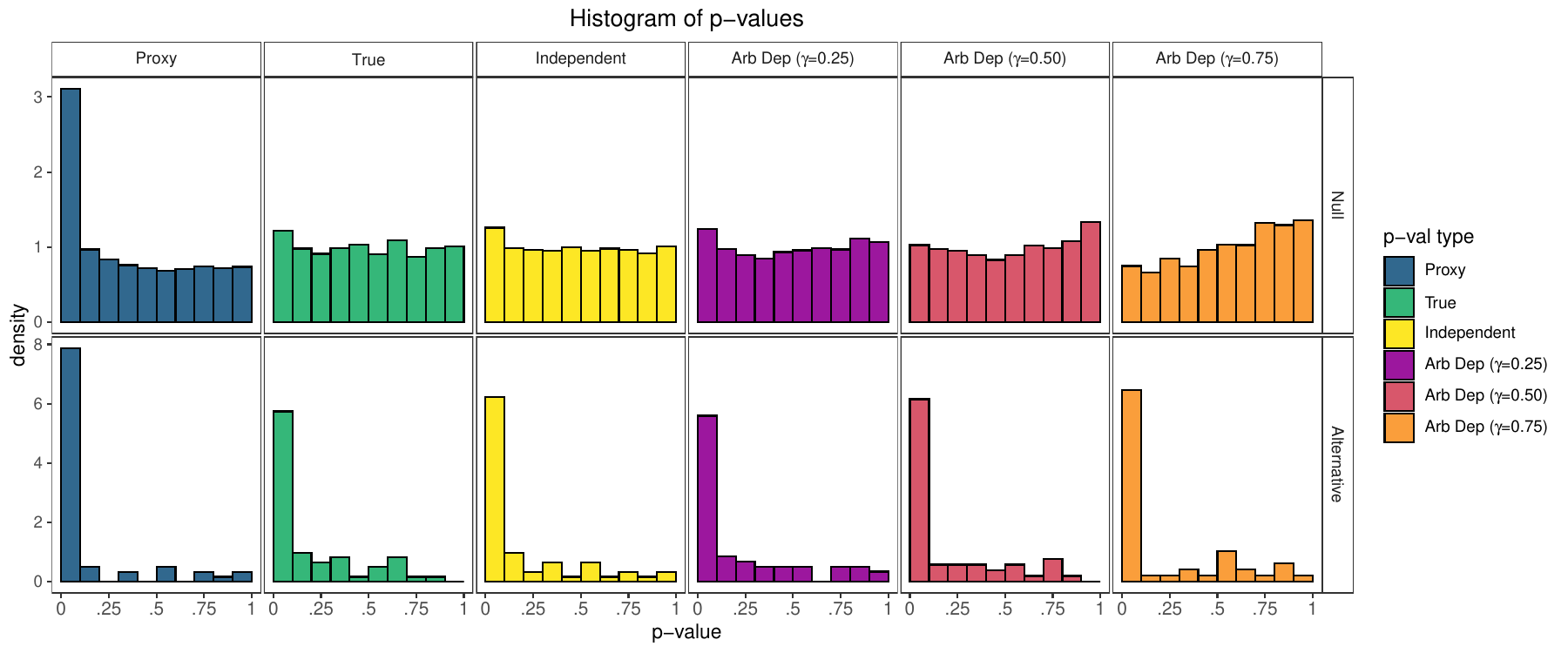}\hfill \includegraphics[height=4.5cm, trim=0 0 1.5in .3in, clip]{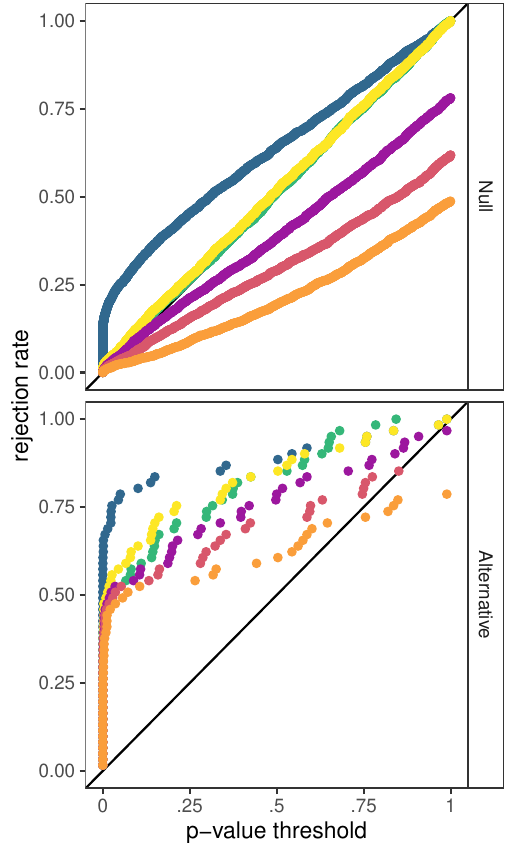}
    \includegraphics[height=4.2cm, trim=10.5in 0in 0  1.2in, clip]{figures/papalexi/all_hist01.pdf}
    \caption{Histogram and rejection rate plots of p-values from the scCRISPR screen experiment in \citet{Papalexi2021}. ``Proxy'' refers to the naive comparison of gene expression without accounting for confounding. ``True'' refers to the proximal causal inference method accounting for unmeasured confounding. We consider the independent with known density active p-value \eqref{eq:density-active-pvalue} and the arbitrary dependence active p-values with varying levels of $\gamma$ \eqref{eq:arbdep-active-pvalue}. The top row shows results for mostly null hypothesis tests of interest, and the bottom row shows results under the alternative.
    }
    \label{fig:papalexi-results}
\end{figure}

\begin{figure}[ht]
    \centering
    \begin{subfigure}[t]{.4\textwidth}
        \includegraphics[height=6cm, center, trim=0 0 1.25in 0, clip]{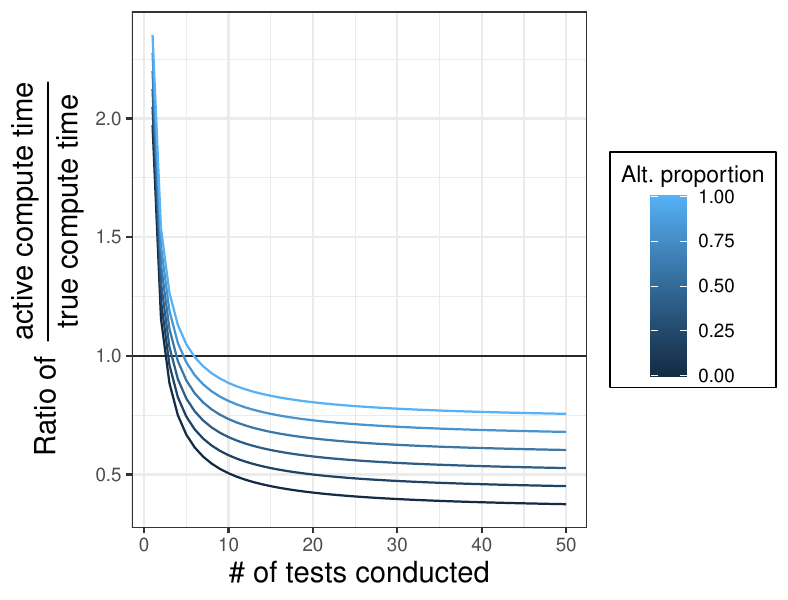}
    \end{subfigure}\begin{subfigure}[t]{.4\textwidth}
        \includegraphics[height=6cm, center, trim=0 0 1.25in 0, clip]{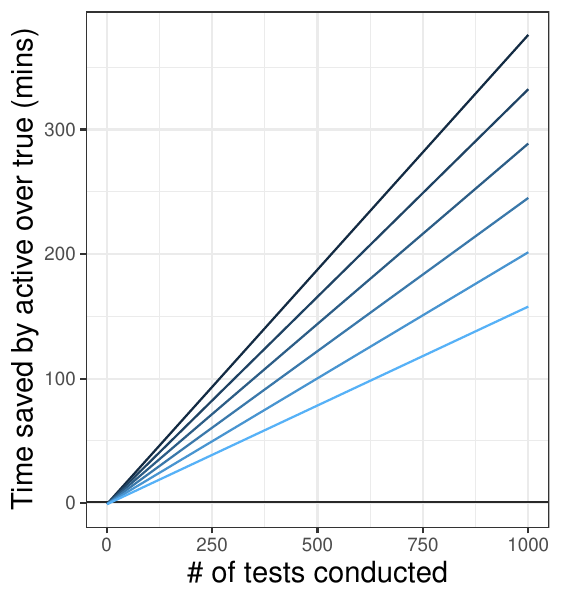}
    \end{subfigure}
    \begin{subfigure}[t]{.1\textwidth}
        \includegraphics[height=6cm, left, trim=3.75in 0 0in 0, clip]{figures/papalexi/comp_time_diff.pdf}
    \end{subfigure}
    \caption{Plots of the reduction in computation time with the active p-value method in \eqref{eq:density-active-pvalue} compared to computation time when using the true, expensive procedure on all tests. We estimate computation time using the observed value in the scCRISPR screen analysis.
    The true p-value was queried in 34\% of the null hypothesis tests and 72\% of the alternative hypothesis tests, resulting in a large decrease in computation time as more hypotheses are tested.
}
    \label{fig:papalexi-comptime}
\end{figure}

We apply our methods to data from a controlled experiment performed by \cite{Papalexi2021} where there are around 20,000 genes measured across approximately 18,000 cells. Each cell receives one of 110 perturbations, 101 of which are targeting perturbations and 9 of which are non-targeting. The targeting perturbations should cause an effect on a subset of the measured genes, including their targeted genes, and the non-targeting perturbations should not cause an effect on any gene.

For a particular perturbation-gene test, we use the cells that received this perturbation as treatment cells, $A=1$, and
we use the cells that received any non-targeting perturbation as control cells, $A=0$.
We restrict our investigation to the top 2000 ``most important genes", which are determined by \cite{townes_feature_2019}; these are often called highly variable genes. We also only investigate perturbations that were applied to $200$ or more cells.
For a perturbation applied to a given gene, we anticipate an effect in the response of that gene. Occasionally, when the perturbation is ineffective, these tests will be null, but a majority should be alternative, and we will assume that the alternative is true for all genes where a perturbation is applied. These resulting $61$ featured tests are considered to be ``alternative" tests.
On the other hand, the vast majority of genes will be unaffected by perturbations, and we will assume that the outcomes of these tests are drawn from a null distribution (see \Cref{fig:papalexi-results}). There are hundreds of thousands of these ``null'' tests, and for simplicity we randomly choose 4300 of them to visualize.

We consider the p-values resulting from a naive comparison of gene expression values as proxy p-values, and we consider the p-values from proximal inference, which should account for unmeasured confounding, as true p-values.
To derive $f$, the density of $Q$, we learn a local polynomial density estimator \citep{cattaneo_2020}, $\hat{f}$.
This is trained on $2000$ randomly chosen realizations of $Q$ that correspond to perturbation-gene tests that are assumed to be null,
which leaves $2300$ tests to evaluate the active procedure.
We directly use $f = \hat{f}$ as the density of $Q$ in our active p-value in \eqref{eq:density-active-pvalue}. Further, we use the lower bound on our estimated density $\ell_{f} = \ell_{\hat{f}}$ and choose $\eta = 1$.

\paragraph{Results} \Cref{fig:papalexi-results} shows the results of the data analysis on the scCRISPR data.
Of the 61 alternative hypothesis tests, the active p-values retain good power.
Of the remaining 2300 null hypothesis tests, the biased proxy p-values are generally stochastically smaller than uniform. Empirically, we can see that they are invalid p-values. By contrast, the active p-value method produces p-values that are relatively uniform, and hence seem valid.
Further, this is evidence that the resulting active p-values can be robust to potential violations of the assumptions of proximal causal inference. The original scCRISPR experiment may include gene to gene or perturbation to gene causal pathways that violate the assumptions of our negative controls.
Second, we violate the active p-value assumption that the density of the null hypothesis tests may be known or estimated. In this case, we have a set of hypothesis tests that we deem ``null'' even though they may be contaminated with non-null hypotheses.
Third, the validity guarantee in \Cref{prop:density-act-pvalue-valid} requires that the proxy and true p-value are independent.
However, we see that the resulting active p-values for known null tests seem valid despite a small but nonzero empirical Spearman rank correlation of $0.25$.

Additionally, we investigate the reduction in computation time. \Cref{fig:papalexi-comptime} shows the ratio of active computation time to the standard true computation time (using the true test on all tests) across different proportions of null and alternative tests. The reduction in computation time may vary depending on the setting, such as the proxy time, the true time, the proportion of hypotheses where true p-values were queried, etc. We calculated the computation time based on empirical time benchmarks in this scCRISPR screen experiment analysis. Initially, the active procedure includes a fixed cost from the estimation of some null proxies and their density. As the number of hypothesis tests increases, the reduction of computation time converges to a proportion that is dependent on the distribution of null proxy p-values, proportion of alternative tests, and difference between proxy and true p-value computation cost.

\subsection{Synthetic experiments via rank-correlated p-values}\label{sec:scCRISPR:synthetic}

\begin{figure}[ht]
    \centering
    \includegraphics[height=7cm, center]{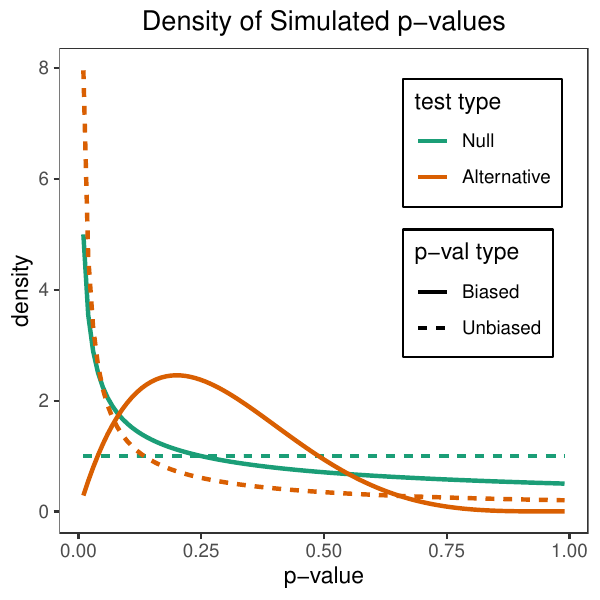}
    \caption{Density of $Q$ and $P$ for semi-synthetic simulations that approximates the empirical distribution of experimental data in \citet{Papalexi2021}.}
    \label{fig:simBeta_setup}
\end{figure}

\begin{figure}[ht]
    \centering
    \begin{subfigure}[t]{.45\textwidth}
        \centering
\includegraphics[height=6cm, center, trim=0 0 0 .31in, clip]{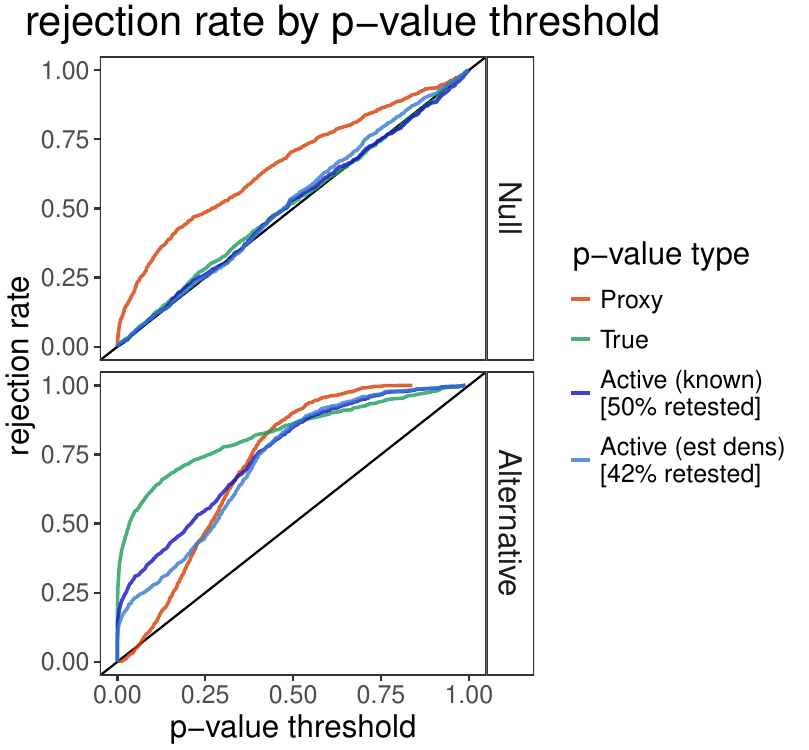}
        \caption{Independent $Q$ and $P$}
        \label{fig:simBeta_qqplot}
    \end{subfigure}\hfill\begin{subfigure}[t]{.45\textwidth}
        \centering
\includegraphics[height=6cm, left, trim=0 0 0 .31in, clip]{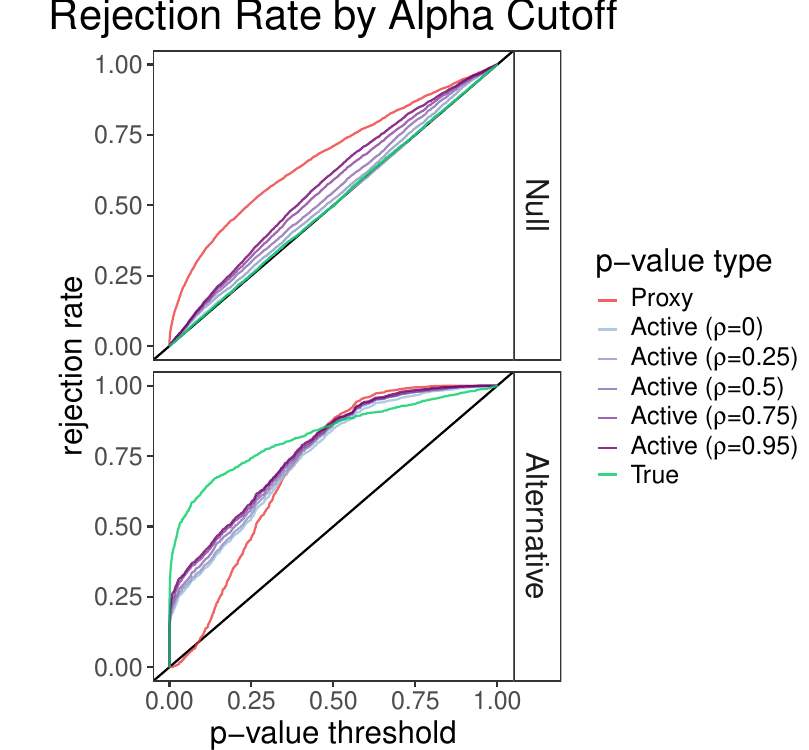}
        \caption{Correlated $P$ and $Q$ ($\rho$ is rank correlation)}
        \label{fig:simBeta_cor}
    \end{subfigure}

    \caption{
Simulation results for independent and rank correlated proxy and true p-values.
    The top row shows results for null hypothesis tests, and the bottom row shows results for alternative hypothesis tests.
    The QQ-plots compare the p-values to a Uniform[0,1].
    }\label{fig:simBeta}
\end{figure}

We consider simulation setups similar to the observed features in the scCRISPR screen experiment.
We consider $2000$ hypothesis tests with $1000$ null hypotheses and $1000$ alternative hypotheses, and
we let the marginal distributions of $Q$ (proxy) and $P$ (true) be
\begin{align}
    Q &\sim \text{Beta}(0.5, 1)& \text{ and }&& P &\sim \text{Uniform}[0, 1] \text{ under the null,}\\
    Q &\sim \text{Beta}(2, 5)& \text{ and }&& P &\sim \text{Beta}(0.2, 1) \text{ under the alternative.}
\end{align}
Here, Uniform[0, 1] refers to the uniform distribution over $[0, 1]$. Under the null, the distribution of $Q$ is stochastically smaller than the uniform, so $Q$ itself is not a valid p-value, while $P$ is exactly uniform and a valid p-value.
Under the alternative, both $Q$ and $P$ have power, but $Q$ is less likely to reject and has less power compared to the true p-value $P$.
Our synthetic experiments also explore the effect of varying dependence levels on validity. Our theoretical results only guarantee type I error control under independence, but the simulation results show robustness under mild dependence levels.

\Cref{fig:simBeta_qqplot} compares the distribution of the proxy, true, active $\tilde{P}^{\textnormal{density}}$ (using the known density of $Q$), and active $\tilde{P}^{\textnormal{density}}$ (using the estimated density of $Q$) p-values to a Uniform[0,1].
\Cref{fig:simBeta_cor} shows the c.d.f.\ plot comparing the resulting p-values to a uniform distribution.
We estimate the density with local polynomial density estimation by \cite{cattaneo_2020} on a set of $200$ known null hypothesis tests.
Under the null, both the active p-values using the known density and the active p-values using an estimated density closely follow a uniform distribution.

Additionally, the power of the active p-values typically lies in between the power of the proxy p-values and the power of the true p-values; however, this is not always the case, and the power depends on the null proxy, alternative proxy, and alternative true distributions. These can be seen by changing the shape parameters of the Beta distribution from which these p-values are drawn.

In the scCRISPR screen experiment, the $P$ and $Q$ p-values are correlated because these p-values are calculated on the same data.
So, we also consider a simulation setup where we induce rank correlation between $Q$ and $P$ using the Iman-Conover transformation \citep{iman_distribution_1982,pouillot_evaluating_2010}.  We control the strength of the rank correlation through a parameter $\rho \in [0, 1]$. We include the full details about the Iman-Conover transformation in \Cref{sec:prds-def}.
We consider $\rho \in\{0, .25, .5, .75, 1\}$ for our simulations where $\rho = 0$ corresponds to independent $P$ and $Q$.

\Cref{fig:simBeta_cor} illustrates this procedure across varying correlation strengths
and shows that the validity of the ``density'' active p-value is affected by rank correlation.
As the correlation increases, the active p-values no longer control for type I error, but even with very high correlation of $.95$, the type I error is not too high. For example, with the same simulation setting as the independent proxy and true p-values but with a rank correlation of $.95$ induced, testing at $\alpha=.05$ leads to a false positive rate of $.072$, and the additional false positive rate decreases as the correlation decreases.

\Cref{fig:2StepHypSim_scale_cor} plots the effect of $\eta \in (0,1]$ on the active p-value procedure. This also leads to uniform active p-values but changes the proportion of true hypothesis tests performed, which allows one to potentially target a desired total computation cost. Setting $\eta = 1$ gives the lowest probability of querying the true p-value that still provides an active p-value, but setting smaller $\eta$ values allows one to query the true p-value more often at the cost of greater computation time. For example, if the true p-values are more powerful and the computation time is not too costly, one may desire to query the true p-value more often.

\begin{figure}[ht]
\centering
   \begin{subfigure}[t]{.35\textwidth}
        \includegraphics[height=3.5cm, left, trim=0 0 2.7in .4in, clip]{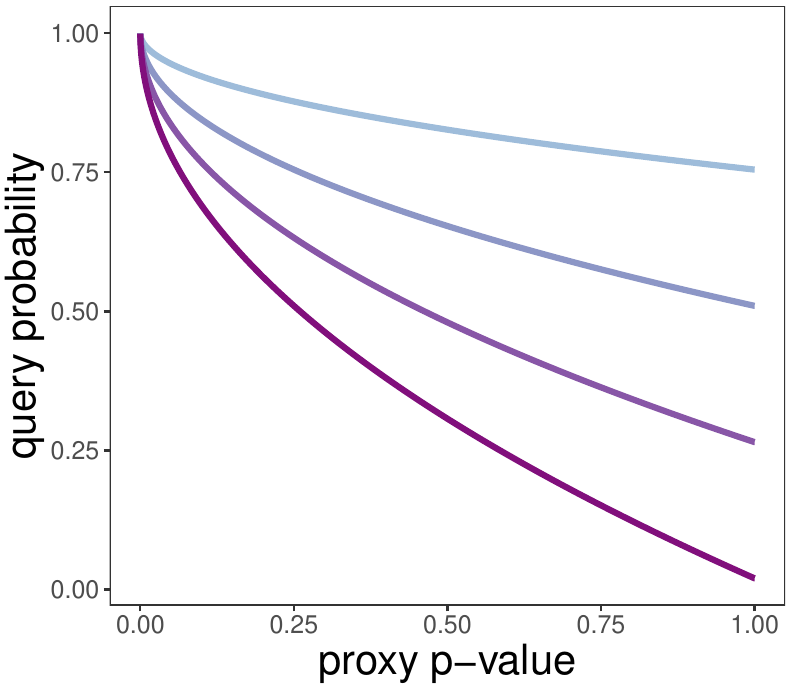}
        \caption{query prob.\ vs.\ proxy p-value}
   \end{subfigure}\qquad
   \begin{subfigure}[t]{.35\textwidth}
        \includegraphics[height=3.5cm, center, trim=0 0 1.9in .4in, clip]{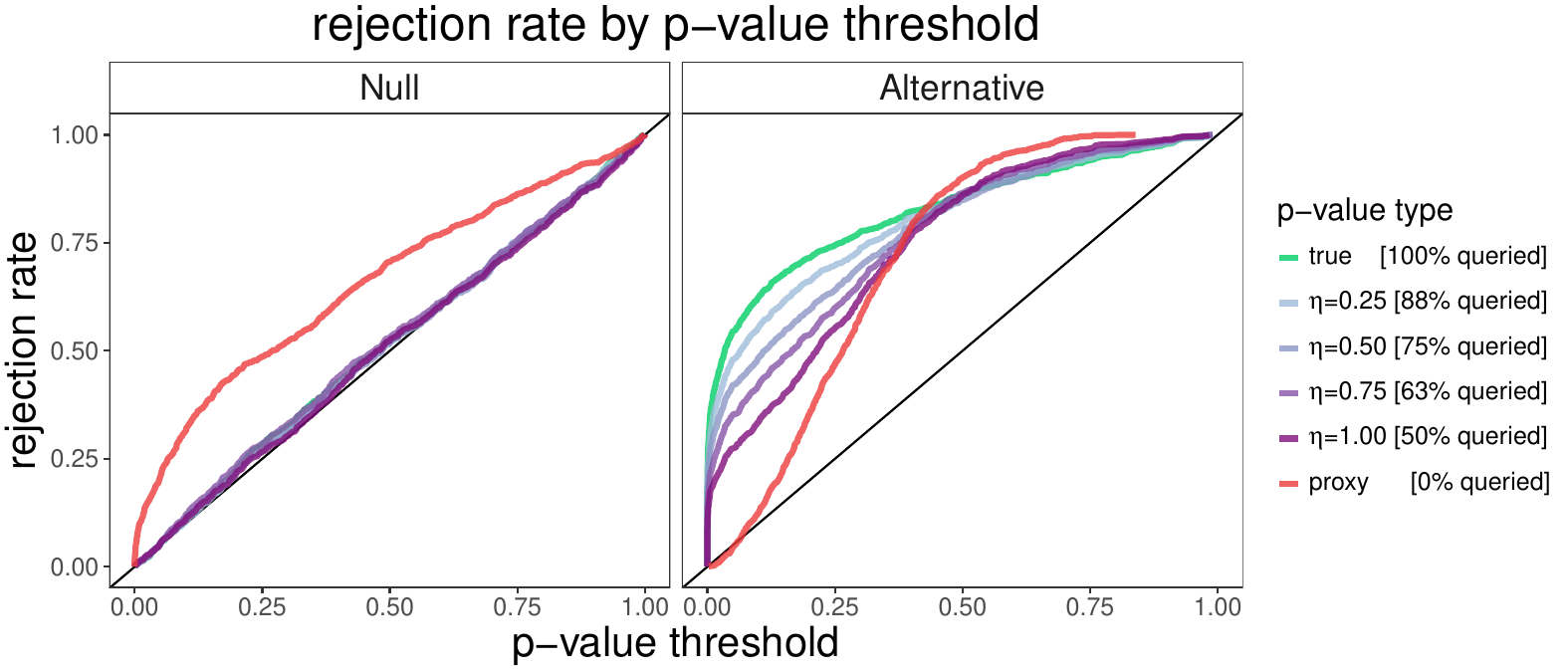}
        \caption{rejection rate vs. p-value threshold}
   \label{fig:2StepHypSim_scale}
   \end{subfigure}\qquad
   \begin{subfigure}[t]{.15\textwidth}
        \includegraphics[height=3.5cm, right, trim=8.5in .75in .125in .75in, clip]{figures/pvalDensityEst/simBeta/setting=03/scaling/scaling_qqplot.pdf}
   \end{subfigure}
   \caption{
       Synthetic experiment result showcasing the scaling factor $\eta$, where $L=\eta l$ is scaled based on an initial lower bound $l$ for $\eta \in [0,1]$. $Q$ and $P$ are independent.
       (a) the probability of querying the true statistic based on the first proxy p-value.
       (b) the rejection rate by p-value threshold.}
   \label{fig:2StepHypSim_scale_cor}
\end{figure}

\section{Estimating the joint distribution of proxy and true p-values\label{sec:joint}}

Instead of assuming independence of $P$ and $Q$, when both the proxy and true p-values are measured on a set of null tests, we may correct for the bias of proxy p-values by using their joint distribution. With the proxy p-value $Q$ and true p-value $P$ ($\text{Uniform}[0,1]$ under the null), let the conditional c.d.f. of the true p-value given the proxy p-value be $\textnormal{F}(p\mid q) = \prob{P \leq p \mid Q = q}$ and the marginal densities be $f_Q(q)$ and $f_P(p)$. Our goal is to construct $\Tilde{P}$ such that $\prob{\Tilde{P} \leq s} = s$ for all  $s \in (0,1)$ using the observed proxies $Q$ and the estimated or known conditional density $\textnormal{F}$. Thus, we can let $\Tilde{P}\mid Q \sim \textnormal{F}(\ \cdot\mid Q)$ and we can plug in the estimated $\hat{\textnormal{F}}$ for $\textnormal{F}$ if do not know $\textnormal{F}$ exactly. In the following, we show that $\Tilde{P}$ obtains the targeted distribution $\sim \text{Uniform}[0,1]$ under the null.

\begin{proposition}
    Consider p-values proxy $Q$ and true $P$. When $P \sim \text{Uniform}[0,1]$, assume that $Q$ is observed, and $F(p\mid q) = \prob{P \leq p \mid  Q = q}$ is known. If we define $\tilde{P}$ s.t.\ $\Tilde{P}\mid Q := F^{-1}(U\mid Q)$, where $U \sim \text{Uniform}[0,1]$ is sampled independently, then $\tilde{P}$ is a valid p-value.
    \label{prop:joint-pvalue-valid}
\end{proposition}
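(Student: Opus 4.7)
The plan is to invoke the probability integral transform conditionally on $Q$, and then marginalize. Concretely, I would first observe that for each fixed value $q$ in the support of $Q$, the map $u \mapsto F^{-1}(u \mid q)$ is the (generalized) inverse of the conditional c.d.f.\ $F(\cdot \mid q)$. Since $U \sim \text{Uniform}[0,1]$ is drawn independently of $Q$, the standard inverse-c.d.f.\ result yields that, conditionally on $Q = q$, the random variable $\tilde{P} = F^{-1}(U \mid q)$ has c.d.f.\ exactly $F(\cdot \mid q)$; that is, $\tilde{P} \mid Q = q$ is distributed the same as $P \mid Q = q$.

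Next, I would marginalize over $Q$. For any $s \in [0,1]$,
\begin{align}
\mathbb{P}(\tilde{P} \leq s) = \mathbb{E}\bigl[\mathbb{P}(\tilde{P} \leq s \mid Q)\bigr] = \mathbb{E}\bigl[F(s \mid Q)\bigr] = \mathbb{E}\bigl[\mathbb{P}(P \leq s \mid Q)\bigr] = \mathbb{P}(P \leq s).
\end{align}
Under the null, $P \sim \text{Uniform}[0,1]$, so $\mathbb{P}(\tilde{P} \leq s) = s$ for every $s \in [0,1]$, which is exactly the defining superuniformity condition (in fact, an equality) required for $\tilde{P}$ to be a valid p-value.

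The only subtlety is handling the generalized inverse when $F(\cdot \mid q)$ is not strictly increasing or has atoms. I would define $F^{-1}(u \mid q) := \inf\{p : F(p \mid q) \geq u\}$ and use the standard fact that $\mathbb{P}(F^{-1}(U \mid q) \leq s) = F(s \mid q)$ for $U \sim \text{Uniform}[0,1]$ (valid for any c.d.f., not just continuous ones), which is the main technical lemma behind the probability integral transform. The independence of $U$ from $Q$ is what allows the conditional c.d.f.\ computation to go through cleanly. I do not anticipate a real obstacle: once the conditional inverse-c.d.f.\ identity is in hand, the tower property finishes the proof in one line. The analogous argument would also show that $\tilde{P}$ remains a valid (super-uniform) p-value whenever $P$ is super-uniform, not only exactly uniform, though the stated proposition only concerns the uniform case.
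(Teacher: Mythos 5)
Your argument is correct and follows essentially the same route as the paper's proof: establish $\prob{\tilde{P} \leq s \mid Q = q} = F(s \mid q)$ via the conditional probability integral transform, then marginalize over $Q$ with the tower property to conclude $\prob{\tilde{P} \leq s} = \prob{P \leq s} \leq s$. Your explicit treatment of the generalized inverse and the remark that super-uniformity of $P$ suffices are minor refinements of the same argument, not a different approach.
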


We defer the proof to \Cref{sec:joint-pvalue-proof}.
We notice that none of the expensive $P$ p-values are actually queried for additional hypothesis tests. Instead, there is an initial fixed cost of querying enough $P$ and $Q$ to produce an estimate of the conditional distribution $F$.
Then, the biased proxy p-value is corrected through the conditional distribution.

A disadvantage of this joint distribution procedure is the estimation of the conditional density. Enough of the true p-value must be queried in order to have an adequate estimate of $F(p \mid q)$. In some cases, it may be less computationally expensive to occasionally query the true statistic in the previous active procedures. Additionally, one may want to query true p-values because these true p-values may be deemed desirable. For example, these true p-values may be more interpretable or justified from a causal framework; or these true p-values may have high power.
As a modification, one can decide to randomly query the true p-values based on some fixed chance, $T \sim \text{Bern}(\gamma), \gamma \in [0,1]$, so that the final p-value $\Tilde{P}_2 = (1 - T) \cdot \Tilde{P} + T \cdot P$ also has a $\text{Uniform}[0, 1]$ distribution; however, this does not query for true statistics in an informative way.

\section{Conclusion}
We have introduced a novel framework for active hypothesis testing that allows researchers to efficiently allocate resources by leveraging proxy statistics. We have developed active versions of both e-values and p-values that utilize proxy statistics to reduce resource usage for testing many hypotheses. We also show how these active formulations mesh with FDR controlling multiple testing procedures and consequently provide active versions of the e-BH and BH procedures for multiple testing.
We demonstrate the practical utility of our framework through numerical simulations and an application to testing for causal effects in scCRISPR screen experiments.

While this paper provides error control guarantees and demonstrates the practical significance of our active testing framework, theoretical understanding of how to choose proxy statistics to maximize power remains an open question. We hope this paper provides a starting point for understanding how to perform hypothesis testing and multiple testing under resource constraints.

\bibliography{ref}
\appendix

\section{Extensions to active methods}\label{sec:extensions}

In the following section, we describe a couple of different querying strategies for producing active e-values in the multiple testing setting where we have multiple hypotheses, each with its own proxy e-value. These methods demonstrate the flexibility of the active multiple testing framework and how it can incorporate multilevel approximations of the proxy e-values and human-in-the-loop improvement of remaining proxy statistics when one has queried some true statistics.

\subsection{Multilevel active e-values}
While our basic setup of FDR control with active e-values originally did one round of proxy construction, we can repeatedly calculate our proxies (e.g., iterates of an optimization algorithm) in a fashion that is dependent on the other proxies and queried e-values. For example, we might have estimators that can produce proxy e-values whose accuracy can be improved with more computation, and the true e-value requires an exhaustive search over a grid of inputs. This type of behavior is often seen in Markov Chain Monte Carlo (MCMC) methods, where more computation (i.e., sampling steps), improves the approximation of a test statistic. A particular connection between our method and MCMC methods is to finite sample, unbiased MCMC methods introduced in \citet{rhee_unbiased_estimation_2015,donmcleish_general_method_2011a}. An unbiased estimator is constructed by randomly choosing whether to proceed one more iteration, where the randomness is independent of the MCMC sample itself. Inspired by this, we also formulate a version of a proxy method, where one can iteratively derive a new proxy e-value, and stop the computation according to a stopping rule (i.e., a decision rule that determines which iteration to stop on based on the observed proxy e-values). We call the resulting e-values \emph{multilevel active e-values}, and we formulate our algorithm for computing them in \Cref{alg:multilevel-e-value}.
\begin{algorithm}
    \caption{Multilevel active e-value computation.}\label{alg:multilevel-e-value}
    \SetKw{Break}{Break}
    \KwIn{Sequences of proxy e-values derived from an iterative method $(F_{i}^{(t)})_{t \in \naturals}$ for each $i \in [K]$ and stopping rules $\tau_i$ for each $i \in [K]$. Tuning parameter $\gamma \in (0, 1]$.}
Initialize stopped indices $\Scal_0 \coloneqq \emptyset$.\\
\For{$t \in \{1, 2, \dots,\}$}{Compute $F^{(t)}_i$ for each $i \in [K] \setminus \Qcal_{t - 1}$.\\
Set $\Qcal_t \coloneqq \Qcal_{t - 1}$.\\
\For{$ i \in [K] \setminus \Qcal_{t - 1}$}{
            $\Qcal_t \coloneqq \Qcal_t \cup \{i\} $ \textbf{ if } $\tau_i = t$ based on $(F_i^{(j)})_{i \in [K], j \in [t \wedge \tau_i]}$.
        }
    }
    \For{$i \in [K]$}{Sample $T_i \sim \text{Bern}((1 - \gamma(F_{i}^{(\tau_i)})^{-1})_+)$.\\
        Set $\tilde{E}_i \coloneqq (1 - \gamma)E_i \textbf{ if }T_i = 1 \textbf{ else } F^{(\tau_i)}_i$.
    }
    \KwOut{E-values $(\tilde{E}_1, \dots, \tilde{E}_K)$.}
\end{algorithm}
\begin{proposition}
    $(\tilde{E}_1, \dots, \tilde{E}_K)$ that are produced by \Cref{alg:multilevel-e-value} are valid e-values.
\end{proposition}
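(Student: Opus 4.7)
The plan is to reduce this to the single-hypothesis argument of \Cref{prop:active-e-value}, treating $F_i^{(\tau_i)}$ as the effective proxy. Since each hypothesis is handled independently by the algorithm (the only coupling is through what determines $\tau_i$, which is measurable with respect to the observed proxy sequences), it suffices to show $\mathbb{E}[\tilde{E}_i] \leq 1$ for each fixed $i$ under the null.

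First, I would fix $i$ and let $\mathcal{G}_i$ denote the $\sigma$-algebra generated by the entire family of proxy sequences $(F_j^{(t)})_{j\in[K], t\in\mathbb{N}}$ together with $E_i$. The stopping indices $\tau_j$ for all $j$, and hence $F_i^{(\tau_i)}$, are $\mathcal{G}_i$-measurable by construction of the algorithm. The sampling step $T_i \sim \mathrm{Bern}((1 - \gamma (F_i^{(\tau_i)})^{-1})_+)$ is drawn using randomness independent of everything else (and independent across $i$), so conditional on $\mathcal{G}_i$, $T_i$ really is Bernoulli with the stated parameter.

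Next, using iterated expectation conditional on $\mathcal{G}_i$, I would compute
\begin{align}
\mathbb{E}[\tilde{E}_i \mid \mathcal{G}_i]
&= \bigl(1 - (1 - \gamma (F_i^{(\tau_i)})^{-1})_+\bigr)\, F_i^{(\tau_i)} + (1 - \gamma (F_i^{(\tau_i)})^{-1})_+ \cdot (1-\gamma)\, E_i \\
&= \bigl(\gamma (F_i^{(\tau_i)})^{-1} \wedge 1\bigr)\, F_i^{(\tau_i)} + (1 - \gamma (F_i^{(\tau_i)})^{-1})_+ \cdot (1-\gamma)\, E_i \\
&\leq \gamma + (1-\gamma)\, E_i,
\end{align}
where the inequality uses $(\gamma F^{-1} \wedge 1) F = \gamma \wedge F \leq \gamma$ for any nonnegative $F$, and bounds the second coefficient by $1$. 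Taking expectations and invoking the fact that $E_i$ is a bona-fide e-value gives $\mathbb{E}[\tilde{E}_i] \leq \gamma + (1-\gamma) \cdot 1 = 1$, establishing validity.

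The only subtle point is confirming that the stopping rule does not interfere with the calculation: even though $\tau_i$ may depend on proxies of other hypotheses (through the shared iteration loop in \Cref{alg:multilevel-e-value}), once we condition on $\mathcal{G}_i$ both $F_i^{(\tau_i)}$ and $E_i$ are constants, and $T_i$'s exogenous Bernoulli randomness is the only remaining source of variation. This is exactly the structural observation that makes the single-hypothesis proof of \Cref{prop:active-e-value} go through verbatim; there is no true obstacle beyond being careful to state that $\tau_i$ is $\mathcal{G}_i$-measurable.
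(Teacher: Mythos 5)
Your proposal is correct and takes essentially the same route as the paper: the paper simply observes that the triple $(F_i^{(\tau_i)}, E_i, T_i)$ has the same structure as $(F, E, T)$ in \eqref{eq:active-e-value} and invokes \Cref{prop:active-e-value}, which is exactly the reduction you carry out, just with the conditioning on $\mathcal{G}_i$ and the expectation bound written out explicitly. Your added care about the measurability of $\tau_i$ and the exogeneity of $T_i$'s randomness is a faithful elaboration of the same argument, not a different one.
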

The proof follows from the fact that the marginal distribution of $F_i^{(\tau_i)}, E_i,$ and $T_i$ is identical to $F, E,$ and $T$ in \eqref{eq:active-e-value}, so $\tilde{E}_i$ is a valid active e-value by \Cref{prop:active-e-value}.

\subsection{Inter-active e-values}
As we receive more information from running experiments and computing e-values, we can update our remaining proxy e-values. We can follow the procedure in \Cref{alg:interactive-e-value} to decide which experiments to run and calculate our active e-values:
\begin{algorithm}[ht]
    \label{alg:interactive-e-value}
    \caption{Interactive active e-values that can use queried true e-values to inform predictions.}
\KwIn{Initial proxy e-values $(F_1, \dots, F_K)$.}
    Initialize $(F_1^{(0)}, \dots, F_K^{(0)}) \coloneqq (F_1, \dots, F_K)$.\\
    Initialize queried set $\mathcal{Q}_0 \coloneqq \emptyset$.\\
\For{$t \in [K]$}{
        Select $I_t \in [K] \setminus \mathcal{Q}_t$.\\
        Let $T_{I_t} \sim \text{Bern}((1 - \gamma(F^{(t)})^{-1})_+)$.\\
Compute $E_{I_t}$ if $T_{I_t} = 1$.\\
Set $\tilde{E}_{I_t} \coloneqq F^{(t)}_{I_{t}} \textbf{ if }T_{I_t} = 0 \textbf{ else } (1 - \gamma) \cdot E_{I_t} $.\\
Let $\mathcal{Q}_{t + 1} \coloneqq \mathcal{Q}_t \cup \{t + 1\}$.\\
Update $(F^{(t + 1)}_i)_{i \not \in \mathcal{Q}_{t + 1}}$ based on $(T_i)_{i \in \mathcal{Q}_{t + 1}}$ and $(E_i)_{i \in \mathcal{Q}_{t + 1}, T_i = 1}$.\\
Let $F_i^{(t +1)} \coloneqq F_i^{(t)}$ for $i \in \mathcal{Q}_{t + 1}$.\\
    }
\end{algorithm}
\begin{proposition}
    $(\tilde{E}_1, \dots, \tilde{E}_K)$ as produced through \Cref{alg:interactive-e-value} are  e-values.
\end{proposition}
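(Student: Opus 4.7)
The plan is to mirror the proof of \Cref{prop:active-e-value} but conditionally on a filtration that captures all information accumulated before the randomization at hypothesis $I_t$ is performed. Specifically, for each step $t$, let $\mathcal{F}_t$ denote the sigma-algebra generated by everything computed through the selection of $I_t$ and its current proxy $F^{(t)}_{I_t}$, but excluding the Bernoulli draw $T_{I_t}$ and the true e-value $E_{I_t}$. By construction of the algorithm, $F^{(t)}_{I_t}$ is $\mathcal{F}_t$-measurable, and $T_{I_t}$ is independent of $E_{I_t}$ given $\mathcal{F}_t$.

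I would then repeat the calculation from the proof of \Cref{prop:active-e-value} pointwise in $F^{(t)}_{I_t}$:
\begin{align*}
    \expect[\tilde{E}_{I_t}\mid\mathcal{F}_t]
    &= (\gamma (F^{(t)}_{I_t})^{-1}\wedge 1)\, F^{(t)}_{I_t}
       + (1-\gamma)(1 - \gamma (F^{(t)}_{I_t})^{-1})_+ \, \expect[E_{I_t}\mid\mathcal{F}_t]\\
    &\leq \gamma + (1-\gamma)\, \expect[E_{I_t}\mid\mathcal{F}_t],
\end{align*}
using that the first coefficient is bounded by $\gamma/F^{(t)}_{I_t}$ and the second by $1-\gamma$. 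The proposition then reduces to showing $\expect[E_{I_t}\mid\mathcal{F}_t]\leq 1$ whenever the null for hypothesis $I_t$ holds, after which the tower property and the observation that each $i\in[K]$ is queried at some unique step delivers $\expect[\tilde{E}_i]\leq 1$.

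The main obstacle --- and an implicit assumption the proposition silently relies on --- is justifying $\expect[E_{I_t}\mid\mathcal{F}_t]\leq 1$. Because $I_t$ and the updated proxies depend adaptively on previously queried true e-values, conditioning on $\mathcal{F}_t$ could in principle introduce selection bias for $E_{I_t}$. To close the proof, I would invoke the standard sequential e-value assumption: each $E_i$ remains a conditional e-value with respect to any information not derived from the data generating $E_i$ itself. This holds, for example, when $(E_1,\ldots,E_K)$ arise from independent experimental units, or more generally when the selection rule and proxy updates are measurable with respect to information that is independent of the true statistic being queried at each step. Under this assumption, the conditional bound gives $\expect[\tilde{E}_{I_t}\mid\mathcal{F}_t]\leq 1$, and taking the outer expectation completes the argument.
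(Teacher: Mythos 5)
There is a genuine misstep here: you claim the proposition ``silently relies on'' a conditional e-value assumption, namely $\expect[E_{I_t}\mid\mathcal{F}_t]\leq 1$, and you close your argument only by importing that extra hypothesis. The proposition needs no such assumption, and the paper's (one-line) proof does not use one. The key is to condition on the true e-value as well as the proxy, exactly as in the proof of \Cref{prop:active-e-value}. Write $G_i \coloneqq F_i^{(\tau(i))}$ for the proxy in force at the step $\tau(i)$ when hypothesis $i$ is selected; $G_i$ is some nonnegative random variable that may depend adaptively on the selection order, on previously queried true e-values, on $E_i$ itself --- on everything. Since the coin $T_i$ is fresh external randomness with $\expect[T_i \mid G_i, E_i, \text{history}] = (1 - \gamma G_i^{-1})_+$, one gets pointwise
\begin{align}
\expect[\tilde{E}_i \mid G_i, E_i] = \left(\gamma G_i^{-1} \wedge 1\right) G_i + (1-\gamma)\left(1 - \gamma G_i^{-1}\right)_+ E_i \leq \gamma + (1-\gamma) E_i,
\end{align}
and taking the full expectation requires only the \emph{marginal} bound $\expect[E_i]\leq 1$ under $H_{0,i}$. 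The adaptivity of $I_t$ and of the proxy updates is entirely absorbed into the arbitrary dependence already permitted for the proxy in \Cref{prop:active-e-value}; no selection-bias correction is needed because $E_i$ never has to be an e-value conditionally on the filtration.

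Your route fails precisely at the point you flag: by splitting $\expect[T_{I_t}E_{I_t}\mid\mathcal{F}_t]$ via conditional independence given $\mathcal{F}_t$ and then bounding $\expect[E_{I_t}\mid\mathcal{F}_t]$, you are forced to assume conditional validity of the true e-value against the adaptive history, which is strictly stronger than what is given and excludes, e.g., proxies updated using $E_i$'s own data. As a proof of the stated proposition your argument is therefore incomplete; it establishes only a restricted version. The fix is not to add the assumption but to change the conditioning: condition on $(G_i, E_i)$ jointly, observe that $\tilde{E}_i$ then conforms exactly to the form in \eqref{eq:active-e-value}, and invoke \Cref{prop:active-e-value} directly --- which is all the paper does.
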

Similarly, each active e-value defined in \Cref{alg:interactive-e-value}, conforms to the definition in \eqref{eq:active-e-value}, so each inter-active e-value is a bona fide e-value. Thus, inter-active e-values allow one to update proxy e-values as the scientist gathers more data through querying the true e-values for a subset of the considered hypotheses.
Naturally, one can also construct p-value analogs of the aforementioned procedures. However, the dependence structures of the resulting p-values are difficult to decipher, and hence ensuring valid FDR control at level $\alpha$ might require one to inflate the test level of BH. On the other hand, e-BH permits any dependence structure, and thus can be directly applied to these processed e-values with FDR control.

\section{Additional details for scCRISPR screen experiments}
\label{sec:appendix:scCRISPR}

\subsection{Proximal inference assumptions}\label{sec:appendix:scCRISPR:proximalassumptions}
We describe the proximal inference setup and assumptions in more detail. Proximal inference focusing on the outcome regression model makes four main assumptions \cite{miao_confounding_2020}:
\begin{assumption}[Proximal causal inference assumptions]
    We make the following additional assumptions about the $Y, A, W$, and $Z$.
\begin{enumerate}
    \item (negative control outcome) $W \indep A \mid C$ and $W \not\indep C$ .
    \item (negative control exposure) $Z \indep Y \mid C, A$ and $Z \indep W \mid C$.
    \item (outcome confounding bridge) There exists a function $h$ such that $\expect[Y \mid C, A=a] = \expect[h(W, a)  \mid  C, A=a]$ for $a \in \{0, 1\}$.
    \item Completeness of $\sprob(W \mid Z, A)$: For all $a$, $W \not\indep Z  \mid  A=a$; and for any square integrable function $g$, if $\expect\left[g(W)  \mid  Z=z, A=a\right] = 0$ for almost all $z$, then $g(W) = 0$ almost surely.
\end{enumerate}
\end{assumption}
With $W$ and $h$, the mean potential outcome is identified by $\expect(Y^a) = \expect[h(W, a)]$. However, this $h$ cannot be identified yet because this equation characterizing $h$ involves unmeasured variables $C$. If the completeness condition holds and there exists a negative control exposure $Z$, then $h$ can be estimated with $\expect\left[Y \mid Z, A\right] =  \expect\left[h(W, A\right)  \mid  Z, A]$. \Cref{fig:proxDAG} depicts a causal directed acyclic graph (DAG) with random variables as nodes $(A, Y, C, Z, W)$ and arrows indicating causal effects between variables. This causal DAG satisfies the proximal inference setting but is not the only DAG that does.

\subsection{Methodological details of 2SLS}\label{sec:appendix:scCRISPR:2SLSdetails}
We describe the proximal inference method and procedure by \cite{liu_regression-based_2024} in more detail. The following is a summarization of the linear case in Appendix A.9 of \cite{liu_regression-based_2024}. The authors also consider situations when the negative control outcome and outcome variables may have different forms, such as binary and count forms. In our application, we consider continuous negative control outcomes and outcome, which corresponds to identity link functions $g_1(x) = x$ and $g_2(x) = x$, but the following results hold for other link functions such as log and logit links. Additionally, they consider including measured confounders $X \in \reals^{\# \text{measured confounders}}$. Our application does not use measured confounders, so to simplify notation, we exclude these $X$. Lastly, they describe an example when using $1$ negative control, but the summary below describes an example with multiple negative controls.

Let $d$ be the number of negative control exposures as well as the number of negative control outcomes, $Z \in \mathbb{R}^d$ and $W \in \mathbb{R}^d$. Suppose the true parameters are $\alpha_{j,0}^*, \alpha_{j,a}^* \in \reals$ and $\alpha_{j,z}^*\in\reals^d$ for $j=1,...,d$ and $\beta_0^*, \beta_a^* \in \reals$ and $\beta_u^* \in \reals^d$. Let $$\alpha^* = [\alpha_{1,0}^*, \alpha_{1,a}^*, \alpha_{1,z}^{*T},..., \alpha_{d,0}^*, \alpha_{d,a}^*, \alpha_{d,z}^{*T}]^T \in \reals^{d(2+d)} = \reals^{2d+d^2}$$ and $\beta^* = [\beta_0^*, \beta_a^*, ..., \beta_u^{*T}]^T \in \reals^{2+d}$. Then, the two stages of least squares regression are modeled as:

\begin{align}
    \text{First Stage:}
& & \expect[W_j \mid A,Z]
        &= \alpha_{j,0}^* + \alpha_{j,a}^* A + \alpha_{j,z}^{*T} Z \quad\text{ for } j=1,...,d \\
    \text{Second Stage:}
& & \expect[Y \mid A,Z]
        &= \beta_0^* + \beta_a^* A + \beta_u^{*T} S(\alpha^*)\\
    \text{For the Proximal Control Variable:}
    & & S_j = S_j(\alpha_j^*)
        &= \alpha_{j,0}^* + \alpha_{j,a}^* A + \alpha_{j,z}^{*T} Z \\
    & & S = S(\alpha^*)
        &= [S_1, ..., S_d]^T \in \mathbb{R}^{d} \\
\end{align}

Both least square stages are estimated using maximum likelihood estimation, so the estimates may be analyzed asymptotically as m-estimators.
Denote the observed random variables variables $\mathcal{O}:=(A,Y,Z,W)$.
The first stage's estimating equation is $\expect[\Psi_{1,j}(\mathcal{O};\alpha^*, \beta^*)| A,Z] = 0$  for $j = 1, ..., d$, and the second stage's estimating equation is $\expect[\Psi_2(\mathcal{O};\alpha^*, \beta^*) | A,Z] = 0$,
where $\Psi_{1,j}$ and $\Psi_2$ are:
\begin{align}
    \text{First Stage:}
        & & \Psi_{1,j} = \Psi_{1,j}(\mathcal{O};\alpha^*, \beta^*)
        &= \begin{bmatrix}
            1 \\ A \\ Z
            \end{bmatrix}
            \left\{
                W_j -
(\alpha_{j,0}^* + \alpha_{j,a}^* A + \alpha_{j,z}^{*T} Z)
            \right\}
            \in \mathbb{R}^{2 + d}\\
    \text{Second Stage:}
        & &\Psi_2 = \Psi_2(\mathcal{O};\alpha^*, \beta^*)
        &= \begin{bmatrix}
            1 \\ A \\ S
            \end{bmatrix}
            \left\{
                 Y -
(\beta_0^* + \beta_a^* A + \beta_u^* S)
            \right\}
            \in \mathbb{R}^{2 + d}
\end{align}
Define $\Psi = [\Psi_{1,1}^{T}, ..., \Psi_{1,d}^{T}, \Psi_2^T]^T \in \mathbb{R}^{(d+1)(2 + d)} = \mathbb{R}^{2 + 3d + d^2}$. Together, the estimating equations form $\expect[\Psi|A,Z] = 0$, and the estimates $(\hat\alpha, \hat\beta)$ minimize the empirical average $\mathbb{P}_n\{\Psi(\mathcal{O}; \alpha, \beta)\} = 0$, where the notation $\mathbb{P}_n$ means $\mathbb{P}_n\{X\} = (1/n)\sum_{i=1}^n X_i$.
Under certain smoothness assumptions, explained in detail in \cite{liu_regression-based_2024}, Appendix A.9, the estimators are asymptotically normal and the variance may be estimated by sandwich variance estimators.
\begin{align}
    \sqrt{N}\left\{
    \begin{pmatrix}
        \hat\alpha \\
        \hat\beta
    \end{pmatrix}
    -
    \begin{pmatrix}
        \alpha^* \\
        \beta^*
    \end{pmatrix}
    \right\}
    \rightarrow^d
    N(0, \mathbf{V}(\alpha^*, \beta^*))
\end{align}
where
\begin{align}
\mathbf{V}(\alpha^*, \beta^*) &= \mathbf{A}(\alpha^*, \beta^*)^{-1}
                                     \mathbf{B}(\alpha^*, \beta^*)
                                     \{\mathbf{A}(\alpha^*, \beta^*)^{-1}\}^T \\
\mathbf{A}(\alpha^*, \beta^*) &= \expect\left[\left.
                                            \frac{\partial}{\partial(\alpha, \beta)^T}\Psi(\mathcal{O}; \alpha, \beta)
                                            \right|_{\alpha=\alpha^*, \beta=\beta^*}
                                          \right]
    \in \mathbb{R}^{(2 + 3d + d^2) \times (2 + 3d + d^2)} \\
    \mathbf{B}(\alpha^*, \beta^*) &= \expect[\Psi(\mathcal{O}; \alpha^*, \beta^*)\Psi(\mathcal{O}; \alpha^*, \beta^*)^T]
    \in \mathbb{R}^{(2 + 3d + d^2) \times (2 + 3d + d^2)}
\end{align}

Finally, the standard error, $\hat\sigma^\TSLS$, of the ATE estimate, $\hat\beta_a$, can be calculated by estimating $\mathbf{V}(\alpha^*, \beta^*)$, indexing into the covariance for $\beta_a$, and taking the square root. $\mathbf{V}(\alpha^*, \beta^*)$ may be estimated by taking empirical averages and plugging in the estimates for $(\alpha^*, \beta^*)$ in the components $\mathbf{A}(\alpha^*, \beta^*)$ and $\mathbf{B}(\alpha^*, \beta^*)$.
\begin{align}
    \hat\sigma^\TSLS &= \sqrt{\mathbf{V}_n(\hat\alpha, \hat\beta)_{A,A}} \\
    \text{ where }\qquad
    \mathbf{V}_n(\hat\alpha, \hat\beta) &= \mathbf{A}_n(\hat\alpha, \hat\beta)^{-1}
                                     \mathbf{B}_n(\hat\alpha, \hat\beta)
                                     \{\mathbf{A}_n(\hat\alpha, \hat\beta)^{-1}\}^T \\
    \mathbf{A}_n(\hat\alpha, \hat\beta) &= \mathbb{P}_n\left[\left.
                                            \frac{\partial}{\partial(\alpha, \beta)^T}\Psi(\mathcal{O}; \alpha, \beta)
                                            \right|_{\alpha=\hat\alpha, \beta=\hat\beta}
                                          \right] \\
    \mathbf{B}_n(\hat\alpha, \hat\beta) &= \mathbb{P}_n[\Psi(\mathcal{O}; \hat\alpha, \hat\beta)\Psi(\mathcal{O};  \hat\alpha, \hat\beta)^T]
\end{align}

Although the two stages of least squares are relatively fast, because this involves just $d+1$ linear regressions, the calculation of the variance takes much longer and is the source of the 2SLS approach's high cost.

\paragraph{Runtime Complexity in terms of $n$ and $d$.}
We compare the runtime complexities of the fast but biased test (linear regression of $Y$ on $A$) and the slower but true test (proximal 2SLS).
In general, linear regression with $p$ covariates has a runtime complexity of $O(np^2 + p^3)$.

In the fast but biased test, $p$ is fixed and small at $p=2$, a coefficient for the constant and a coefficient for the treatment assignment, so overall, the fast linear regression has a runtime of $O(n2^2 + 2^3)=O(n)$.
The runtime complexity for estimating the parameter is $O(np^2 + p^3)$ from computing
$\hat\beta^\OLS \coloneqq (\mathbf{M}^T \mathbf{M})^{-1} \mathbf{M}^T \vv{\mathbf{Y}}\text{ where } \mathbf{M} = [\mathbf{1}, \vv{\mathbf{A}}]$.
And the runtime complexity for calculating the standard error and p-value is $O(np^2 + p^3)$ from
$\hat{\sigma}^\OLS \coloneqq \sqrt{\left((\mathbf{M}^T \mathbf{M})^{-1}  \hat\sigma^2\right)_{A,A}}$ where $\hat\sigma = \sqrt{\sum_i^n (Y - \mathbf{M}\beta)^2/(n-2)}$.

In comparison, the time complexity of the proximal 2SLS approach is $O(nd^4 + d^6)$.
Estimating the ATE takes $O((d+1)(n(d+2)^2 + (d+2)^3)) = O(nd^3 + d^4)$, because there are $d+1$ least squares estimations each with $d+2$ covariates.
Calculating the standard error and p-value takes $O(nd^4 + d^6)$. More details are described below.

\paragraph{More details about calculating $\mathbf{V}_n(\hat\alpha, \hat\beta)$.}
To better understand the runtime complexity of proximal 2SLS, let us focus on $\mathbf{A}(\alpha^*, \beta^*)$.
Below, we show that constructing each element of the partial derivative matrix takes constant time $O(1)$; then, we describe the total runtime for estimating $\mathbf{V}(\alpha^*, \beta^*)$ with $\mathbf{V}_n(\hat\alpha, \hat\beta)$.

Let us combine the parameters together $\theta = [\alpha^T, \beta^T]^T \in \reals^{2 + 3d + d^2}$. Each element $k,l \in [2+3d+d^2]$ in the partial derivative matrix is:
\begin{align}
    \left(\frac{\partial}{\partial\theta}\Psi(\mathcal{O}; \theta)\right)_{k,l}
    &= \frac{\partial}{\partial\theta_l}\Psi(\mathcal{O}; \theta)_k \\
    &= \mathbb{I}\left\{\left\lfloor \frac{k}{d}  \right\rfloor = \left\lfloor \frac{l}{d}  \right\rfloor\right\}
       \cdot G(l) \cdot H(k)
\end{align}

where
\begin{itemize}
    \item $\mathbb{I}\left\{\left\lfloor \frac{k}{d}  \right\rfloor = \left\lfloor \frac{l}{d}  \right\rfloor\right\}$ indicates whether $\Psi(\mathcal{O}; \theta)_k$ and $\theta_l$ correspond to the same estimating equation. If they belong to separate estimating equations, then this partial derivative is $0$.
    \item $G: [2+3d+d^2] \rightarrow \{1, A, Z_1, ..., Z_d, S_1, ..., S_d\}$ returns the corresponding term for $\theta_j$
    \begin{align}
        G(l) &=
        \left\{
        \begin{matrix*}[l]
            1   &\text{ if } \theta_l \text{ is } \alpha_{1, 0}, ..., \alpha_{d, 0}, \text{ or } \beta_0 \\
            A   &\text{ if } \theta_l \text{ is } \alpha_{1, a}, ..., \alpha_{d, a}, \text{ or } \beta_a \\
            Z_j &\text{ if } \theta_l \text{ is } \alpha_{1, z, j}, ..., \alpha_{d, z, j} \text{ for } j=1,...,d \\
            S_j &\text{ if } \theta_l \text{ is } \beta_{z,j} \text{ for } j=1,...,d
        \end{matrix*}
        \right.
    \end{align}
\item $H: [2+3d+d^2] \rightarrow \{1, A, Z_1, ..., Z_d, S_1, ..., S_d\}$ returns the corresponding term for $\Psi_k$
    \begin{align}
        H(k) &=
        \left\{
        \begin{matrix*}[l]
            1   &\text{ if } \Psi_k \text{ is } (\Psi_{1, 1})_1, ..., (\Psi_{1, d})_1, \text{ or } (\Psi_2)_1 \\
            A   &\text{ if } \Psi_k \text{ is } (\Psi_{1, 1})_2, ..., (\Psi_{1, d})_2, \text{ or } (\Psi_2)_2 \\
            Z_j &\text{ if } \Psi_k \text{ is } (\Psi_{1, 1})_{j+2}, ..., (\Psi_{1, d})_{j+2} \text{ for } j=1,...,d \\
            S_j &\text{ if } \Psi_k \text{ is } (\Psi_2)_{j+2} \text{ for } j=1,...,d
        \end{matrix*}
        \right.
    \end{align}

\end{itemize}

The partial derivative $\frac{\partial}{\partial(\alpha, \beta)^T}\Psi(\mathcal{O}; \alpha, \beta)$ of estimating equation $\Psi$ has a block diagonal form:
\begin{align}
    \begin{bmatrix}
        \frac{\partial}{\partial\alpha_1}\Psi_{1,1}(\mathcal{O};\alpha, \beta) & 0 & ... & 0 & 0 \\
        0 & \frac{\partial}{\partial\alpha_2}\Psi_{1,2}(\mathcal{O};\alpha, \beta) & ... & 0 & 0 \\
        \vdots & \vdots & \ddots & \vdots & \vdots\\
        0 & 0 & ... & \frac{\partial}{\partial\alpha_d}\Psi_{1,d}(\mathcal{O};\alpha, \beta) & 0 \\
        0 & 0 & ... & 0 & \frac{\partial}{\partial\beta} \Psi_2(\mathcal{O};\alpha, \beta)
    \end{bmatrix}
\end{align}

This means that the time complexity for estimating $\mathbf{A}(\alpha^*, \beta^*)$ with $\mathbf{A}_n(\hat\alpha, \hat\beta)$, which is a $(2 + 3d + d^2) \times (2 + 3d + d^2)$ matrix, takes $O(n(2 + 3d + d^2)^2)=O(nd^4)$.
There are $n$ samples, and it takes $O((2 + 3d + d^2)^2)$ to form each matrix.
The same idea applies to estimating $\mathbf{B}(\alpha^*, \beta^*)$ with $\mathbf{B}_n(\hat\alpha, \hat\beta) \in \reals^{(2 + 3d + d^2) \times (2 + 3d + d^2)}$, which also takes $O(n(2 + 3d + d^2)^2)=O(nd^4)$ to construct.
Additionally, the $\mathbf{A}_n(\hat\alpha, \hat\beta)$ must be inverted, which may take $O((2 + 3d + d^2)^3)=O(d^6)$.
Lastly, multiplying together to estimate $\mathbf{V}_n(\hat\alpha, \hat\beta)$ takes $O(2(2 + 3d + d^2)^3)=O(d^6)$.
So overall, the runtime complexity for calculating $\mathbf{V}_n(\hat\alpha, \hat\beta)$ takes $O(nd^4 + d^6)$.

As an aside, because $\mathbf{A}_n(\hat\alpha, \hat\beta)$ has a block diagonal form, it is possible to reduce the computation time by inverting each of the $d+1$ blocks of size $(d+2) \times (d+2)$ separately. This leads to a runtime of $O((d+1)(d+2)^3)=O(d^4)$ compared to $O(d^6)$. However, this does not reduce the total time complexity, because the $\mathbf{A}_n(\hat\alpha, \hat\beta)$ and $\mathbf{B}_n(\hat\alpha, \hat\beta)$ matrices must still be multiplied together.

\section{Additional experimental results}\label{sec:appendix:exp-results}
We include additional experimental results below that supplement those in \Cref{sec:scCRISPR}. 
\subsection{Joint density active p-value simulation study}
\label{sec:joint-pvalue-sim}

We also investigate the joint density active p-value (\ref{prop:joint-pvalue-valid}) through simulation settings specified in \ref{sec:scCRISPR:synthetic}. 
Figure \ref{fig:jointdens} shows the results of 1000 null simulations and 1000 alternative simulations across a variety of correlation strengths. The conditional density of the true given the proxy is estimated on a separate 1000 draws from the null simulation and estimated with a local polynomial density estimator implemented in the package  \texttt{hdrcde} \citet{hyndman_highest_density_2021}.

In a variety of settings, we see that the active p-values follow a uniform distribution under the null; however, the power of the joint density active p-values is highly reliant on the relative shapes of the proxy and true p-values under the null and alternative settings. Additionally, the power relies on the correlation strength between the proxy and true p-value. Higher correlation tends to lead to higher power, which is expected because the joint density is trying to take advantage of the joint information between the true and proxy p-values, but there is none when they are uncorrelated or independent. 

\begin{figure}[h!t]
   \centering
   \begin{subfigure}[t]{.75\textwidth}
        \includegraphics[height=3.9cm, center, clip, trim={0 0 4.1cm .8cm}]{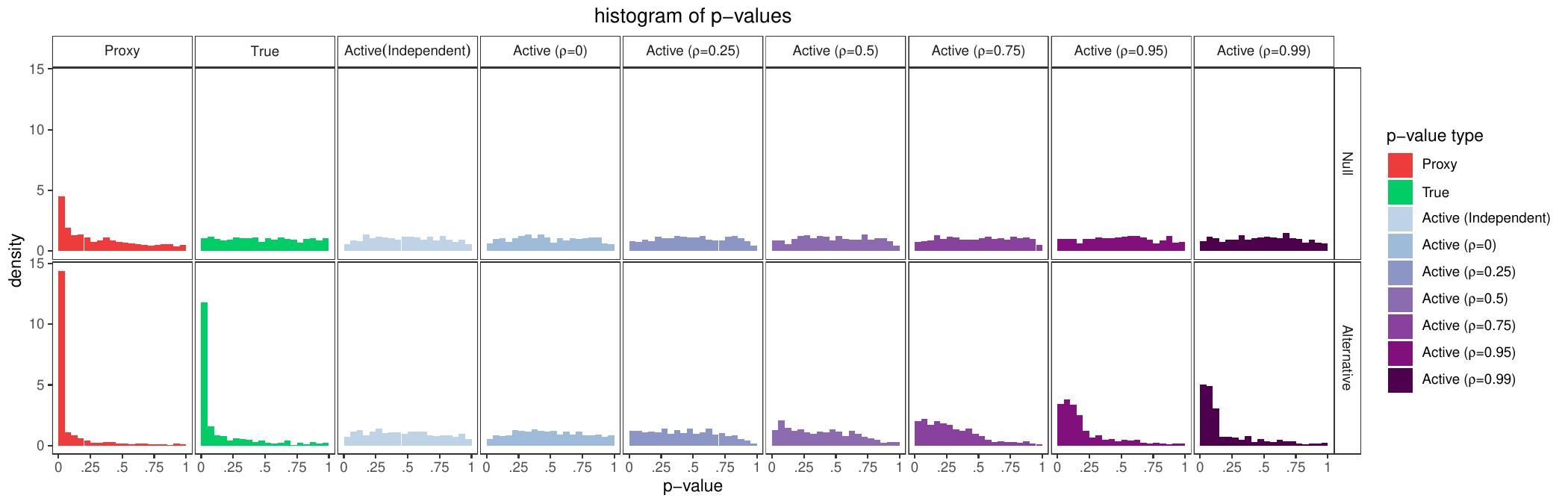}
        \caption*{\qquad\qquad Setting (A)}
   \end{subfigure}\hfill
   \begin{subfigure}[t]{.2\textwidth}
        \includegraphics[height=3.7cm, center, clip, trim={0 0 0 .8cm}]{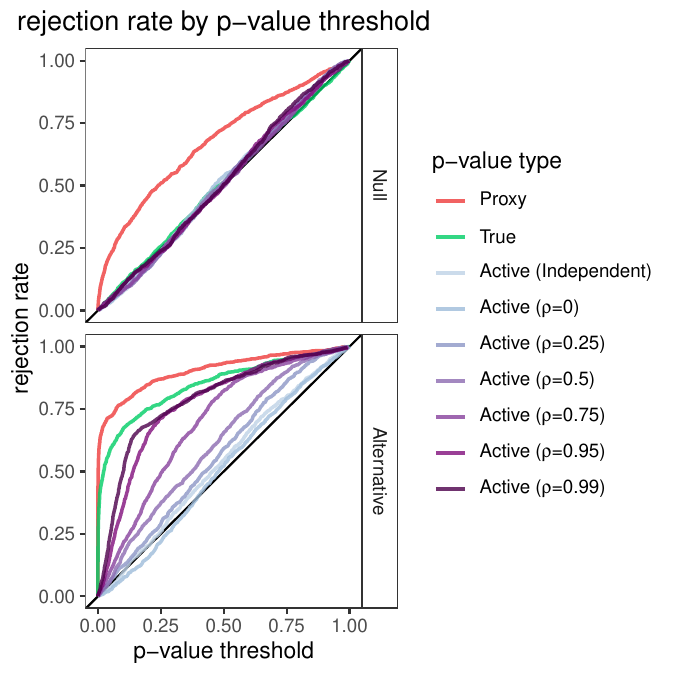}
\end{subfigure}

   \begin{subfigure}[t]{.75\textwidth}
        \includegraphics[height=3.9cm, center, clip, trim={0 0 4.1cm .8cm}]{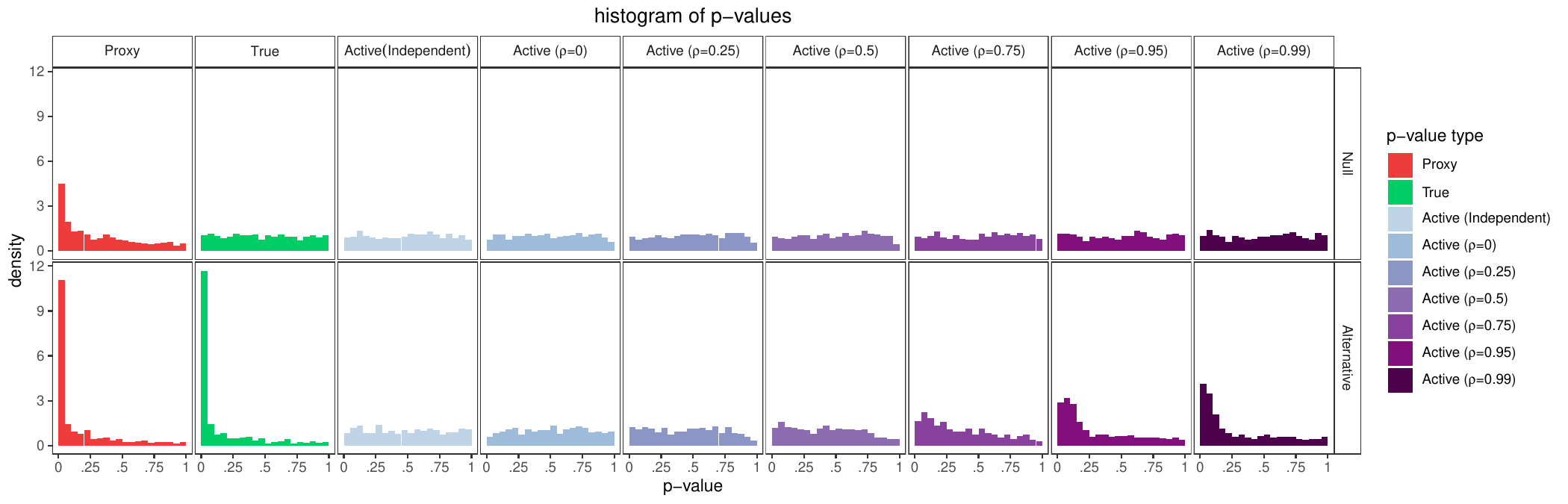}
        \caption*{\qquad\qquad Setting (B)}
   \end{subfigure}\hfill
   \begin{subfigure}[t]{.2\textwidth}
        \includegraphics[height=3.7cm, center, clip, trim={0 0 0 .8cm}]{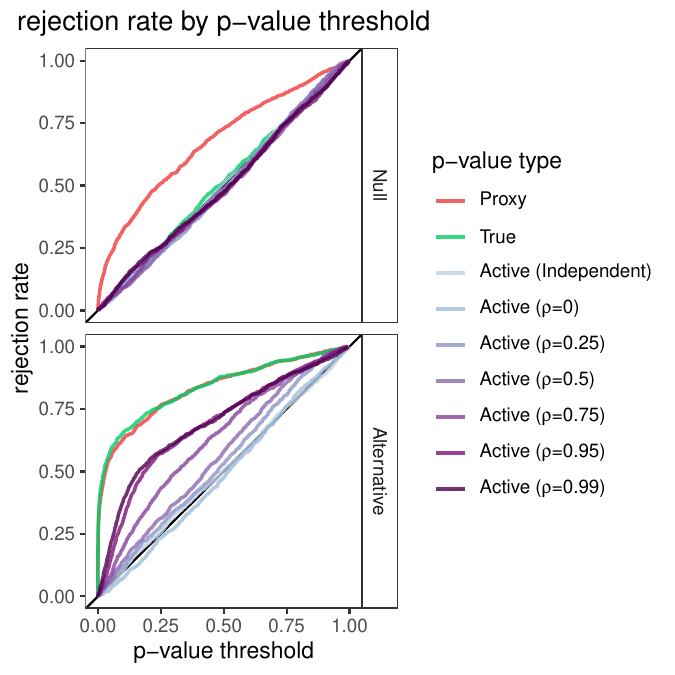}
\end{subfigure}

      \begin{subfigure}[t]{.75\textwidth}
        \includegraphics[height=3.9cm, center, clip, trim={0 0 4.1cm .8cm}]{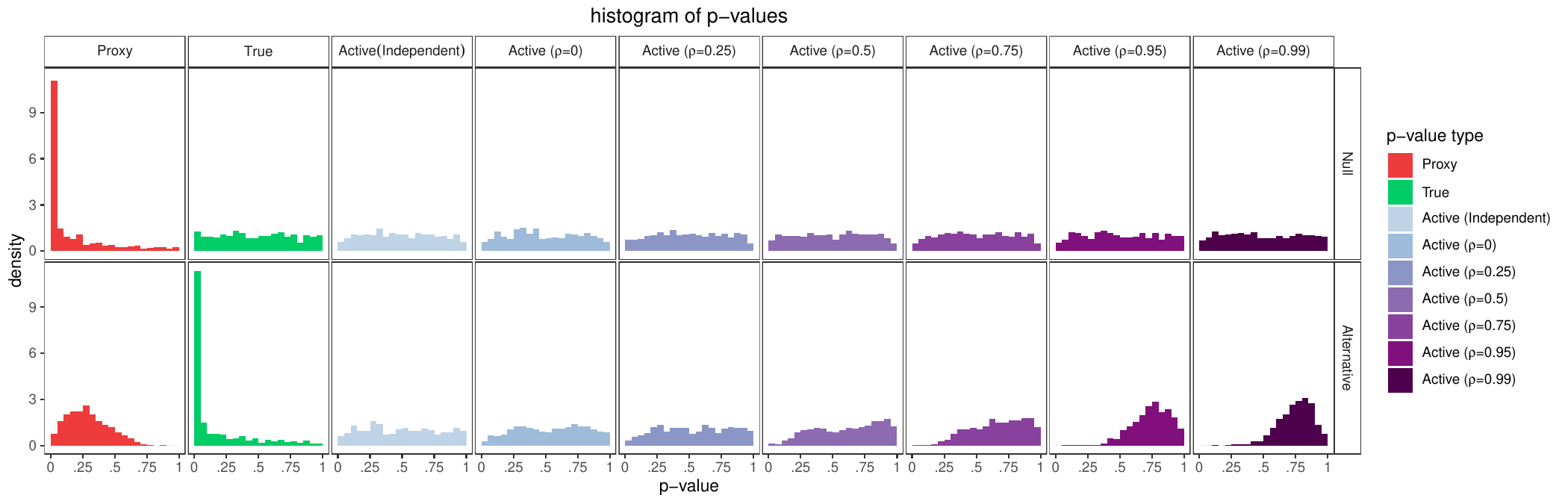}
        \caption*{\qquad\qquad Setting (C)}
   \end{subfigure}\hfill
   \begin{subfigure}[t]{.2\textwidth}
        \includegraphics[height=3.7cm, center, clip, trim={0 0 0 .8cm}]{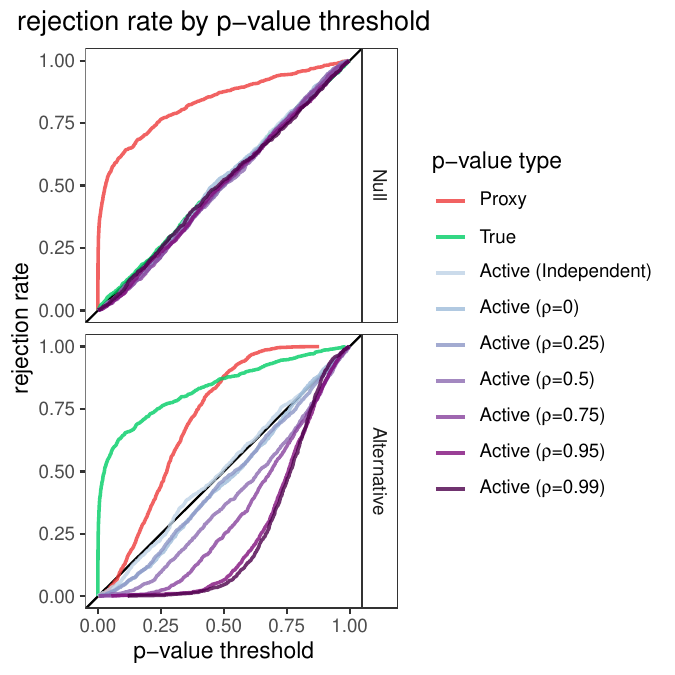}
\end{subfigure}
   
   \caption{
      Simulation results for the joint density active p-value (\ref{prop:joint-pvalue-valid}) under simulation settings described in \ref{sec:scCRISPR:synthetic}. The left plots show the distribution of different p-values under the null and alternative hypothesis setting. The p-values displayed are the proxy, true, and active joint density p-values under various correlation strengths. The right plots show a QQ-plot comparing the different p-values to the Uniform[0,1] distribution.}
   \label{fig:jointdens}
\end{figure}

\subsection{Active p-values in the scCRISPR experiment}
\label{sec:arbdep-papalexi}
In \Cref{fig:arbdep-papalexi} we plot various views of the result of applying the active p-value that is valid under arbitrary dependence \eqref{eq:arbdep-active-pvalue} to the \citet{Papalexi2021} scCRISPR experiment described in \Cref{sec:scCRISPR}. We vary the $\gamma$ parameter that controls the probability of querying the true p-value. As $\gamma$ increases, the active p-values more closely resemble the true p-values, as seen in the QQ-plot in \Cref{fig:arbdep-papalexi}(b). The histogram of the active p-values in \Cref{fig:arbdep-papalexi}(a) also shows that as $\gamma$ increases, the active p-values become more concentrated near $0$, indicating higher power. At the same time, the histogram of query probabilities in \Cref{fig:arbdep-papalexi}(c) shows that as $\gamma$ increases, the probability of querying the true p-value also increases. Thus, we can see in \Cref{fig:arbdep-papalexi}(d) that as $\gamma$ increases, the runtime of the active method approaches that of the true method, since more true p-values are queried.

\begin{figure}[h!t]
   \centering
   \begin{subfigure}[t]{.47\textwidth}
        \includegraphics[height=4.6cm, center, clip, trim={0 0 0 .8cm}]{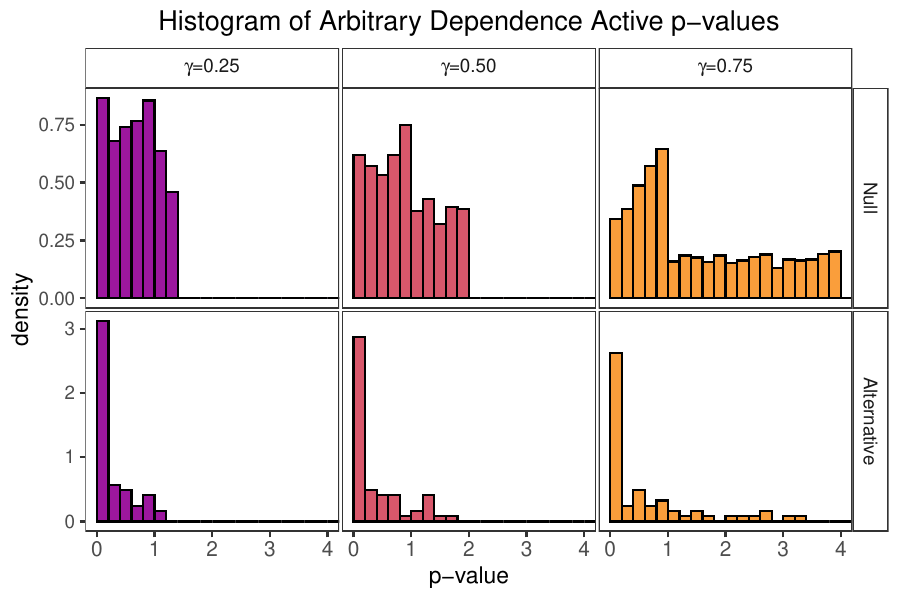}
        \caption{Histogram of arbitrary dependent p-values.}
   \end{subfigure}\qquad
   \begin{subfigure}[t]{.47\textwidth}
        \includegraphics[height=4.6cm, center, clip, trim={0 0 0 .8cm}]{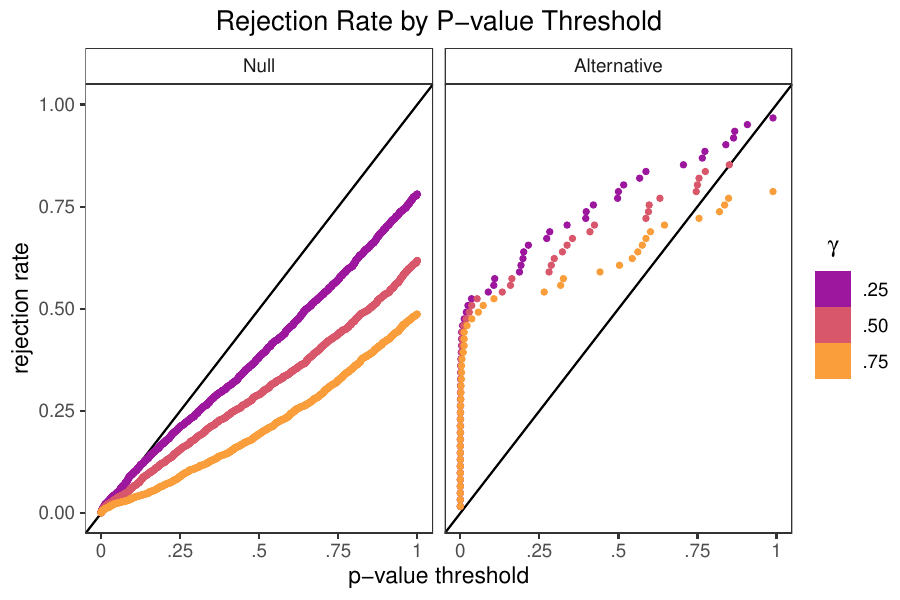}
        \caption{Rejection rate by p-value threshold (cutoff at 1).}
   \end{subfigure}

   \begin{subfigure}[t]{.47\textwidth}
        \includegraphics[height=4.6cm, center, clip, trim={0 0 0 .8cm}]{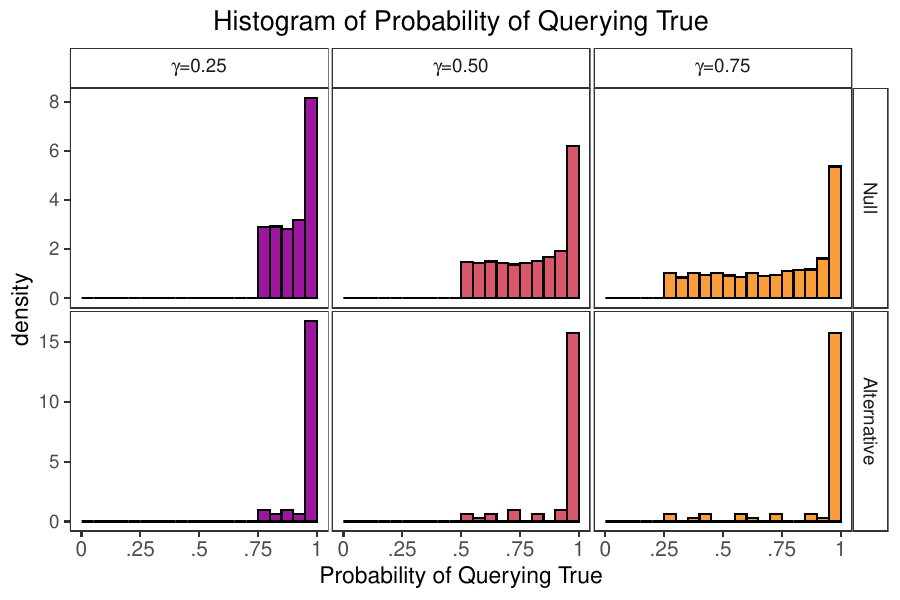}
        \caption{Histogram of the probabilities of querying the true p-value.} \end{subfigure}\qquad
   \begin{subfigure}[t]{.47\textwidth}
        \includegraphics[height=4.6cm, center, clip, trim={0 0 0 .8cm}]{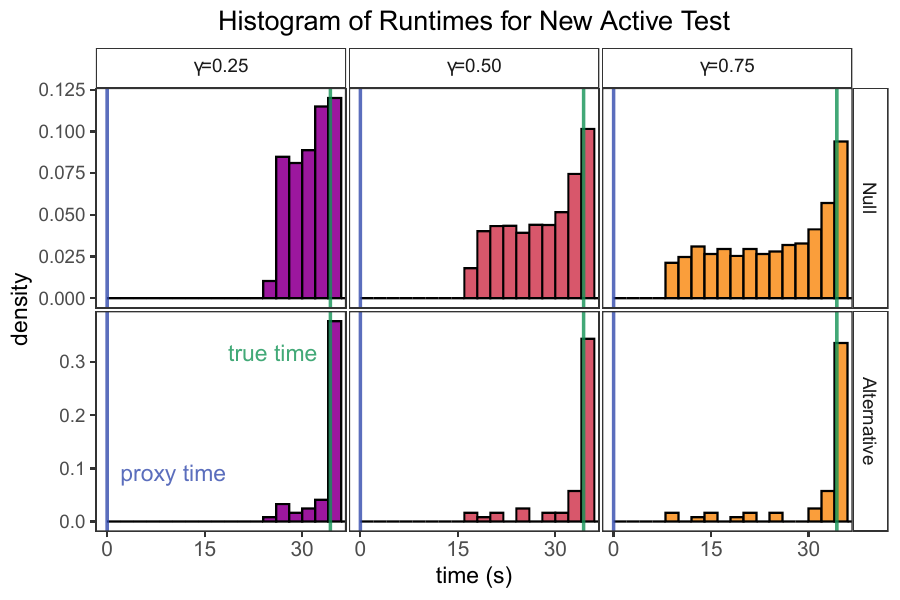}
        \caption{Histogram of runtimes.}
   \end{subfigure}
   
   \caption{
      Arbitrary dependence active p-values applied to the \citet{Papalexi2021} scCRISPR experiment across varying $\gamma$ parameters.}
   \label{fig:arbdep-papalexi}
\end{figure}
\section{Definitions of different types of dependence\label{sec:prds-def}}
We recall the definitions of two different notions of dependence. The first is the PRDS condition for FDR control of the BH procedure that was introduced by \citet{benjamini_control_false_2001}.
\begin{definition}
    A set $D\subseteq \mathbb{R}^K$ is called \emph{increasing} if for any $\boldsymbol{x} \in D$, any $\boldsymbol{y} \geq \boldsymbol{x}$ (component-wise) also belongs to $D$.
\end{definition}
\begin{definition}\label{def:PRDS}
    Furthermore, a vector of p-values $\boldsymbol{P}=(P_1,\ldots,P_K)$ exhibits \emph{positive regression dependence on a subset (PRDS)} if, for any increasing set $D$ and any $k \in [K]$, the probability $\mathbb{P}(\boldsymbol{P}\in D|P_k\leq x)$ does not decrease as $x$ increases from $0$ to $1$.
\end{definition}
A slightly weaker notion of positive dependence is \emph{positive regression dependence on the nulls} (PRDN) \citep{su_fdr_linking_2018}, which only asks for the null p-values to satisfy the PRDS condition.

\begin{definition}
    A sequence of p-values $\boldsymbol{P}=(P_1,\ldots,P_K)$ is \emph{weakly negatively dependent on nulls (WNDN)} if, for any subset $A \subseteq \Ncal$ and any $s \in [0, 1]$, $\prob{\bigcap_{i \in A} P_i \leq s} \leq \prod_{i \in A} \prob{P_i \leq s}$, i.e., the joint probability does not exceed the product of the individual probabilities of landing below $s$.
\end{definition}

There are many types of negative dependence, but WNDN is a condition that is satisfied by many different types of negative dependence assumptions --- an overview of the different types can be seen in \citet{chi_multiple_testing_2024}.

\paragraph{Iman-Conover rank correlation}We give a quick overview of the Iman-Conover \citep{iman_distribution_1982} transformation and its purpose in our simulations. We use the implementation of this method created by \cite{pouillot_evaluating_2010} in the package \texttt{mc2d}.

Given $n$ i.i.d.\ samples from each of $K$ different distributions, the goal of the Iman-Conover transformation is to induce a target rank correlation matrix among the samples across distributions while preserving the original $K$ different marginal distributions.
The user chooses a target rank correlation matrix $C\in \reals^{K\times K}$ which denotes the resulting correlation between each pair of random variables. Then, the new matrix of samples $\tilde{R}$ is produced, where each of its columns is a permutation of the columns of $R$, in accordance with \Cref{alg:iman-conover}.
\begin{algorithm}[h]\label{alg:iman-conover}
\caption{Iman-Conover transformation for inducing target rank correlation.}
\KwIn{Matrix of scores $R \in \reals^{n \times K}$, target correlation matrix $C \in [0, 1]^{K \times K}$, a vector $\mathbf{a} \coloneqq (a_1, \dots, a_n) \in \reals^n$ of scores.}
\KwOut{Matrix $\tilde{R} \in \reals^{N \times K}$ with rank correlation matrix close to $C$}

Draw $K$ independent permutations of $[n]$ denoted $\pi_1, \dots, \pi_K$.\\
Construct a matrix $R \in \reals^{n \times K}$ and set the entry in the $i$th row and $j$th column as $R_{i, j} \coloneqq a_{\pi_j(i)}$ (ind. permuted col.)\\

Compute lower triangular matrix $P$ via Cholesky decomposition (i.e., $P P^T = C$).\\

Compute $R^* \coloneqq R P^T$.\\

Rearrange the columns of the input matrix to match the ordering of the corresponding column of $R^*$ to obtain $\tilde{R}$.\\

\end{algorithm}

\begin{figure}[ht]
   \centering
   \begin{subfigure}{.9\textwidth}
   \centering
   \includegraphics[height=4.2cm, center]{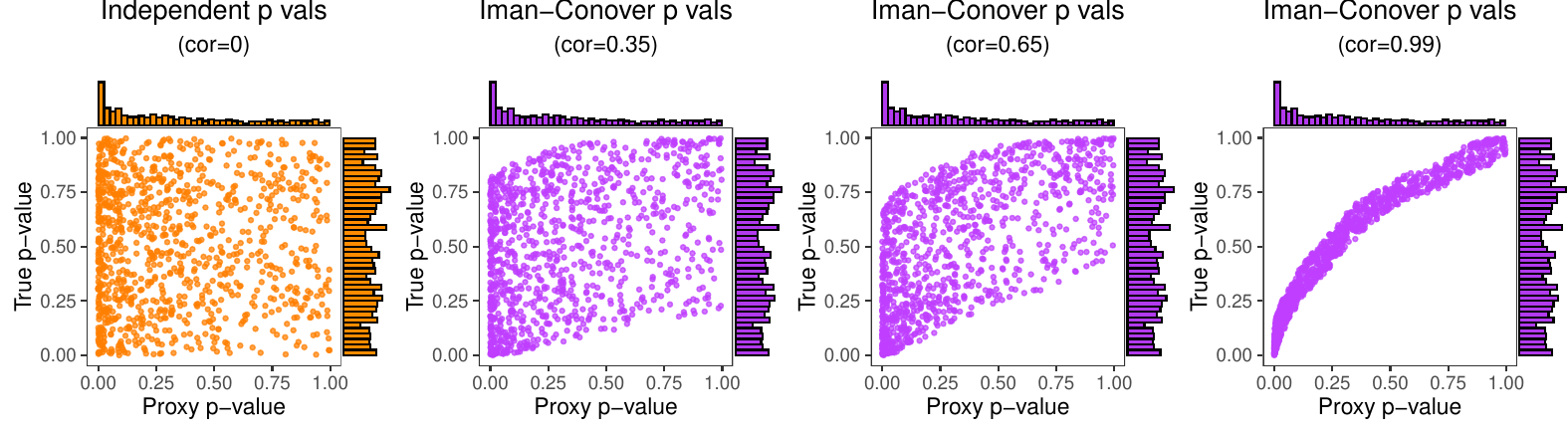}
   \end{subfigure}

   \caption{
       Example of the Iman-Conover transformation. Keeping the 2 marginal distributions the same, the rank correlation is increased from the left to right. On the very left, the 2 distributions are independent with no clear pattern. Then, as the correlation positively increases, the relationship becomes stronger.}
   \label{fig:imanconover-example}
\end{figure}

\section{Boosting the active e-value and active e-BH}\label{sec:appendix:boosting}

We can further improve the power of a hypothesis test that uses an active e-value (and consequently active e-BH) by utilizing ideas from stochastic rounding in \citet{xu_more_powerful_2024} to boost the power of the active e-value.

We note that we can define our query indicator to use the active e-value
\begin{align}
T = \ind{U \leq (1 - \gamma F^{-1})_+}.
\end{align}
We can define a modified active e-value as follows:
\begin{align}
    \tilde{E}^{\textnormal{SR}} \coloneqq  (1 - T) \cdot F +
    T \cdot \frac{(1 - \gamma)}{(1 - \gamma F^{-1})_+} \cdot E \label{eq:rounded-active-evalue}
\end{align}
\begin{proposition}
    Let $E$ be an e-value, $F$ be a nonnegative random variable, and $\hat{\alpha} \in [0, 1]$ be a data-dependent threshold that are all mutually arbitrarily dependent. Then
    \begin{align}
        T_{\hat\alpha}(\tilde{E}^\textnormal{SR}) \coloneqq \hat\alpha^{-1}\cdot \indnum\{\tilde{E}^\textnormal{SR} \geq \hat\alpha^{-1}\}
    \end{align} is an e-value. Further, $\tilde{E}^\textnormal{SR} \geq \tilde{E}$ almost surely.
\end{proposition}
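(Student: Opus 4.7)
The plan is to prove the claim in three short steps, with the crux being a pointwise inequality between $T_{\hat\alpha}(\tilde{E}^\textnormal{SR})$ and $\tilde{E}^\textnormal{SR}$ that removes any need to control how $\hat\alpha$ depends on $(E,F,U)$. Specifically, for any nonnegative random variable $Y$ and any $\hat\alpha \in [0,1]$ one has the deterministic bound $\hat\alpha^{-1} \cdot \mathbf{1}\{Y \geq \hat\alpha^{-1}\} \leq Y$, checked by splitting on whether $Y \geq \hat\alpha^{-1}$ (outside that event both sides are zero; inside it the inequality is immediate). Applied pointwise with $Y = \tilde{E}^\textnormal{SR}$ and integrated, this reduces the e-value property of $T_{\hat\alpha}(\tilde{E}^\textnormal{SR})$ to the e-value property of $\tilde{E}^\textnormal{SR}$ itself, with no restriction on the joint distribution of $\hat\alpha$ and the remaining variables.

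Next I would verify that $\tilde{E}^\textnormal{SR}$ is a bona fide e-value. Writing $p \coloneqq (1-\gamma F^{-1})_+$ so that $T \mid F \sim \mathrm{Bern}(p)$ with $U$ independent of $(E,F)$, I would compute $\expect[\tilde{E}^\textnormal{SR} \mid E, F] = (1-p) F + (1-\gamma) E$, reading the second summand as $0$ on $\{p = 0\}$ where $T \equiv 0$ almost surely. A case split on $F \leq \gamma$ versus $F > \gamma$ gives $(1-p) F \leq \gamma$ pointwise: in the first case $p=0$ and $F \leq \gamma$, and in the second $(1-p)F = (\gamma/F) \cdot F = \gamma$. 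Taking an outer expectation and using $\expect[E] \leq 1$ then gives $\expect[\tilde{E}^\textnormal{SR}] \leq \gamma + (1-\gamma) \expect[E] \leq 1$, which combined with the pointwise bound of the previous paragraph yields $\expect[T_{\hat\alpha}(\tilde{E}^\textnormal{SR})] \leq 1$ under arbitrary dependence of $\hat\alpha$ with $(E,F,U)$.

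Finally, for the domination $\tilde{E}^\textnormal{SR} \geq \tilde{E}$, I would split on $T$: on $\{T=0\}$ both quantities equal $F$; on $\{T=1\}$ we have $p \in (0,1]$, so $(1-\gamma)/p \geq 1-\gamma$ and hence $\tilde{E}^\textnormal{SR} = (1-\gamma) E / p \geq (1-\gamma) E = \tilde{E}$. I do not foresee any real obstacle; the only mild nuisance is making explicit the convention that $T \cdot (1-\gamma) E / p$ is interpreted as $0$ on $\{p = 0\}$, which is harmless because $T$ vanishes there.
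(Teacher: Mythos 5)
Your proof is correct, but it takes a genuinely different route from the paper's. The paper bounds $\expect[T_{\hat\alpha}(\tilde{E}^\textnormal{SR})]$ directly: it splits into the $T=0$ and $T=1$ contributions, bounds the first by $\gamma$ via $\hat\alpha^{-1}\indnum\{F \geq \hat\alpha^{-1}\} \leq F$, and handles the second with an indicator-containment argument that exploits the explicit coupling $T = \ind{U \leq (1-\gamma F^{-1})_+}$, namely $\ind{U \leq (1-\gamma F^{-1})_+}\cdot\ind{(1-\gamma F^{-1})_+ \leq \hat\alpha(1-\gamma)E} \leq \ind{U \leq \hat\alpha(1-\gamma)E}$, followed by integrating out the independent uniform $U$. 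You instead prove the strictly stronger statement that $\tilde{E}^\textnormal{SR}$ is itself an e-value --- the inverse-probability weight $(1-\gamma)/(1-\gamma F^{-1})_+$ cancels against $\expect[T \mid E, F] = (1-\gamma F^{-1})_+$ under the tower property, and $(1 - (1-\gamma F^{-1})_+)F = \gamma \wedge F \leq \gamma$ --- and then deduce the truncation claim from the elementary pointwise bound $\hat\alpha^{-1}\indnum\{Y \geq \hat\alpha^{-1}\} \leq Y$, which is immune to any dependence of $\hat\alpha$ on the other variables. Your argument is shorter and delivers more (marginal validity of $\tilde{E}^\textnormal{SR}$, not just of its truncation); the paper's style of argument is the one that remains necessary in stochastic-rounding settings where the un-truncated boosted quantity fails to be an e-value and only the truncated functional is controlled, but here the $(1-\gamma)/(1-\gamma F^{-1})_+$ weighting makes that machinery avoidable. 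Do state explicitly, as you do in passing, that $T$ is conditionally independent of $E$ given $F$ (equivalently, that $U$ is external randomness independent of $(E,F)$), since the tower-property cancellation relies on $\expect[T \mid E, F] = (1-\gamma F^{-1})_+$; and keep the convention that the $T$-term is read as $0$ on $\{(1-\gamma F^{-1})_+ = 0\}$, which you already handle.
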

\begin{proof}
    We can perform the following decomposition:
    \begin{align}
        \expect\left[
        T_{\hat\alpha}(\tilde{E}^\textnormal{SR})
        \right]
       =
        \expect\left[
            (1 - T) \cdot \hat\alpha^{-1}\indnum\{F \geq \hat\alpha^{-1}\}
        \right]
       +
        \expect\left[
        T \cdot \hat\alpha^{-1} \cdot \ind{\frac{1 - \gamma}{(1 - \gamma F^{-1})_+} E \geq \hat\alpha^{-1}}
        \right]
    \end{align}
    We know that the first summand is bounded by $\gamma$ by definition of $T$ and $\hat\alpha^{-1}\indnum\{F \geq \hat\alpha^{-1}\} \leq F$.

    We make the following derivation when the null hypothesis is true:
    \begin{align}
        &
        \expect\left[T \cdot \hat\alpha^{-1} \cdot \ind{\frac{1 - \gamma}{(1 - \gamma F^{-1})_+} E \geq \hat\alpha^{-1}}\right]
        \\
        &=
        \expect\left[\ind{U \leq (1 - \gamma F^{-1})_+} \cdot \hat\alpha^{-1} \cdot \ind{\frac{1 - \gamma}{(1 - \gamma F^{-1})_+} E \geq \hat\alpha^{-1}}\right]
        \\
        &=
        \expect\left[\hat\alpha^{-1} \cdot \ind{U \leq (1 - \gamma F^{-1})_+}  \cdot \ind{(1 - \gamma F^{-1})_+\leq \hat\alpha (1 - \gamma) \cdot E } \right]
        \\
        &\leq
        \expect\left[\hat\alpha^{-1} \cdot \ind{U \leq \hat\alpha (1 - \gamma) \cdot E } \right]
        \\
        &=
        \expect\left[\hat\alpha^{-1} \cdot \prob{U \leq \hat\alpha (1 - \gamma) \cdot E \mid \hat\alpha, E} \right]
        =
        (1 - \gamma) \expect[E]
        \leq 1 - \gamma.
    \end{align}
    The first inequality is by relaxing the indicator to be for an event that is the superset of the events of the two indicators in the previous line. The last equality follows from $U$ being independent and uniform, and the final inequality follows from $E$ being an e-value. Thus, we see that our e-value of interest is bounded by $\gamma + (1 - \gamma) = 1$,

\end{proof}

\section{Proofs}\label{sec:appendix:proofs}

\subsection{Proof of \Cref{prop: active arb dep pvalue valid}}\label{sec: active arb dep p-value valid proof}

Define $\tilde{P}^* \coloneqq ((1 - \gamma)^{-1} \cdot P) \wedge (U \gamma^{-1})$ where $U \sim \text{Uniform}[0, 1]$ is independent of $P$ and $Q$. Let $T = \ind{U \gamma^{-1} > Q}$, and note that this construction of $T$ satisfies $T \mid Q \sim \text{Bern}(1 - \gamma Q)$. Utilizing this coupling of $\tilde{P}^*$ and $\tilde{P}$ (via $U, T$), we derive the following result when the null hypothesis is true:
    \begin{align}
        \tilde{P} &= \ind{U \gamma^{-1} \leq Q } Q + \ind{U \gamma^{-1}> Q }(1 - \gamma)^{-1} \cdot P\\
                  &\geq \ind{U \gamma^{-1} \leq Q}U\gamma^{-1} + \ind{U \gamma^{-1}> Q }(1 - \gamma)^{-1} \cdot P\\
                  &\geq U\gamma^{-1} \wedge ((1 - \gamma)^{-1} \cdot P) = \tilde{P}^*.
    \end{align}
    The first inequality is by using the fact that $\ind{U \gamma^{-1} \leq Q}U\gamma^{-1} \leq \ind{U \gamma^{-1} \leq Q}Q$. The second inequality is by taking the minimum of the two possible values that the previous line could have taken.
    Thus, $\tilde{P}^* \leq \tilde{P}$. We will now show that $\tilde{P}^*$ is superuniform. For an arbitrary $s \in [0, 1]$, the following holds:
    \begin{align}
        \sprob(\tilde{P}^* \leq s) &= \sprob(U\gamma^{-1} \leq s \text{ or }((1 - \gamma)^{-1} \cdot P) \leq s)\\
                                   &\leq \sprob(U\leq \gamma s) + \sprob(P \leq (1 - \gamma) s) \leq s.
    \end{align}
    The first inequality is by union bound, and the second inequality is because $U \sim \text{Uniform}[0, 1]$ and $P$ is superuniform under the null. Thus, we have shown $\tilde{P}^*$ is a valid p-value, which implies $\tilde{P}$ is a valid p-value as well, concluding our proof.

\subsection{Proof of \Cref{thm:active-bh}}\label{sec:active-bh-proof}

The proof of the PRDN and WNDN case is the result of the following argument. Recall $\tilde{P}^*$ we defined previously --- we can now define
\begin{align}
    \tilde{P}^*_i \coloneqq ((1 - \gamma)^{-1} \cdot P_i) \wedge (U_i \gamma^{-1}) \leq \tilde{P}_i
\end{align} for each $i \in [K]$, where $U_i$ is a uniform random variable on $[0, 1]$ that is independent of all other randomness. Recall that $\tilde{P}_i^*$ is a valid p-value for each $i \in [K]$. 

Let $\tilde{\mathbf{P}}^* \coloneqq (\tilde{P}_1^*, \ldots, \tilde{P}_K^*)$ and  $\mathbf{U} = (U_1, \ldots, U_K)$. We now consider the conditional distribution of $\tilde{\mathbf{P}}^*$ given $\mathbf{U}$, and consider the two cases: when $\mathbf{P}$ is PRDN and when $\mathbf{P}$ is WND.

First we consider the case when $\mathbf{P}$ is PRDN.
Given $\mathbf{U} = \mathbf{u} \coloneqq (u_1, \dots, u_K)$, we have $\tilde{P}_j^* = \phi_j^\mathbf{u}(P_j)$, where $\phi_j^\mathbf{u}(p) \coloneqq ((1 - \gamma)^{-1} \cdot p) \wedge (u_j \gamma^{-1})$ for all $j \in [K]$, making $\tilde{\mathbf{P}}^*$ a comonotone transformation of $\mathbf{P}$.
Thus, $\tilde{\mathbf{P}}^* \mid \mathbf{U} = \mathbf{u}$ is PRDN, for all values of $\mathbf{u} \in [0, 1]^K$ \citep{benjamini_control_false_2001}. This then implies that $\tilde{\mathbf{P}}^*$ is PRDN since the following is true via the tower property of conditional expectation:
\begin{align}
    \prob{\tilde{\mathbf{P}}^* \in D \mid \tilde{P}_i^* \leq s} =\expect[ \prob{\tilde{\mathbf{P}}^* \in D \mid \tilde{P}_i^* \leq s, \mathbf{U}}\mid \tilde{P}_i^* \leq s].
\end{align}

Since $\mathbf{P}$ satisfies PRDN, $D_{\mathbf{u}}$ is increasing, and $(1-\gamma)s \wedge u_i\gamma^{-1}(1-\gamma)$ is non-decreasing in $s$, the right-hand side is non-decreasing in $s$ for each fixed $\mathbf{u}$. Therefore, $\prob{\tilde{\mathbf{P}}^* \in D \mid \tilde{P}_i^* \leq s}$ is non-decreasing in $s$, establishing that $(\tilde{P}_i^*)_{i \in [K]}$ satisfies PRDN.

Now, we consider the case when $\mathbf{P}$ is WNDN. We can make the following derivations for any subset $A \subseteq \Ncal$ of null hypotheses and any $s \in [0, 1]$:
\begin{align}
    \sprob\left(\bigcap_{i \in A}\ \tilde{P}_i^* \leq s\right) 
    &= 
    \sum_{A' \subseteq A} \sprob\left(\bigcap_{i \in A'} P_i \leq (1-\gamma)s, U_i > \gamma s, \bigcap_{i \in A \setminus A'} U_i \leq \gamma s\right) \\
    &= 
    \sum_{A' \subseteq A} \sprob\left(\bigcap_{i \in A'} P_i \leq (1-\gamma)s\right) \cdot \sprob\left( \bigcap_{i \in A'} U_i > \gamma s, \bigcap_{i \in A \setminus A'} U_i \leq \gamma s\right) \\
    &\leq 
    \sum_{A' \subseteq A} \prod_{i \in A'}\sprob(P_i \leq (1-\gamma)s) \cdot \sprob(U_i > \gamma s) \cdot \prod_{i \in A\setminus A'}\sprob(U_i \leq \gamma s)\\
    &= \prod_{i \in A} (\sprob(P_i \leq (1-\gamma)s, U_i > \gamma s) + \sprob(U_i \leq \gamma s))\\
    &= \prod_{i \in A} \sprob(\tilde{P}_i^* \leq s).
\end{align}
Thus, we have shown our desired result that $(\tilde{P}_i^*)_{i \in [K]}$ satisfies WNDN.

Now, we also know that active BH produces a discovery set that is self-consistent w.r.t. $(\tilde{P}_i^*)_{i \in [K]}$, since the active BH procedure produces a discovery set that is self-consistent with $(\tilde{P}_i)_{i \in [K]}$ by definition. 

Thus, we can apply Theorem 3 of \citet{su_fdr_linking_2018} for the PRDN case and Proposition 3.6 of \citet{fischer_online_generalization_2024} for the WNDN case to achieve our desired FDR bound.
Finally, the
FDR control for the arbitrary dependent cases for p-values comes from Theorem 1.3 of \citet{benjamini_control_false_2001}.

\subsection{Proof of \Cref{prop:joint-pvalue-valid}}
\label{sec:joint-pvalue-proof}

Assume the null hypothesis is true. We make the following derivation for the conditional distribution of $\tilde{P}$:
\begin{align}
    \prob{\Tilde{P} \leq s | Q=q}
        &=\prob{ F^{-1}(U|Q=q) \leq s | Q=q} \tag{$\Tilde{P}$ definition}\\
        &= \int_0^1 \prob{ F^{-1}(u|q) \leq s | Q=q} \cdot f_U(u) du \tag{iterated exp.}\\
        &= \int_0^1 \prob{ F^{-1}(u|q) \leq s | Q=q} \cdot 1 du \tag{density of $U$}\\
        &= \int_0^1 \mathbb{I}\{ F^{-1}(u|q) \leq s \} \diff u \tag{constants inside $\mathbb{P}$}\\
        &= \int_0^1 \mathbb{I}\{ u \leq  F(s|q)  \} \diff u \tag{apply $F(\cdot|q)$}\\
        &= F(s|q). \label{eq:joint-proxy-true-cond}
\end{align}
We can now prove the validity of $\tilde{P}$.
\begin{align}
    \prob{\Tilde{P} \leq s} &= \int_0^1 \prob{\Tilde{P} \leq s | Q=q}  f_Q(q) \diff q \tag{iterated exp.}\\
        &= \int_0^1 F(s|q) f_Q(q) \diff q \tag{by \ref{eq:joint-proxy-true-cond}}\\
        &= \int_0^1 \prob{P \leq s | Q=q} f_Q(q) \diff q \tag{$F$ definition}\\
&= \prob{P \leq s} \tag{iterated exp.}\\
        &\leq s. \tag{$P$ is a valid p-value}
\end{align} Thus, we have shown our desired result.
 \end{document}